\newtheorem{theorem}{Theorem}[section]
\newtheorem{lemma}{Lemma}[section]
\newtheorem{assum}{Assumption}[section]
\newtheorem{problem}{Problem}
\begin{document}

\title{Pricing multi-asset derivatives by finite difference method on a quantum computer}

\renewcommand{\thefootnote}{\fnsymbol{footnote}}

\author{Koichi Miyamoto$^{1}$\footnote{koichi.miyamoto@qiqb.osaka-u.ac.jp} \ and Kenji Kubo$^{2,3}$\footnote{kenjikun@mercari.com}\\

	\footnotesize $^{1}$ Center for Quantum Information and Quantum Biology, Osaka University,\\ 1-3 Machikaneyama, Toyonaka, Osaka, 560-8531, Japan \\	
	\footnotesize $^{2}$ R4D, Mercari Inc., Roppongi Hills Mori Tower 18F,\\ 6-10-1, Roppongi, Minato-ku, Tokyo 106-6118, Japan \\
	\footnotesize $^{3}$ Graduate School of Engineering Science, Osaka University,\\ 1-3, Machikaneyama, Toyonaka, Osaka, 560-8531, Japan \\
}

\date{\today}
\maketitle

\renewcommand{\thefootnote}{\arabic{footnote}}

\begin{abstract}
Following the recent great advance of quantum computing technology, there are growing interests in its applications to industries, including finance.
In this paper, we focus on derivative pricing based on solving the Black-Scholes partial differential equation by finite difference method (FDM), which is a suitable approach for some types of derivatives but suffers from the {\it curse of dimensionality}, that is, exponential growth of complexity in the case of multiple underlying assets.
We propose a quantum algorithm for FDM-based pricing of multi-asset derivative with exponential speedup with respect to dimensionality compared with classical algorithms.
The proposed algorithm utilizes the quantum algorithm for solving differential equations, which is based on quantum linear system algorithms.
Addressing the specific issue in derivative pricing, that is, extracting the derivative price for the present underlying asset prices from the output state of the quantum algorithm, we present the whole of the calculation process and estimate its complexity.
We believe that the proposed method opens the new possibility of accurate and high-speed derivative pricing by quantum computers.
\end{abstract}

\section{Introduction \label{sec:Intro}}
Recently, people are witnessing the great advance of {\it quantum computing}\footnote{For readers who are unfamiliar to quantum computing, we refer to \cite{NC} as a standard textbook.}, which can speedup some computational tasks compared with exiting {\it classical} computers, and taking a strong interest in its industrial applications.
Finance is one of promising fields.
Since large financial institutions perform enormous computational tasks in their daily business\footnote{For readers who are unfamiliar to financial engineering or, more specifically, derivative pricing, we refer to \cite{Hull,Shreve}.}, it is naturally expected that quantum computers will tremendously speedup them and make a large impact on the industry.
In fact, some recent papers have already discussed applications of {\it quantum algorithms} to concrete problems in financial engineering: for example, derivative pricing\cite{Rebentrost,Martin,Stamatopoulos,Ramos-Calderer,Fontanela,Vazquez,Kaneko,Tang,Chakrabarti,An,Gonzalez-Conde,Radha,Alghassi}, risk measurement\cite{Worner,Egger,Miyamoto}, portfolio optimization\cite{Rebentrost2,Kerenidis,Hodson}, and so on.
See \cite{Orus,Egger2,Bouland} as comprehensive reviews.

Among a variety of applications, we here consider the quantum method for derivative pricing.
Especially, we focus on the approach based on solving the partial differential equation (PDE) by finite difference method (FDM).
Let us describe the outline of the problem.
First of all, we explain what {\it financial derivatives}, or simply {\it derivatives} are.
They are products whose values are determined by prices of other simple assets ({\it underlying assets}) such as stocks, bonds, foreign currencies, commodities, and so on.
They can be characterized by payoffs paid and/or received between parties involved in a derivative contract, whose amounts are determined by underlying asset prices.
One of the simplest examples of derivatives is an European call (resp. put) option, the right to buy (resp. sell) some asset at the predetermined price ({\it strike}) $K$ and time ({\it maturity}) $T$.
This is equivalent to the contract that the option buyer receives the payoff $f_{\rm pay}(S_T)=\max\{S_T-K,0\}$ (call) or $\max\{K-S_T,0\}$ (put) at $T$, where $S_t$ is the underlying asset price at time $t$.
Besides this, there are many types of derivatives, some of which contain complicated contract terms and are called {\it exotic} derivatives.

Since large banks hold a large number of exotic derivatives, pricing them is crucial for their business.
We can evaluate a derivative price by modeling the random time evolution of underlying asset prices by some stochastic processes and calculating the expected values of the payoff\footnote{Strictly speaking, the payoff must be divided by some numeraire.} under some probability measure.
Since analytical formulas for the derivative price are available only in limited settings, we often resort to numerical methods.
One of major approaches is Monte Carlo simulation.
That is, we generate many paths of underlying asset price evolution on some discretized time grid, and then take the average of payoffs over the paths.
We can also take another approach: since the expected value obeys some PDE, which is called the Black-Scholes (BS) PDE, we can obtain the derivative price by solving it\footnote{For comprehensive reviews of the PDE approach for derivative pricing, we refer to \cite{Tavella,Duffy} as textbooks.}.
More concretely, starting from the maturity $T$, at which the derivative price is trivially determined as the payoff itself, we solve the PDE backward to the present, and then find the present value of the derivative.

We should choose the appropriate approach according to the nature of the problem.
For example, PDE approach is suitable for derivatives whose price is subject to some continuous {\it boundary conditions}.
One prominent example is the {\it barrier option}.
In a barrier option contract, one or multiple levels of underlying asset prices, which are called barriers, are set.
Then, they determine whether the payoff is paid at the maturity or not. 
For example, in a {\it knock-out} barrier option, the payoff is not paid if either of barriers is reached once or more by $T$, regardless of $S_T$\footnote{There are also {\it knock-in} barrier options, where the payoff is paid only if either of barriers is reached at least once by $T$. We can price an knock-in barrier option by subtracting the price of the corresponding knock-out barrier option from that of the corresponding European (that is, no-barrier) option, since a combination of a knock-in barrier option and a knock-out barrier option is equivalent to an European option.}.
This means that the price of the knock-out barrier option is 0 at barriers.
Such a boundary condition is difficult to be strictly taken into account in the Monte Carlo approach because of discretized time evolution, but it can be dealt with in the PDE approach.

Although the PDE approach is suitable in these cases, it is difficult to apply it to {\it multi-asset derivatives}, that is, the case where the number of underlying assets $d$ is larger than 1.
This is because of the exponential growth of complexity with respect to $d$, which is known as the {\it curse of dimensionality}.
We can see this as follows.
The BS PDE is $(d+1)$-dimensional, where $d$ and $1$ correspond to asset prices and time, respectively.
In FDM, which is often adopted for solving a PDE numerically, the discretization grid points are set in the asset price directions, and partial derivatives are replaced with matrices which correspond to finite difference approximation.
This converts a PDE into a linear ordinary differential equation (ODE) system, in which the dependent variables are the derivative prices on grid points and the independent variable is time.
Then, we solve the resulting ODE system. 
The point is that this calculation contains manipulations of the matrices with exponentially large size, that is, $n_{\rm gr}^d \times n_{\rm gr}^d$, where $n_{\rm gr}$ is the number of the grid points in one direction.
Since we have to take $n_{\rm gr}$ proportional to $\epsilon^{-1/2}$ in order to accomplish the error level $\epsilon$, as shown later, the time complexity of this approach grows as $O({\rm poly}(\epsilon^{-d/2}))=O((1/\epsilon)^{{\rm poly}(d)})$.
Besides, the space complexity also grows exponentially, since we have to store the derivative prices on grid points in calculation.
This makes the PDE approach, at least in combination with FDM, intractable on classical computers.

Fortunately, quantum computers might change the situation.
This is because there are some quantum algorithms for solving linear ODE systems, whose time complexities depend on dimensionality only logarithmically\cite{Berry,Berry2,Xin,Childs}.
This means that, in combination with these algorithms, we can remove the exponential dependency of time complexity of FDM on dimensionality.
In fact, some quantum algorithms for solving PDE, including not only FDM-based ones but also different approaches, have already been proposed, and quantum speedup is obtained in some cases\cite{Cao,Clader,Montanaro,Fillion-Gourdeau,Costa,Wang,Childs2,Linden}.
Note also that the space complexity can be also reduced exponentially, since, using a $n$-qubits system, we can encode a vector with exponentially large size with respect to $n$ into the amplitudes of the quantum state.

In light of the above, this paper aims to speedup FDM-based pricing of multi-asset derivatives, utilizing the quantum algorithm.
Although one might think that this is just a straightforward application of an existing algorithm to some problem, there is a nontrivial issue specific for derivative pricing.
The issue is how to extract the present value of the derivative from the output of the quantum algorithm.
By solving the BS PDE up to the present ($t=0$) using the quantum algorithm, we obtain the vector $\vec{V}(0)$, which consists of the derivative prices on the grid points in the space of the underlying asset prices.
However, it is given not as classical data but as a quantum state $\ket{\vec{V}(0)}$, in which the elements of $\vec{V}(0)$ are encoded as amplitudes of computational basis states.
On the other hand, typically, we are interested in only one element of $\vec{V}(0)$, that is, $V_0$, the derivative price for the present underlying asset prices.
This means that we have to obtain the amplitude of the specific computational basis state in $\ket{\vec{V}(0)}$.
Since the amplitude is exponentially small if the number of the grid point is exponentially large, reading it out requires exponentially large time complexity, which ruins the quantum speedup.

We circumvent this issue by solving PDE up to not the present but some future time $t_{\rm ter}$.
The key observation is that $V_0$ can be expressed as the expected value of its price\footnote{Again, strictly speaking, the price must be divided by some numeraire.} at an arbitrary future time.
Concretely, we may take the following way.
First, we generate two states: $\ket{\vec{V}(t_{\rm ter})}$, in which the derivative prices at $t_{\rm ter}$ are encoded, and $\ket{\vec{p}(t_{\rm ter})}$, in which the probability distribution of underlying asset prices at $t_{\rm ter}$ are encoded.
Then, we estimate the inner product $\braket{\vec{p}(t_{\rm ter})|\vec{V}(t_{\rm ter})}$, which is an approximation of $V_0$.
Note that the amplitude of each basis state in $\ket{\vec{V}(t_{\rm ter})}$ contains information to determine $V_0$ differently from $\ket{\vec{V}(0)}$, in which the amplitude of one specific basis state is the sole necessary information.
This leads to much smaller time complexity in the above way than reading $V_0$ out from $\ket{\vec{V}(0)}$.

In the following sections, we describe the entire process of the above calculation: setting $t_{\rm ter}$, generating $\ket{\vec{V}(t_{\rm ter})}$ by the quantum algorithm, generating $\ket{\vec{p}(t_{\rm ter})}$, and estimating $V_0$.
Besides, we estimate the complexity of the proposed method.
We see that, in the expression of the complexity, there are not any factors like $(1/\epsilon)^{{\rm poly}(d)}$ but only some logarithmic factors to the power of $d$, which means substantial speedup compared with classical FDM.

The rest of this paper is organized as follows.
Sections \ref{sec:FDM} and \ref{sec:QAlgODE} are preliminary ones, which outline derivative pricing based on solving a PDE by FDM and the quantum algorithm for solving ODE systems, respectively.
In Section \ref{sec:app}, we discuss approximating $V_0$ as the expected value of the price at $t_{\rm ter}$.
Here, we also discuss how to set $t_{\rm ter}$, taking into account the probability that underlying assets reach the barrier.
Section \ref{sec:QFDM} presents the main result, that is, the quantum calculation procedure for $V_0$ and its complexity.
Section \ref{sec:Sum} summarizes this paper.
All proofs are shown in the appendix.

\subsection{Notations}

Here, we explain the notations used in this paper.

$\mathbb{R}_+$ means the set of all positive real numbers: $\mathbb{R}_+:=\{x\in \mathbb{R}|x>0\}$.
For a positive integer $d$, $\mathbb{R}_+^d$ is its $d$-times direct products: $\mathbb{R}_+^d:=\underbrace{\mathbb{R}_+ \times \cdots \times \mathbb{R}_+}_d$.
For a positive integer $n$, $[n]:=\{1,...,n\}$.

For a positive integer $n$, $I_n$ denotes the $n\times n$ identity matrix.
$\|\cdot\|$ means the Euclidian norm for a vector and the spectral norm for a matrix.
We call each of them a ``norm" simply.
For a $n\times n$ matrix $A$, $\mu(A)$ means the logarithmic norm associated with $\|\cdot\|$: $\mu(A):=\lim_{h\rightarrow 0^+}(\|I_n+hA\|-1)/h$.
When a matrix $A$ has at most $s$ nonzero entries in any row and column, we say that the sparsity of $A$ is $s$.

In this paper, we consider quantum states of systems consisting of some quantum registers with some qubits.
For a real number $x$, $\ket{x}$ denotes one of the computational basis states on some register, whose bit string corresponds to the binary representation of $x$.
For $i\in\{0,1\}$, we let $\ket{i}$ and $\ket{\bar{i}}$ denote a state on a multi-qubit register and a state on one qubit, respectively, in order to distinguish them.
For $\vec{x}:=(x_1,...,x_d)^T\in \mathbb{R}^d$, $\ket{\vec{x}}$ denotes the (unnormalized) state in which the elements of $\vec{x}$ are encoded in the amplitudes of computational basis states, that is, $\ket{\vec{x}}:=\sum_{i=1}^dx_i\ket{i}$.
For a (unnormalized) state $\ket{\psi}$, its norm is defined as $\|\ket{\psi}\|:=\sqrt{\braket{\psi|\psi}}$.
If a state $\ket{\psi}$ satisfies $\|\ket{\psi}-\ket{\psi^\prime}\|<\epsilon$, where $\epsilon$ is a positive real number and $\ket{\psi^\prime}$ is another state, we say that $\ket{\psi}$ is $\epsilon$-close to $\ket{\psi^\prime}$.

\section{Derivative pricing based on solving the PDE by FDM \label{sec:FDM}}

\subsection{Derivative pricing problem and the Black-Scholes PDE}

In this paper, we consider the following problem.

\begin{problem}
	Let $d$ be a positive integer and $T,U_1,...,U_d,L_1,...,L_d$ be positive real numbers such that $L_i<U_i$ for $i\in[d]$.
	Define $D:=(L_1,U_1)\times\cdots\times(L_d,U_d),\bar{D}:=[L_1,U_1]\times\cdots\times[L_d,U_d]$ and $\hat{D}^{i}:=[L_1,U_1]\times\cdots\times[L_{i-1},U_{i-1}]\times[L_{i+1},U_{i+1}]\times\cdots\times[L_d,U_d]$ for $i\in[d]$.
	Assume that a function $V:[0,T]\times \bar{D}\rightarrow \mathbb{R}$ satisfies the following PDE
	\begin{equation}
		\frac{\partial}{\partial t}V(t,\vec{S}) + \frac{1}{2}\sum_{i,j=1}^d \sigma_i\sigma_j\rho_{ij}S_iS_j \frac{\partial^2}{\partial S_i\partial S_j}V(t,\vec{S}) + r\left(\sum_{i=1}^d S_i\frac{\partial}{\partial S_i}V(t,\vec{S})-V(t,\vec{S})\right)=0, \label{eq:PDE}
	\end{equation}
	on $[0,T)\times D$ and boundary conditions
	\begin{eqnarray}
		&& V(T,\vec{S}) = f_{\rm pay}(\vec{S}), \nonumber \\
		&&V(t,(S_1,...,S_{i-1},U_i,S_{i+1},...,S_d)^T) = V^{\rm UB}_i(t,(S_1,...,S_{i-1},S_{i+1},...,S_d)^T) \ {\rm for} \ i\in[d], \nonumber \\
		&&V(t,(S_1,...,S_{i-1},L_i,S_{i+1},...,S_d)^T) = V^{\rm LB}_i(t,(S_1,...,S_{i-1},S_{i+1},...,S_d)^T) \ {\rm for} \ i\in[d]. \label{eq:BounCond}
	\end{eqnarray}
	Here, $t\in[0,T]$, $\vec{S}:=(S_1,...,S_d)^T\in D$, $\sigma_1,...,\sigma_d,r$ are positive real constants such that $r<\frac{1}{2}\sigma_i^2$ for $i\in[d]$, $\rho_{ij},i,j\in[d]$ are real constants such that $\rho_{11}=\cdots=\rho_{dd}=1$ and the matrix $\rho:=(\rho_{ij})_{\substack{1\le i\le d \\ 1\le j \le d}}$ is symmetric and positive-definite, and $f_{\rm pay}:D\rightarrow \mathbb{R}$, $V^{\rm UB}_i:[0,T]\times \hat{D}^{i}\rightarrow \mathbb{R}$ and $V^{\rm LB}_i:[0,T]\times \hat{D}^{i}\rightarrow \mathbb{R}$ are given functions.
	Then, for a given $\vec{S}_0:=(S_{1,0},...,S_{d,0})^T\in D$, find $V_0:=V(0,\vec{S}_0)$.
\end{problem}

Here, we make some comments.
(\ref{eq:PDE}) is the so-called BS PDE, which corresponds to the following derivative pricing problem.
Under some probability space $(\Omega,\mathcal{F},P)$, we consider the $d$-dimensional stochastic process $\vec{S}(t):=(S_1(t),...,S_d(t))^T,t\ge 0$ obeying the following stochastic differential equation (SDE) system:
\begin{equation}
	dS_i(t) = rS_i(t)dt + \sigma_i S_i(t) dW_i(t), i\in[d] \label{eq:SDE}
\end{equation}
where $W_1,...,W_d$ are the Brownian motions on $(\Omega,\mathcal{F},P)$ satisfying $dW_idW_j=\rho_{ij}dt$ for $i,j\in[d]$ and the initial value is $\vec{S}(0)=\vec{S}_0$.
$S_1,...,S_d$ correspond to prices of $d$ underlying assets and (\ref{eq:SDE}) describes the random time evolution of $\vec{S}(t)$ under the so-called risk-neutral measure, where any asset price grows with the risk-free rate $r$ in expectation.
$\sigma_i$ is the parameter called {\it volatility}, which parameterizes how the random movement of $S_i$ is volatile.
This is the so-called BS model.
Then, the derivative price is given by the conditional expected value of the payoff discounted by the risk-free rate.
That is, the price of the derivative in which the payoff $f_{\rm pay}(\vec{S}(T))$ arises at maturity $T$ is
\begin{equation}
	V(t,\vec{S})=E[e^{-r(T-t)}f_{\rm pay}(\vec{S}(T))1_{\rm NB}|\vec{S}(t)=\vec{S}]
\end{equation}
at time $t$, if $\vec{S}(t)=\vec{S}$.
Here,
$1_{\rm NB}$ is a stochastic variable taking 1 if the condition for the payoff to be paid (e.g., barrier condition) is satisfied or 0 otherwise.
It is known that $V(t,\vec{S})$ satisfies (\ref{eq:PDE}) and appropriate boundary conditions, which should be set according to the product characteristics of the derivative such as barrier conditions\cite{Duffy,Tavella}.
We here present some typical choices:
\begin{itemize}
	\item If $U_i$ is a knock-out barrier, that is, $S_i$ reaching $U_i$ leads to the payoff not being paid, $V^{\rm UB}_i=0$. Similarly, if $L_i$ is a knock-out barrier, $V^{\rm LB}_i=0$.
	\item Suppose that $f_{\rm pay}(\vec{S}(T))$ takes the form of $\max\{a_0+\sum_{i=1}^d a_iS_i(T),0\}$ with $a_0,a_1,...,a_d\in\mathbb{R}$, which is the case with many types of derivatives including call and put options.
	In such a case, when either of $S_i$'s is extremely high or low, the derivative can be far in-the-money, which means that it is highly likely that the positive payoff will be paid (e.g., $a_0<0$, $a_1,...,a_d>0$ (i.e., a basket call option) and $S_i\gg -a_0/a_i$).
	In this situation, the derivative price is nearly equal to the discounted payoff.
	Therefore, we can set
	\begin{equation}
		V^{\rm UB}_i(t,(S_1,...,S_{i-1},S_{i+1},...,S_d)^T)=e^{-r(T-t)}\left(a_0+\sum_{\substack{1\le j\le d \\ j\ne i}} a_jS_j + a_iU_i\right)
	\end{equation}
	for sufficiently large $U_i$.
	In some cases, $V^{\rm LB}_i$ can be set in the similar way.
\end{itemize}

For a later convenience, we here transform the PDE (\ref{eq:PDE}) on $[0,T)\times D$ into
\begin{eqnarray}
	&&\frac{\partial}{\partial \tau}Y(\tau,\vec{x}) = \mathcal{L}Y(\tau,\vec{x}) \nonumber \\
	&&\mathcal{L} := \frac{1}{2}\sum_{i,j=1}^d \sigma_i\sigma_j\rho_{ij} \frac{\partial^2}{\partial x_i\partial x_j} + \sum_{i=1}^d\left(r-\frac{1}{2}\sigma_i^2\right)\frac{\partial}{\partial x_i}, \label{eq:PDE2}
\end{eqnarray}
on $(0,T]\times \Tilde{D}$, where $\tau:=T-t,\vec{x}:=(x_1,...,x_d)^T:=(\log S_1,...,\log S_d)^T$, $Y(\tau,\vec{x}):=e^{r\tau}V(T-\tau,(e^{x_1},...,e^{x_d})^T)$, $\Tilde{D}:=(l_1,u_1)\times\cdots\times(l_d,u_d)$ and $u_i:=\log U_i,l_i:=\log L_i$ for $i\in[d]$.
The boundary conditions become
\begin{align}
	&Y(0,\vec{x}) = \Tilde{f}_{\rm pay}(\vec{x}) := f_{\rm pay}((e^{x_1},...,e^{x_d})^T), \nonumber & \\
	&Y(\tau,(x_1,...,x_{i-1},u_i,x_{i+1},...,x_d)^T)= Y^{\rm UB}_i(\tau,(x_1,...,x_{i-1},x_{i+1},...,x_d)^T) :=V^{\rm UB}_i(T-\tau,(e^{x_1},...,e^{x_{i-1}},e^{x_{i+1}},...,e^{x_{d}})^T) \nonumber & \\
	&\qquad\qquad\qquad\qquad\qquad\qquad\qquad\qquad\qquad\qquad\qquad\qquad\qquad\qquad\qquad\qquad\qquad\qquad\qquad\qquad {\rm for} \ i\in[d],& \nonumber\\
	&Y(\tau,(x_1,...,x_{i-1},l_i,x_{i+1},...,x_d)^T) =Y^{\rm LB}_i(\tau,(x_1,...,x_{i-1},x_{i+1},...,x_d)^T) := V^{\rm LB}_i(T-\tau,(e^{x_1},...,e^{x_{i-1}},e^{x_{i+1}},...,e^{x_{d}})^T) \nonumber & \\
	& \qquad\qquad\qquad\qquad\qquad\qquad\qquad\qquad\qquad\qquad\qquad\qquad\qquad\qquad\qquad\qquad\qquad\qquad\qquad\qquad{\rm for} \ i\in[d].&
	\label{eq:BounCond2}
\end{align}

\subsection{Application of FDM to the BS PDE}

FDM is a method for solving a PDE by replacing partial derivatives with finite difference approximations.
In the case of (\ref{eq:PDE2}), the approximation is as follows.
First, letting $n_{\rm gr}$ be a positive integer, we introduce the grid points in the directions of $\vec{x}$:
\begin{eqnarray}
	\vec{x}^{(k)} &:=& (x^{(k_1)}_{1},...,x^{(k_d)}_{d})^T, k=\sum_{i=1}^d n_{\rm gr}^{d-i}k_i+1,k_i=0,1,...,n_{\rm gr}-1\nonumber \\
	x^{(k_i)}_{i} &:=& l_i + (k_i+1)h_i\nonumber \\
	h_i &:=& \frac{u_i-l_i}{n_{\rm gr}+1}.
\end{eqnarray}
Namely, there are $n_{\rm gr}$ equally spaced grid points in one direction and the total number of the grid points in $D$ is $N_{\rm gr}:=n_{\rm gr}^d$, except ones on the boundaries.
For later convenience, we set $x^{(-1)}_{i}=l_1$ and $x^{(n_{\rm gr})}_{i}=h_1$.
Hereafter, we assume that $n_{\rm gr}$ is a power of 2 for simplicity, whose detail is explained in Section \ref{sec:QFDM}, and define $m_{\rm gr}:=\log_2 n_{\rm gr}$.

Then, (\ref{eq:PDE2}) is transformed into the $N_{\rm gr}$-dimensional ODE system
\begin{equation}
	\frac{d}{d\tau}\vec{\Tilde{Y}}(\tau)=F \vec{\Tilde{Y}}(\tau)+\vec{C}(\tau). \label{eq:ODE}
\end{equation}
with the initial value
\begin{equation}
	\vec{\Tilde{Y}}(0) = (Y(0,\vec{x}^{(1)}),...,Y(0,\vec{x}^{(N_{\rm gr})}))^T = (\Tilde{f}_{\rm pay}(\vec{x}^{(1)}),...,\Tilde{f}_{\rm pay}(\vec{x}^{(N_{\rm gr})}))^T =: \vec{\Tilde{f}}_{\rm pay}. \label{eq:IniODE}
\end{equation}
Here, $\vec{\Tilde{Y}}(\tau), F$ and $\vec{C}(\tau)$, which newly appear in (\ref{eq:ODE}), are as follows.
$\vec{\Tilde{Y}}(\tau):=(\Tilde{Y}_1(\tau),...,\Tilde{Y}_{N_{\rm gr}}(\tau))\in \mathbb{R}^{N_{\rm gr}}$ and its $k$-th element is an approximation of $Y(\tau,x^{(k)})$.
$F$ is a $N_{\rm gr}\times N_{\rm gr}$ real matrix, which is expressed by a sum of Kronecker products of $n_{\rm gr}\times n_{\rm gr}$ matrices, that is,
\begin{eqnarray}
	F &:=& F^{\rm 2nd} + F^{\rm 1st} \nonumber \\
	F^{\rm 2nd} &:=& \sum_{i=1}^d \frac{\sigma_i^2}{2h_i^2} I^{\otimes i-1} \otimes D^{\rm 2nd} \otimes I^{\otimes d-i} + \sum_{i=1}^{d-1} \sum_{j=i+1}^d  \frac{\sigma_i\sigma_j\rho_{ij}}{4h_ih_j} I^{\otimes i-1} \otimes D^{\rm 1st} \otimes I^{\otimes j-i-1} \otimes D^{\rm 1st} \otimes I^{\otimes d-j} \nonumber \\
	F^{\rm 1st}&:=& \sum_{i=1}^d \frac{1}{2h_i}\left(r-\frac{1}{2}\sigma_i^2\right) I^{\otimes i-1} \otimes D^{\rm 1st} \otimes I^{\otimes d-i}, \label{eq:F}
\end{eqnarray}
where $I$ is the $n_{\rm gr}\times n_{\rm gr}$ identity matrix and 
\begin{equation}
	D^{\rm 1st} :=
	\begin{pmatrix}
		0 & 1  &    &   & & \\
		-1 & 0  & 1  &   & & \\
		& -1 & 0  & 1 & & \\
		&    & \ddots  & \ddots & \ddots & \\
		&    & & -1 & 0 & 1\\
		&    & &  & -1 & 0
	\end{pmatrix},
	D^{\rm 2nd} :=
	\begin{pmatrix}
		-2 & 1  &    &   & & \\
		1 & -2  & 1  &   & & \\
		& 1 & -2 & 1 & & \\
		&    & \ddots  & \ddots & \ddots & \\
		&    & & 1 & -2 & 1\\
		&    & &  & 1 & -2
	\end{pmatrix}
\end{equation}
are $n_{\rm gr}\times n_{\rm gr}$ tridiagonal matrices.
$\vec{C}(\tau):=(C_1(\tau),...,C_{N_{\rm gr}}(\tau))^T$ is necessary to take into account the boundary conditions and its $k$-th element is
\begin{eqnarray}
	C_k(\tau) &=& \sum_{i=1}^d\frac{\sigma_i^2}{2h_i^2}\left[\delta_{k_i,0}Y^{\rm LB}_i(\tau,\vec{x}^{(k)})+\delta_{k_i,n_{\rm gr}-1}Y^{\rm UB}_i(\tau,\vec{x}^{(k)})\right] \nonumber \\
	&+& \sum_{i=1}^{d-1} \sum_{j=i+1}^d  \frac{\sigma_i\sigma_j\rho_{ij}}{4h_ih_j}\left[-\delta_{k_i,0}Y^{\rm LB}_i(\tau,\vec{x}^{(k)})-\delta_{k_j,0}Y^{\rm LB}_j(\tau,\vec{x}^{(k)})
	+\delta_{k_i,n_{\rm gr}-1}Y^{\rm UB}_i(\tau,\vec{x}^{(k)})+\delta_{k_j,n_{\rm gr}-1}Y^{\rm UB}_j(\tau,\vec{x}^{(k)})\right] \nonumber \\
	&+&\sum_{i=1}^d\frac{1}{2h_i}\left(r-\frac{1}{2}\sigma_i^2\right)\left[\delta_{k_i,n_{\rm gr}-1}Y^{\rm UB}_i(\tau,\vec{x}^{(k)})-\delta_{k_i,0}Y^{\rm LB}_i(\tau,\vec{x}^{(k)})\right]. \label{eq:C}
\end{eqnarray}

Then, let us discuss the accuracy of the approximation (\ref{eq:ODE}).
First, we make a following assumption.
\begin{assum}
	$Y(\tau,\vec{x})$, the solution of (\ref{eq:PDE2}) and (\ref{eq:BounCond2}), is four-times differentiable with respect to $x_1,...,x_d$ and there exist $\zeta,\xi\in\mathbb{R}$ such that
	\begin{equation}
		\forall i,j,k,l\in [d], \tau\in(0,T), \vec{x}\in \Tilde{D},
		 \left|\frac{\partial^3 Y}{\partial x_i \partial x_j \partial x_k}(\tau,\vec{x})\right| < \zeta, \left|\frac{\partial^4 Y}{\partial x_i \partial x_j \partial x_k \partial x_l}(\tau,\vec{x})\right| < \xi. \label{eq:UDerivCond}
	\end{equation}
	\label{ass:YSm}
\end{assum}
\noindent We then obtain the following lemma, as proved in Appendix \ref{sec:PrLemFDMErr}.
\begin{lemma}
	Let $Y(\tau,\vec{x})$ be the solution of (\ref{eq:PDE2}) and (\ref{eq:BounCond2}), and $\vec{\Tilde{Y}}(\tau)$ be that of (\ref{eq:ODE}) and (\ref{eq:IniODE}).
	Under Assumption \ref{ass:YSm}, if, for a given $\epsilon\in\mathbb{R}_+$,
	\begin{equation}
		h_i < \min \left\{\frac{1}{d\sigma_i}\sqrt{\frac{3\epsilon}{2\xi T}},\frac{1}{\sigma_i}\sqrt{\frac{3\epsilon}{\zeta dT}}\right\}, i\in[d] \label{eq:hxcond2}
	\end{equation}
	then, for any $\tau\in (0,T)$, the inequality
	\begin{equation}
		\|\vec{\Tilde{Y}}(\tau)-\vec{Y}(\tau)\| < \sqrt{N_{\rm gr}}\epsilon \label{eq:tilUDiff}
	\end{equation}
	holds, where $\vec{Y}(\tau)=(Y(\tau,\vec{x}^{(1)}),...,Y(\tau,\vec{x}^{(N_{\rm gr})}))^T$.
\label{lem:FDMErr}
\end{lemma}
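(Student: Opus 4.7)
The plan is to derive an ODE for the error $\vec{e}(\tau) := \vec{\Tilde{Y}}(\tau) - \vec{Y}(\tau)$, where $\vec{Y}(\tau)$ collects the exact values $Y(\tau,\vec{x}^{(k)})$, and then use a Duhamel representation together with a logarithmic-norm bound to propagate the local truncation error. Since $\vec{\Tilde{Y}}$ solves (\ref{eq:ODE}) exactly while $\vec{Y}$ satisfies the same ODE up to a truncation residual $\vec{\tau}(\tau)$ arising from approximating $\mathcal{L}Y(\tau,\vec{x}^{(k)})$ by $(F\vec{Y})_k+C_k(\tau)$, one obtains $\frac{d}{d\tau}\vec{e}(\tau)=F\vec{e}(\tau)-\vec{\tau}(\tau)$ with $\vec{e}(0)=\vec{0}$, whose solution is $\vec{e}(\tau)=-\int_0^{\tau}e^{F(\tau-s)}\vec{\tau}(s)\,ds$. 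The proof then reduces to a pointwise bound on $\vec{\tau}$ and a bound on $\|e^{F(\tau-s)}\|$.

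For the first ingredient, I would Taylor-expand $Y(\tau,\cdot)$ at each interior grid point. The central-difference remainders for $\partial_i$, $\partial_i^2$, and $\partial_i\partial_j$ ($i\neq j$) are, to leading order, $\frac{h_i^2}{6}\partial_i^3 Y$, $\frac{h_i^2}{12}\partial_i^4 Y$, and $\frac{h_i^2}{6}\partial_i^3\partial_j Y+\frac{h_j^2}{6}\partial_i\partial_j^3 Y$. Weighting by the coefficients appearing in $\mathcal{L}$ and using $|\partial^3 Y|<\zeta$, $|\partial^4 Y|<\xi$ from Assumption \ref{ass:YSm}, $|r-\sigma_i^2/2|<\sigma_i^2/2$, and $|\rho_{ij}|\le 1$ (which follows by Cauchy--Schwarz since $\rho$ is positive definite with unit diagonal), the conditions on $h_i$ in (\ref{eq:hxcond2}) are calibrated so that, up to constants, the per-gridpoint residual is bounded by $\epsilon/T$, whence $\|\vec{\tau}(\tau)\|\le\sqrt{N_{\rm gr}}\,\epsilon/T$. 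Boundary-adjacent rows are handled identically because the missing stencil values are absorbed into $\vec{C}(\tau)$ via (\ref{eq:C}).

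The hard part is showing $\mu(F)\le 0$, which gives $\|e^{F(\tau-s)}\|\le 1$. Since $D^{\rm 1st}$ is antisymmetric and $D^{\rm 2nd}$ symmetric, $F^{\rm 1st}$ is antisymmetric and $F^{\rm 2nd}$ is symmetric (the Kronecker product of two antisymmetric matrices on distinct tensor factors is symmetric, as $(-D^{\rm 1st})\otimes(-D^{\rm 1st})=D^{\rm 1st}\otimes D^{\rm 1st}$), so the symmetric part of $F$ equals $F^{\rm 2nd}$ and $\mu(F)=\lambda_{\max}(F^{\rm 2nd})$. To prove $F^{\rm 2nd}\preceq 0$ I would use the factorization $D^{\rm 2nd}_i=-B_i^T B_i$, where $B_i$ is the rectangular forward-difference operator in direction $i$ incorporating the Dirichlet-type boundary conditions, together with the antisymmetry identity $v^T D^{\rm 1st}_i D^{\rm 1st}_j v=-\langle D^{\rm 1st}_i v, D^{\rm 1st}_j v\rangle$. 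Setting $w_i:=B_i v/h_i$ and $u_i:=D^{\rm 1st}_i v/(2h_i)$ and using $\rho_{ii}=1$ to rearrange gives
\[ v^T F^{\rm 2nd} v = -\tfrac{1}{2}\sum_{i=1}^{d}\sigma_i^{2}\bigl(\|w_i\|^{2}-\|u_i\|^{2}\bigr) - \tfrac{1}{2}\sum_{k=1}^{N_{\rm gr}} p_k^{T}\rho\, p_k, \]
with $p_k:=(\sigma_1(u_1)_k,\ldots,\sigma_d(u_d)_k)^{T}$. The second sum is nonnegative because $\rho$ is positive definite, and for the first the telescoping inequality $(v_{k+1}-v_{k-1})^{2}\le 2\bigl[(v_{k+1}-v_k)^{2}+(v_k-v_{k-1})^{2}\bigr]$ applied direction-wise gives $\|D^{\rm 1st}_i v\|^{2}\le 4\|B_i v\|^{2}$, i.e.\ $\|u_i\|\le\|w_i\|$; so $v^T F^{\rm 2nd} v\le 0$. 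Combining these two ingredients yields $\|\vec{e}(\tau)\|\le\int_0^{\tau}\|\vec{\tau}(s)\|\,ds\le\tau\cdot\sqrt{N_{\rm gr}}\,\epsilon/T\le\sqrt{N_{\rm gr}}\,\epsilon$, which is (\ref{eq:tilUDiff}).
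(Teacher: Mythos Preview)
Your argument follows the same route as the paper: bound the local truncation residual pointwise by $\epsilon/T$ via Taylor expansion, observe that $\mu(F)\le 0$ (antisymmetric first-order part plus negative-semidefinite second-order part), and propagate the error through the Duhamel formula; the paper phrases the last step via S\"oderlind's logarithmic-norm inequality, which is the same thing. The one substantive difference is that the paper disposes of $\mu(F^{\rm 2nd})<0$ by citing Theorem~5.1 of \cite{Gonzalez-Pinto}, whereas you give a self-contained argument using the factorization $D^{\rm 2nd}=-B^{T}B$, the identity $v^{T}D^{\rm 1st}_iD^{\rm 1st}_jv=-\langle D^{\rm 1st}_iv,D^{\rm 1st}_jv\rangle$, and the telescoping inequality $\|D^{\rm 1st}v\|^{2}\le 4\|Bv\|^{2}$. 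That argument is correct and more elementary than appealing to the reference.

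One caution on the truncation step: your mixed-derivative remainder $\frac{\xi}{6}(h_i^{2}+h_j^{2})$ is the natural leading term, but after multiplying by $\sigma_i\sigma_j\rho_{ij}$ it produces cross-terms like $\sigma_i\sigma_jh_i^{2}$ that are \emph{not} controlled by the hypothesis (\ref{eq:hxcond2}) when the $\sigma_i$ differ widely. The paper instead uses the bound $\frac{\xi}{6}h_ih_j$, which factorizes as $(\sigma_ih_i)(\sigma_jh_j)$ and combines cleanly with (\ref{eq:hxcond2}) to give the strict $\epsilon/T$ estimate. So your phrase ``up to constants'' is hiding a genuine issue; to make the argument close you should either derive the $h_ih_j$ form of the remainder or verify the constants against (\ref{eq:hxcond2}) explicitly.
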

\noindent 
Lemma \ref{lem:FDMErr} means that the root mean square of the differences between $\Tilde{Y}_i(\tau)$ and $Y(\tau,\vec{x}^{(i)})$ is upper bounded by $\epsilon$.
This result will be reflected to the estimation of the error in the proposed method for Problem 1.

\section{Quantum algorithm for solving ordinary differential equation systems\label{sec:QAlgODE}}

In this section, we outline the algorithm of \cite{Berry2}.
This is the algorithm for solving the linear ODE system
\begin{equation}
	\frac{d}{dt}\vec{x}(t)=A\vec{x}+\vec{b}, \label{eq:ODEEx}
\end{equation}
with the initial condition $\vec{x}(0)=\vec{x}_{\rm ini}$.
Here, $\vec{x}(t)\in\mathbb{R}^N$, $A\in \mathbb{R}^{N\times N}$ is a constant diagonalizable matrix, and $\vec{b}\in \mathbb{R}^N$ is a constant vector.
Suppose that we want to find $\vec{x}(T)$ for some $T\in\mathbb{R}_+$.
The algorithm is based on the formal solution of (\ref{eq:ODEEx})
\begin{equation}
	\vec{x}(T)=e^{AT}\vec{x}_{\rm ini}+(e^{AT}-I_N)A^{-1}\vec{b}. \label{eq:formSol}
\end{equation}
In order to calculate this, we consider the linear equation system on the tensor product space $V:=\mathbb{R}^{q+1} \otimes \mathbb{R}^N$, where the former is the auxiliary space and the latter is the original space on which $A$ operates:
\begin{equation}
	C_{m,k,p}(Ah_t)\vec{X}=\vec{e}_0\otimes\vec{x}_{\rm ini}+h\sum_{i=0}^{m-1}\vec{e}_{i(k+1)+1}\otimes\vec{b}. \label{eq:LSBerry}
\end{equation}
Here, $m,p,k$ are positive integers set large enough (see the statement of Theorem \ref{th:QODE}), $q:=m(k+1)+p$, $h_t=T/m$, $\vec{X}\in\mathbb{R}^{N(q+1)}$ and $\{\vec{e}_i\}_{i=0,1,...,q}$ is an orthonormal basis of $\mathbb{R}^{q+1}$.
For $B\in \mathbb{R}^{N\times N}$, the $N(q+1)\times N(q+1)$ matrix $C_{m,k,p}(B)$ is defined as
\begin{eqnarray}
	C_{m,k,p}(B)&:=&\sum_{j=0}^{q}\vec{e}_j\vec{e}_j^T\otimes I_N-\sum_{i=0}^{m-1}\sum_{j=1}^k\vec{e}_{i(k+1)+j}\vec{e}_{i(k+1)+j-1}^T\otimes \frac{1}{j}B \nonumber \\
	&& \quad -\sum_{i=0}^{m-1}\sum_{j=0}^k\vec{e}_{(i+1)(k+1)}\vec{e}_{i(k+1)+j}^T\otimes I_N - \sum_{j=m(k+1)+1}^{q}\vec{e}_j\vec{e}_{j-1}^T\otimes I.
\end{eqnarray}
Visually, (\ref{eq:LSBerry}) is displayed as follows
\begin{equation}
	\begin{pmatrix}
		I_N       &        &         &           & & & &  & & & & \\
		-Ah_t/1 & I_N      &         &           &&&&&&&& \\
		& \ddots & \ddots  &           &&&&&&&& \\
		&        & -Ah_t/k & I_N         &&&&&&&& \\
		-I_N      & \cdots & -I_N      & -I_N     & I_N      &         &&&&&&\\
		&        &         &        & \ddots & \ddots  & &&&&&\\
		&        &         &        &        & -Ah_t/1 & I_N &&&&&\\
		&		 &         &        &        &         & \ddots & \ddots  & &&& \\
		&		 &         &        &        &         &        & -Ah_t/k & I_N      &&& \\
		&		 &         &        &        & -I_N      & \cdots & -I_N      & -I_N     & I_N       & & \\
		&		 &         &        &        &         &        &         &        & -I_N      & I_N      & \\
		&&		 &         &        &        &         &        &         &        &          \ddots & \ddots & \\
		&		 &         &       & &        &         &        &         &        &          & -I_N & I_N 
	\end{pmatrix}
	\vec{X}=
	\begin{pmatrix}
		\vec{x}_{\rm ini} \\
		h_t\vec{b}\\
		0\\
		\vdots\\
		0\\
		\vdots\\
		h_t\vec{b}\\
		0\\
		\vdots\\
		0\\
		0\\
		\vdots\\
		0\\
	\end{pmatrix}.
\end{equation}
$C_{m,k,p}$ is designed based on the Taylor expansion of (\ref{eq:formSol}).
The solution of (\ref{eq:LSBerry}) can be written as
\begin{equation}
	\vec{X}=\sum_{i=0}^{m-1}\sum_{j=1}^k\vec{e}_{i(k+1)+j}\otimes\vec{x}_{i,j} + \sum_{j=0}^p\vec{e}_{m(k+1)+j}\otimes\vec{x}_{m},
\end{equation}
for some vectors $\vec{x}_{i,j},\vec{x}_{m}\in \mathbb{R}^N$, and $\vec{x}_{m}$ becomes close to $\vec{x}(T)$, which we want to find.
Note that $\vec{x}_{m}$ is repeated $p$ times in the solution $\vec{X}$, which enhances the probability of obtaining the desired vector in the output quantum state of the algorithm.

Although the $C_{m,k,p}(Ah_t)$ is an extremely large matrix, the quantum algorithms for solving linear equation systems (QLS algorithms)\cite{HHL,Ambainis,Clader,Childs3} can output the solution of (\ref{eq:LSBerry}) only with complexity of $O(\log \mathcal{N})$, where $\mathcal{N}$ is the number of rows (or columns) in $C_{m,k,p}(Ah_t)$.
The quantum algorithm in \cite{Berry2} leverages the algorithm in \cite{Childs3}.
In order to use it, \cite{Berry2} assumes that the following oracles (i.e. unitary operators) are available:
\begin{itemize}
	\item $O_{A,1}$\\
	For the matrix $A$, given a row index $j$ and an integer $l$, this return $\nu(j,l)$, the column index of the $l$-th nonzero entry in the $j$-th row:
	\begin{equation}
		O_{A,1}:\ket{j}\ket{l} \mapsto \ket{j}\ket{\nu(j,l)}
	\end{equation}
	\item $O_{A,2}$\\
	For the matrix $A$, given a row index $j$ and a column index $k$, this return the $(j,k)$ entry:
	\begin{equation}
		O_{A,2}:\ket{j}\ket{k}\ket{z} \mapsto \ket{j}\ket{k}\ket{z\oplus A_{jk}}
	\end{equation}
	\item $O_{\vec{x}_{\rm ini}}$\\
	This prepares $\frac{1}{\|\vec{x}_{\rm ini}\|}\ket{\vec{x}_{\rm ini}}$ under the control by another qubit:
	\begin{equation}
		O_{\vec{x}_{\rm ini}}:
		\begin{cases}
		\ket{\bar{0}}\ket{0} \mapsto \frac{1}{\|\vec{x}_{\rm ini}\|}\ket{\bar{0}}\ket{\vec{x}_{\rm ini}} \\
		\ket{\bar{1}}\ket{\psi} \mapsto \ket{\bar{1}}\ket{\psi} \ {\rm for \ any} \ \ket{\psi}
		\end{cases}
	\end{equation}
	\item $O_{\vec{b}}$\\
	When $\vec{b}\ne 0$, this prepares $\frac{1}{\|\vec{b}\|}\ket{\vec{b}}$ under the control by another qubit:
	\begin{equation}
	O_{\vec{b}}:
	\begin{cases}
		\ket{\bar{0}}\ket{\psi} \mapsto \ket{\bar{0}}\ket{\psi} \ {\rm for \ any} \ \ket{\psi} \\
		\ket{\bar{1}}\ket{0} \mapsto \frac{1}{\|\vec{b}\|}\ket{\bar{1}}\ket{\vec{b}}
	\end{cases}.
	\end{equation}
	When $\vec{b}= 0$, this is an identity operator.
\end{itemize}

Then, we present the theorem (Theorem 9 in \cite{Berry2}), which states the query complexity of the algorithm, with a slight modification.
\begin{theorem}
	(Theorem 9 in \cite{Berry2}, slightly modified)
	Suppose $A = V DV^{-1}$ is an $N \times N$ diagonalizable matrix, where $D = {\rm diag}(\lambda_0,\lambda_1,...,\lambda_{N-1})$ satisfies ${\rm Re}(\lambda_j) \le 0$ for any $j \in {0,1,...,N-1}$.
	In addition, suppose $A$ has at most $s$ nonzero entries in any row and column,
	and we have oracles $O_{A,1},O_{A,2}$ as above.
	Suppose $\vec{x}_{\rm ini}$ and $\vec{b}$ are $N$-dimensional vectors with known norms
	and we have oracles $O_{\vec{x}_{\rm ini}}$ and $O_{\vec{b}}$ as above.
	Let	$\vec{x}$ evolve according to the differential equation (\ref{eq:ODEEx}) with the initial condition $\vec{x}(0) =\vec{x}_{\rm ini}$.
	Let $T > 0$ and	$g := \max_{t\in[0,T]}	\|\vec{x}(t)\|/\|\vec{x}(T)\|$.
	Then there exists a quantum algorithm that produces a state $\ket{\Tilde{\Psi}}$, which is $\epsilon$-close to
	\begin{equation}
		\ket{\Psi} := \frac{1}{\sqrt{\braket{\Psi_{\rm gar}|\Psi_{\rm gar}}+(p+1)\|\vec{x}(T)\|^2}}\left(\ket{\Psi_{\rm gar}}+\sum_{j=p(k+1)}^{p(k+2)}\ket{j}\ket{\vec{x}(T)}\right) \label{eq:outputBerryAlgo}
	\end{equation}
	using
	\begin{equation}
	O\left(\kappa_V sT\|A\|\times {\rm poly}\left(\log\left(\frac{\kappa_Vs T\|A\|}{\epsilon}\right)\right)\right) \label{eq:compQODE}
	\end{equation}
	queries to $O_{A,1}$, $O_{A,2}$, $O_x$, and $O_b$.
	Here, $\kappa_V = \|V\|\cdot\|V^{-1}\|$	is the condition number of $V$, $p=\lceil T\|A\| \rceil$, $k=\lfloor 2\log\Omega/\log(\log\Omega)\rfloor$, $\Omega=70g\kappa_Vp^{3/2}(\|\vec{x}_{\rm ini}\|+T\|\vec{b}\|)/\epsilon\|\vec{x}(T)\|$, and $\ket{\Psi_{\rm gar}}$ is an unnormalized state which takes the form of $\ket{\Psi_{\rm gar}}=\sum_{j=0}^{p(k+1)-1}\ket{j}\ket{\psi_j}$ with some unnormalized states $\ket{\psi_0},\ket{\psi_1},...,\ket{\psi_{p(k+1)-1}}$ and satisfies $\braket{\Psi_{\rm gar}|\Psi_{\rm gar}}=O(g^2(p+1)\|\vec{x}(T)\|^2)$. \label{th:QODE}
\end{theorem}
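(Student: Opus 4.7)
The plan is to follow Berry et al.~\cite{Berry2} closely: recast (\ref{eq:ODEEx}) as the large sparse linear system (\ref{eq:LSBerry}) so that specific blocks of its solution encode $\vec{x}(T)$, solve that linear system with the quantum linear system algorithm (QLSA) of \cite{Childs3}, and then read off the desired state. The ``slight modification'' compared with the original statement only affects the explicit labelling of the output blocks and the packaging of the garbage state $\ket{\Psi_{\rm gar}}$, so the underlying proof strategy is unchanged.

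First I would verify that the block structure of $C_{m,k,p}(Ah_t)$ implements Taylor time stepping. Within each of the $m$ time-step blocks, the sub-diagonal entries $-Ah_t/j$ force the auxiliary variables to be the monomials $(Ah_t)^j\vec{x}_{i,0}/j!$, and the ``summation'' rows then realise $\vec{x}_{i+1,0}=\sum_{j=0}^k (Ah_t)^j\vec{x}_{i,0}/j!+h_t\vec{b}$, i.e.\ a single step of the $k$-th order truncated exponential applied to the iterate $\vec{x}_i$. Hence the block at position $m(k+1)$ holds $\vec{x}_m$, an approximation of $e^{AT}\vec{x}_{\rm ini}+(e^{AT}-I_N)A^{-1}\vec{b}$, and the trailing $p$ identity blocks simply duplicate $\vec{x}_m$ across $p+1$ consecutive basis indices; once the system is solved, this contributes squared weight $(p+1)\|\vec{x}(T)\|^2$ to the normalised output, while all other blocks form $\ket{\Psi_{\rm gar}}$.

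Next I would control the total error. The per-step truncation error of $\sum_{j=0}^k (Ah_t)^j/j!$ is $O((\|A\|h_t)^{k+1}/(k+1)!)$, and after accumulation over $m=\lceil T\|A\|\rceil$ steps and normalisation by $\|\vec{x}(T)\|$ it is bounded in terms of the quantity $\Omega=70g\kappa_V p^{3/2}(\|\vec{x}_{\rm ini}\|+T\|\vec{b}\|)/(\epsilon\|\vec{x}(T)\|)$ appearing in the statement; the choice $k=\lfloor 2\log\Omega/\log\log\Omega\rfloor$ makes $k!$ grow super-polynomially in $\Omega$ and forces the normalised error below $\epsilon$. The garbage-state norm bound $\braket{\Psi_{\rm gar}|\Psi_{\rm gar}}=O(g^2(p+1)\|\vec{x}(T)\|^2)$ then follows from the same tail estimate applied to intermediate iterates $\vec{x}_i$, invoking $g=\max_{t\in[0,T]}\|\vec{x}(t)\|/\|\vec{x}(T)\|$ to compare $\|\vec{x}(ih_t)\|$ with $\|\vec{x}(T)\|$.

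The principal technical obstacle is bounding the condition number $\kappa_C$ of $C_{m,k,p}(Ah_t)$, since this, together with sparsity, controls the QLSA cost. Exploiting the diagonalisation $A=VDV^{-1}$ with $\mathrm{Re}(\lambda_j)\le 0$, one writes $C_{m,k,p}(Ah_t)^{-1}$ explicitly as a block lower-triangular matrix whose entries are products of truncated exponentials in $Ah_t$; each such block has norm at most $\kappa_V$ times an absolute constant because $\|e^{Dt}\|\le 1$ for $t\ge 0$, and counting blocks yields $\kappa_C=O(g\kappa_V\,\mathrm{poly}(m,k,p))$. Together with the sparsity bound $s_C=O(s)$, the $O(s_C\kappa_C\,\mathrm{polylog}(N(q+1)/\epsilon))$ query complexity of the QLSA of \cite{Childs3}, and the substitutions $m=O(T\|A\|)$, $q+1=O(m(k+1)+p)$, this gives the claimed query count (\ref{eq:compQODE}). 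The oracles $O_{A,1},O_{A,2},O_{\vec{x}_{\rm ini}},O_{\vec{b}}$, augmented by elementary arithmetic on the auxiliary register holding the block index, are enough to synthesise the sparse-access oracles that the QLSA requires for $C_{m,k,p}(Ah_t)$ and for the right-hand side of (\ref{eq:LSBerry}) with only $O(1)$ overhead per call, completing the argument.
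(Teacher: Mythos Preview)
The paper does not actually prove Theorem~\ref{th:QODE}. It is stated as ``Theorem 9 in \cite{Berry2}, slightly modified,'' and immediately after the statement the authors only \emph{explain} the modifications in prose: the output is the pre-post-selection state (\ref{eq:outputBerryAlgo}) rather than the post-selected $\ket{\vec{x}(T)}/\|\vec{x}(T)\|$; amplitude amplification is dropped, which is why the factor $g$ present in (112) of \cite{Berry2} disappears from (\ref{eq:compQODE}); and $\epsilon$ now measures closeness of $\ket{\Tilde{\Psi}}$ to $\ket{\Psi}$ rather than of the post-selected state. No proof is given or needed beyond this, since the underlying analysis is deferred entirely to \cite{Berry2}.

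Your proposal is therefore not wrong so much as it is a recapitulation of the Berry et al.\ argument itself, which the paper deliberately avoids. If you wanted to match the paper, a one-paragraph pointer to \cite{Berry2} plus the three bullet-point modifications above would suffice. One substantive slip in your sketch: you write $\kappa_C=O(g\kappa_V\,\mathrm{poly}(m,k,p))$ for the condition number of $C_{m,k,p}(Ah_t)$. In \cite{Berry2} the factor $g$ does \emph{not} enter the condition-number bound; it enters only through amplitude amplification, which is precisely the step the present paper removes. If $g$ really sat inside $\kappa_C$, it would survive into (\ref{eq:compQODE}) via the QLSA cost, contradicting the paper's explicit remark that $g$ has been eliminated from the complexity. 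So that line should read $\kappa_C=O(\kappa_V\,\mathrm{poly}(m,k,p))$, and the absence of $g$ in (\ref{eq:compQODE}) is then consistent.
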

\noindent The modifications from Theorem 9 in \cite{Berry2} are as follows.
First, in \cite{Berry2}, it is assumed that we perform post-selection and obtain $\ket{\vec{x}(T)}/\|\ket{\vec{x}(T)}\|$ (strictly speaking, a state close to it).
On the other hand, in Theorem \ref{th:QODE}, the output state is not purely $\ket{\vec{x}(T)}/\|\ket{\vec{x}(T)}\|$ but contains $\ket{\vec{x}(T)}$ as a part in addition to the unnecessary state $\ket{\Psi_{\rm gar}}$.
This is because, in this paper, we use the algorithm of \cite{Berry2} as a subroutine in the quantum amplitude estimation (QAE)\cite{Brassard,Suzuki,Aaronson,Grinko,Nakaji}, as explained in Section \ref{sec:QFDM}, and the iterated subroutine in QAE must be an unitary operation.
This means that we cannot perform post-selection, since it is a non-unitary operation.
Note also that, we do not perform amplitude amplification for $\ket{\Psi_1}$, which is done before post-selection in \cite{Berry2}, and thus a factor $g$, which exists in the expression of the complexity (112) in \cite{Berry2}, has dropped from (\ref{eq:compQODE}) in this paper.
Moreover, the meaning of the closeness $\epsilon$ is different between Theorem \ref{th:QODE} in this paper and Theorem 9 in \cite{Berry2}.
In the former, $\epsilon$ is the closeness between $\ket{\Tilde{\Psi}}$ and $\ket{\Psi}$, which corresponds to $\delta$ in \cite{Berry2}.
On the other hand, Theorem 9 in \cite{Berry2} refers to the closeness of the state after post-selection to $\ket{\vec{x}(T)}/\|\ket{\vec{x}(T)}\|$.
This difference also makes (\ref{eq:compQODE}) different from (112) in \cite{Berry2}. 

\section{Approximating the present derivative price as the expected value of the price at a future time\label{sec:app}}

As we explained in the introduction, we aim to calculate $V_0$ as the expected value of the discounted price at some future time.
Concretely, we set $t_{\rm ter}\in(0,T)$ and calculate
\begin{equation}
	V_0 = e^{-rt_{\rm ter}}\int_{\mathbb{R}_+^d} d\vec{S} \phi(t_{\rm ter},\vec{S})p_{\rm NB}(t_{\rm ter},\vec{S})V(t_{\rm ter},\vec{S}), \label{eq:V0ExpPrice}
\end{equation}
where $\phi(t,\vec{S})$ is the probability density function of $\vec{S}(t)$, and $p_{\rm NB}(t,\vec{S})$ is the conditional probability that the no event which leads to extinction of the payoff happens by $t$ given $\vec{S}(t)=\vec{S}$.
Although (\ref{eq:V0ExpPrice}) holds for any $t_{\rm ter}$, for the effective numerical calculation, $t_{\rm ter}$ should be set carefully.
Recalling our motivation to evade exponential complexity to read out $V_0$, which is explained in Section \ref{sec:Intro}, we want to set $t_{\rm ter}$ as large as possible.
On the other hand, there are some reasons to set $t_{\rm ter}$ small because of existence of boundaries.
First, note that it is difficult to find $p_{\rm NB}(t_{\rm ter},\vec{S})$ explicitly in the multi-asset case.
However, for sufficiently small $t_{\rm ter}$, $p_{\rm NB}(t_{\rm ter},\vec{S})$ is nearly equal to 1, since the payoff is paid at least if $\vec{S}(t)$ does not reach any boundaries and the probability that $\vec{S}(t)$ reaches any boundaries can be neglected for time close to 0.
Besides, note that we obtain the derivative prices only on the points in boundaries by solving PDE.
For small $t_{\rm ter}$, we can approximately calculate $V_0$ using only the information in boundaries, since the probability distribution of $\vec{S}(t_{\rm ter})$ over the boundaries is negligible.
In summary, we should set $t_{\rm ter}$ as large as possible in the range of the value for which the probability distribution of $\vec{S}(t_{\rm ter})$ is almost confined within the boundaries.
For such $t_{\rm ter}$, we can approximate
\begin{equation}
	V_0 \approx e^{-rt_{\rm ter}}\int_{D} d\vec{S} \phi(t_{\rm ter},\vec{S})V(t_{\rm ter},\vec{S}), \label{eq:V0ExpPriceApp}
\end{equation}
or, equivalently,
\begin{equation}
	V_0 \approx e^{-rT}\int_{\Tilde{D}} d\vec{x} \Tilde{\phi}(t_{\rm ter},\vec{x})Y(\tau_{\rm ter},\vec{x}), \label{eq:V0ExpPriceApp2}
\end{equation}
where $\tau_{\rm ter}:=T-t_{\rm ter}$ and $\Tilde{\phi}(t,\vec{x})$ is the probability density of $\vec{x}(t)$ under the BS model (\ref{eq:SDE}) and will be explicitly given later.

Considering the above points, we obtain the lemma, which shows a criterion to set $t_{\rm ter}$.
First, we make an assumption, which is necessary to upper bound the contribution from the outside of the boundaries to the integral (\ref{eq:V0ExpPrice}).

\begin{assum}
	There exist positive constants $A_0,A_1,...,A_d$ such that $f_{\rm pay}$ in Problem 1 satisfies
	\begin{equation}
		f_{\rm pay}(\vec{S})\le \sum_{i=1}^d A_i S_i + A_0 \label{eq:payoffBound}
	\end{equation}
	for any $\vec{S}\in D$. \label{ass:payoffBound}
\end{assum}

\noindent That is, we assume that the payoff is upper bounded by some linear function, which is the case for many cases such as call/put options on linear combinations of $S_1,...,S_d$ (i.e. basket options).
Then, the following lemma holds.

\begin{lemma}
	Consider Problem 1.
	Under Assumption \ref{ass:payoffBound}, for any $\epsilon\in \mathbb{R}_+$ satisfying
	\begin{equation}
		\log\left(\frac{\Tilde{A}d(d+1)}{\epsilon}\right)>\max\left\{\frac{2}{5}\left(1-\frac{2r}{\sigma_i^2}\right)\log\left(\frac{U_i}{S_{i,0}}\right),\frac{2}{5}\left(1-\frac{2r}{\sigma_i^2}\right)\log\left(\frac{S_{i,0}}{L_i}\right)\right\},i\in[d],
		\label{eq:epscond1}
	\end{equation}
	where $\Tilde{A}=\max\{A_1\sqrt{U_1S_{1,0}},...,A_1\sqrt{U_dS_{d,0}},A_0\}$, and
	\begin{equation}
		\epsilon<2d(d+1)\times\max\{A_0,A_1S_{1,0},...,A_dS_{d,0}\}, \label{eq:epscond2}
	\end{equation}
	the inequality
	\begin{equation}
		\left|V(0,\vec{S}_0) - e^{-rT}\int_{\Tilde{\Tilde{D}}} d\vec{x} \Tilde{\phi}(t_{\rm ter},\vec{x})Y(t_{\rm ter},\vec{x})\right| \le 2\epsilon \label{eq:VIntDiff}
	\end{equation}
	holds, where
	\begin{equation}
		t_{\rm ter} := \min \left\{\frac{2\left(\log \left(\frac{U_1}{S_{1,0}}\right)\right)^2}{25\sigma_1^2 \log \left(\frac{2\Tilde{A}d(d+1)}{\epsilon}\right)},...,\frac{2\left(\log \left(\frac{U_d}{S_{d,0}}\right)\right)^2}{25\sigma_d^2 \log \left(\frac{2\Tilde{A}d(d+1)}{\epsilon}\right)},\frac{2\left(\log \left(\frac{S_{1,0}}{L_1}\right)\right)^2}{25\sigma_1^2 \log \left(\frac{2\Tilde{A}d(d+1)}{\epsilon}\right)},...,\frac{2\left(\log \left(\frac{S_{d,0}}{L_d}\right)\right)^2}{25\sigma_d^2 \log \left(\frac{2\Tilde{A}d(d+1)}{\epsilon}\right)}\right\}. \label{eq:tter}
	\end{equation}
	and
	\begin{equation}
		\Tilde{\Tilde{D}} := \left[\frac{1}{2}\left(l_1+x^{(0)}_1\right),\frac{1}{2}\left(x^{(n_{\rm gr}-1)}_1+u_1\right)\right]\times \cdots \times \left[\frac{1}{2}\left(l_d+x^{(0)}_d\right),\frac{1}{2}\left(x^{(n_{\rm gr}-1)}_d+u_d\right)\right].
	\end{equation}
	\label{lem:tter}
\end{lemma}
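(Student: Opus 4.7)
The starting point is the exact equality (\ref{eq:V0ExpPrice}), which follows from the tower property of conditional expectations together with the Feynman--Kac representation of $V(t,\vec S)$, applied with intermediate conditioning at $t=t_{\rm ter}$. My plan is to show that (\ref{eq:V0ExpPrice}) differs from the right-hand side of (\ref{eq:VIntDiff}) by at most $2\epsilon$ by decomposing the error into two pieces: (A) the truncation error produced by restricting the integral from $\mathbb{R}^d_+$ to $\Tilde{\Tilde{D}}$ (equivalently, restricting from $\mathbb{R}^d$ to $\Tilde{\Tilde{D}}$ in log-coordinates $\vec x=\log\vec S$), and (B) the error from replacing $p_{\rm NB}(t_{\rm ter},\vec S)$ by $1$ on $\Tilde{\Tilde{D}}$. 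Each piece will be bounded separately by $\epsilon$ and the triangle inequality will finish the proof.

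For (A), I would first bound the integrand pointwise. Using the martingale property $E[S_i(T)\mid\vec S(t_{\rm ter})=\vec S]=S_i e^{r(T-t_{\rm ter})}$ under the risk-neutral measure together with Assumption \ref{ass:payoffBound}, one obtains
\begin{equation*}
V(t_{\rm ter},\vec S)\le e^{-r(T-t_{\rm ter})}E[f_{\rm pay}(\vec S(T))\mid\vec S(t_{\rm ter})=\vec S]\le e^{-r(T-t_{\rm ter})}A_0+\sum_{i=1}^d A_i S_i.
\end{equation*}
Since $\log S_i(t_{\rm ter})$ is Gaussian with mean $\log S_{i,0}+(r-\sigma_i^2/2)t_{\rm ter}$ and variance $\sigma_i^2 t_{\rm ter}$, and since completing the square shows that $S_i\phi(t_{\rm ter},\vec S)$ is again a Gaussian density in $\log S_i$ (shifted by $\sigma_i^2 t_{\rm ter}$), both types of tail integrals over the complement of $\Tilde{\Tilde{D}}$ reduce to one-dimensional Gaussian tail estimates. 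A union bound over the $2d$ ``sides'' of $\Tilde{\Tilde{D}}$, combined with Mills' inequality and the explicit choice (\ref{eq:tter}), produces a factor roughly $\exp(-\tfrac{25}{4}\log(2\Tilde A d(d+1)/\epsilon))$ per side, which, after absorbing a polynomial Mills prefactor using (\ref{eq:epscond1})--(\ref{eq:epscond2}), leaves at most $\epsilon/[d(d+1)]$ per side and hence $\le\epsilon$ in total.

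For (B), I would write $1-p_{\rm NB}(t_{\rm ter},\vec S)$ as the conditional probability that some component $S_i(\cdot)$ has crossed a barrier $U_i$ or $L_i$ before time $t_{\rm ter}$, given its terminal position. A union bound over assets and sides reduces this to first-passage probabilities of one-dimensional Brownian bridges with drift, for which the reflection principle yields exponential bounds of the form $\exp(-2(u_i-x_i)(u_i-\log S_{i,0}-(r-\sigma_i^2/2)t_{\rm ter})/(\sigma_i^2 t_{\rm ter}))$ and symmetrically at $l_i$. For $\vec x\in\Tilde{\Tilde{D}}$ one has $u_i-x_i\ge h_i/2>0$ and $u_i-\log S_{i,0}=\log(U_i/S_{i,0})$, so the exponent is again of order $(\log(U_i/S_{i,0}))^2/(\sigma_i^2 t_{\rm ter})$. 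Integrating against $\phi\cdot V$ and reusing the linear bound on $V$ from (A), the total contribution of (B) is bounded by $\epsilon$ under the same choice (\ref{eq:tter}).

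The main obstacle will be bookkeeping of constants rather than ingenuity: the apparently large margin $25/4$ appearing implicitly in (\ref{eq:tter}) must simultaneously absorb (i) the polynomial Mills-ratio prefactor for each Gaussian tail, (ii) the log-normal drift shift that appears when estimating $\int S_i\phi$ rather than $\int\phi$ — this is precisely where the factor $1-2r/\sigma_i^2$ in (\ref{eq:epscond1}) enters, via $r-\sigma_i^2/2=-\tfrac12\sigma_i^2(1-2r/\sigma_i^2)$ and the sign constraint $r<\sigma_i^2/2$ from Problem 1, (iii) the factor $2$ coming from the reflection principle in (B), and (iv) the shrinking from $\Tilde D$ to $\Tilde{\Tilde{D}}$ by $h_i/2$. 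Verifying that (\ref{eq:epscond1}) and (\ref{eq:epscond2}) are exactly the regime in which all four corrections absorb simultaneously, producing per-side errors of $\epsilon/[d(d+1)]$ and a total bound of $2\epsilon$, is the delicate step of the proof.
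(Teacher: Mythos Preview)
Your two-piece decomposition is the right one, and your treatment of (A) matches the paper's approach. The gap is in (B). You claim that for $\vec x\in\Tilde{\Tilde{D}}$ the Brownian-bridge exponent $2(u_i-x_i)(u_i-\log S_{i,0}-\cdots)/(\sigma_i^2 t_{\rm ter})$ is ``of order $(\log(U_i/S_{i,0}))^2/(\sigma_i^2 t_{\rm ter})$'', but this is false near the boundary of $\Tilde{\Tilde{D}}$: there one has only $u_i-x_i\ge h_i/2$, and $h_i=(u_i-l_i)/(n_{\rm gr}+1)\to 0$ as the grid is refined. Hence $1-p_{\rm NB}(t_{\rm ter},\vec S)$ is \emph{not} uniformly small on $\Tilde{\Tilde{D}}$; it is close to $1$ near the barriers, and you cannot bound $\int_{\Tilde{\Tilde{D}}}(1-p_{\rm NB})\phi V$ by a pointwise bound on $1-p_{\rm NB}$ alone.

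The paper repairs this by interposing an intermediate domain $D_{\rm half}:=\prod_i(\sqrt{L_iS_{i,0}},\sqrt{U_iS_{i,0}})$ at the geometric-mean levels. On $D_{\rm half}$ one has $u_i-x_i\ge\tfrac12\log(U_i/S_{i,0})$, so the bridge bound genuinely gives $1-p_{\rm NB}\le\epsilon/[\Tilde{\Tilde{A}}(d+1)]$ uniformly there (Lemma~\ref{lemma:phibound}), and $\int_{D_{\rm half}}(1-p_{\rm NB})\phi V\le\epsilon$ follows from your linear bound on $V$. On the complement $\mathbb{R}^d_+\setminus D_{\rm half}$, which contains the near-barrier strip where your argument breaks, the paper makes no attempt to control $p_{\rm NB}$; instead it bounds $\int\phi V$ directly by Gaussian tails (Lemmas~\ref{lemma:cutu}--\ref{lemma:VIntegBound}), since by the choice of $t_{\rm ter}$ the density $\phi(t_{\rm ter},\cdot)$ already has negligible mass outside each half-level. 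The coefficient $2/25$ rather than $8/25$ in (\ref{eq:tter}) is exactly what makes the Gaussian tails small already at $\sqrt{U_iS_{i,0}}$ rather than only at $U_i$. Your bookkeeping item (iv) about the $h_i/2$ shrinkage is thus a red herring: once you split at $D_{\rm half}$, the distinction between $\Tilde{\Tilde{D}}$ and $\Tilde{D}$ is immaterial because both contain $D_{\rm half}$.
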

\noindent The proof is given in Appendix \ref{sec:PrLemtter}.
Note that, in (\ref{eq:VIntDiff}), the region of the integral is slightly different from $\Tilde{D}$, the interior of the boundary in the $\vec{x}$ domain.
This is just for interpreting the finite-sum approximation of the integral as the midpoint rule and explained in the proof of Lemma \ref{lem:finSumApp}.

%
%

\section{Quantum method for derivative pricing by FDM\label{sec:QFDM}}

In this section, we finally present the quantum method for derivative pricing by FDM.
Our idea is calculating the present derivative price $V_0$ as (\ref{eq:V0ExpPrice}), the expected value of the price at the future time $t_{\rm ter}$.
As explained in Section \ref{sec:app}, we approximate (\ref{eq:V0ExpPrice}) as (\ref{eq:V0ExpPriceApp2}).
In fact, we have to approximate (\ref{eq:V0ExpPriceApp2}) further, since we obtain the derivative prices only on the grid points by solving PDE using FDM.
Therefore, we approximate (\ref{eq:V0ExpPriceApp2}) as
\begin{equation}
	V_0\approx e^{-rT}\sum_{k=1}^{N_{\rm gr}}p_k \Tilde{Y}_k(\tau_{\rm ter}),
\end{equation}
where $p_k$ is the existence probability of $\vec{x}(t_{\rm ter})$, the log prices of underlying assets at $t_{\rm ter}$, on the $k$-th grid point and explicitly defined soon.
In other words, we calculate
\begin{equation}
	V_0\approx e^{-rT}\vec{p}\cdot\vec{\Tilde{Y}}(\tau_{\rm ter}), \label{eq:pdotYtil}
\end{equation}
where $\vec{p}:=(p_1,...,p_{N_{\rm gr}})^T$. 
Hereafter, we discuss how to estimate this inner product.


\subsection{Generating the probability vector}

Firstly, let us discuss how to generate $\vec{p}$, a vector which represents $\Tilde{\phi}(t_{\rm ter},\vec{x})$, the probability distribution of $\vec{x}(t_{\rm ter})$, as a quantum state.
As we will see below, although we aim to generate a quantum state in which the amplitudes of basis states are proportional to $\Tilde{\phi}(t_{\rm ter},\vec{x})$, we can apply the method to generate a state in which amplitudes are square roots of probabilities\cite{Grover,Kaneko}, since $\Tilde{\phi}(t_{\rm ter},\vec{x})$ can be regarded as the square roots of the probability densities under another distribution.

Concretely speaking, we aim to generate the vector
\begin{eqnarray}
	\vec{p} &:=& (p_1,...,p_{N_{\rm gr}})^T, \nonumber \\
	p_{k} &:=& \Tilde{\phi}(t_{\rm ter},\vec{x})\prod_{i=1}^d h_i
\label{eq:pVec}
\end{eqnarray}
where $\Tilde{\phi}(t,\vec{x})$, the probability density of $\vec{x}(t)$, is explicitly given as
\begin{eqnarray}
	\Tilde{\phi}(t,\vec{x}) &:=& \frac{1}{(2\pi t)^{d/2}\left(\prod_{i=1}^d\sigma_i\right)\sqrt{\det \rho}}\exp\left(-\frac{1}{2}(\vec{x}-\vec{\mu})^T\Sigma^{-1}(\vec{x}-\vec{\mu})\right), \nonumber \\
	\vec{\mu}&:=&\left(\left(r-\frac{1}{2}\sigma_1^2\right)t,...,\left(r-\frac{1}{2}\sigma_d^2\right)t\right)^T \nonumber \\
	\Sigma&:=&(\sigma_i\sigma_j\rho_{ij})_{\substack{1\le i \le d \\ 1 \le j \le d}},
\end{eqnarray}
that is, the density of the $d$-dimensional normal distribution with the mean $\vec{\mu}$ and the covariance matrix $\Sigma$.
Actually, we generate this vector as a normalized quantum state, that is,
\begin{eqnarray}
	\ket{\bar{p}} &:=&\sum_{k=1}^{N_{\rm gr}} \frac{p_{k}}{P}\ket{k}, \nonumber \\
	P &:=& \|\vec{p}\| =\sqrt{\sum_{k=1}^{N_{\rm gr}} p_{k}^2}.
\end{eqnarray}
Here, note that $(\Tilde{\phi}(t,\vec{x}))^2$ is $\varphi(\vec{x})$ times a constant independent of $\vec{x}$, where
\begin{equation}
	\varphi(\vec{x}) :=  \frac{1}{(\pi t)^{d/2}\left(\prod_{i=1}^d\sigma_i\right)\sqrt{\det \rho}}\exp\left(-\frac{1}{2}(\vec{x}-\vec{\mu})^T\left(\frac{1}{2}\Sigma\right)^{-1}(\vec{x}-\vec{\mu})\right),
\end{equation}
is the probability density function for another $d$-dimensional normal distribution.
Therefore, $\ket{\bar{p}}$ is approximately the state $\ket{\varphi}$, where $\varphi(\vec{x})$ is encoded into the square roots of the amplitudes, that is,
\begin{equation}
	\ket{\varphi} := \frac{1}{\sqrt{Q}}\sum_{k=1}^{N_{\rm gr}} \sqrt{q_{k}}\ket{k}.
\end{equation}
Here,
\begin{equation}
	q_{k} := \int_{x^{(k_1)}_1}^{{x^{(k_1+1)}_1}}dx_1\cdots\int_{x^{(k_d)}_d}^{{x^{(k_d+1)}_d}}dx_d \varphi(t_{\rm ter},\vec{x}), \ {\rm for} \ k=\sum_{i=1}^d n_{\rm gr}^{d-i}k_i+1,k_i=0,1,...,n_{\rm gr}-1,
\end{equation}
which is close to $\varphi(t_{\rm ter},\vec{x}^{(k)})\prod_{i=1}^dh_i$, and
\begin{equation}
	Q := \int_{x^{(0)}_1}^{x^{(n_{\rm gr}+1)}_1}dx_1\cdots\int_{x^{(0)}_d}^{x^{(n_{\rm gr}+1)}_d}dx_d \varphi(t_{\rm ter},\vec{x}),
\end{equation}
which is close to 1.

	\begin{algorithm}[t]
		\caption{Generate $\ket{\bar{p}}$}
		\label{alg1}
		\begin{algorithmic}[1]
			\STATE Prepare $d$ $m_{\rm gr}$-qubit registers and initialize all qubits to $\ket{\bar{0}}$, which means the initial state is $\underbrace{\ket{0}\cdots\ket{0}}_d$.
			\FOR{$i =1$ to $d$}
			\FOR{$j = 1$ to $m_{\rm gr}$}
			\STATE Using $k_1,...,k_{i-1}$ indicated by the first, ..., $(i-1)$-th registers, respectively, and $k_i^{[1]},...,k_i^{[j-1]}$, the bits on the first, ..., $(j-1)$-th qubits of the $i$-th register, respectively,
			rotate the $j$-th qubit in the $i$-th register as \\
			\begin{equation}
			\ket{\bar{0}}\rightarrow \sqrt{f_{i,j}(k_1,...,k_{i-1};k_i^{[1]},...,k_i^{[j-1]})}\ket{\bar{0}}+\sqrt{1-f_{i,j}(k_1,...,k_{i-1};k_i^{[1]},...,k_i^{[j-1]})}\ket{\bar{1}}.
			\end{equation}
			This transforms the entire state into
			\begin{equation}
			\frac{1}{\sqrt{Q}}\sum_{k_1=0}^{n_{\rm gr}-1}\cdots\sum_{k_{i-1}=0}^{n_{\rm gr}-1}\sum_{k_i^{[1]}=0}^{1}\cdots\sum_{k_i^{[j]}=0}^{1}\sqrt{q_{i,j}(k_1,...,k_{i-1};k_i^{[1]},...,k_i^{[j]})}\ket{k_1}\cdots\ket{k_{i-1}}\ket{\Tilde{k}}\underbrace{\ket{0}\cdots\ket{0}}_{d-i},
			\end{equation}
			where $\Tilde{k}$ is an integer whose $m_{\rm gr}$-bit representation is $k_i^{[1]}\cdots k_i^{[j]} \underbrace{0\cdots 0}_{m_{\rm gr}-j}$.
			\ENDFOR
			\ENDFOR
		\end{algorithmic}
	\end{algorithm}

Then, the task is boiled down to generating $\ket{\varphi}$.
This can be done by the multivariate extension of the method of \cite{Grover} for univariate distributions.
The concrete procedure is Algorithm \ref{alg1}.
Here, note that $\ket{k}$ can be decomposed as
\begin{equation}
	\ket{k}=\ket{k_1}\cdots\ket{k_d},
\end{equation}
where each $\ket{k_i}$ is a state on a $m_{\rm gr}$-qubit register (recall that $n_{\rm gr}=2^{m_{\rm gr}}$), and $\ket{k_i}$ can be further decomposed as
\begin{equation}
	\ket{k_i}=\Ket{\overline{k_i^{[i]}}}\cdots\Ket{\overline{k_i^{[m_{\rm gr}]}}},
\end{equation}
where we write the $n$-bit representation of $i\in\{0,1,...,2^n-1\}$ as $i^{[i]}\cdots i^{[n]}$ with $i^{[1]},...,i^{[n]}\in\{0,1\}$.
Besides, note that Algorithm \ref{alg1} requires us to compute
\begin{equation}
	f_{i,j}(k_1,...,k_{i-1};k_i^{[1]},...,k_i^{[j-1]}) := \frac{q_{i,j}(k_1,...,k_{i-1};k_i^{[1]},...,k_i^{[j-1]},0)}{q_{i,j-1}(k_1,...,k_{i-1};k_i^{[1]},...,k_i^{[j-1]})}
\end{equation}
for $i\in[d]$ and $j\in[m_{\rm gr}]$, where
\begin{eqnarray}
	&&q_{i,j}(k_1,...,k_{i-1};b_1,...,b_j) := \nonumber \\
	&& \quad 
	\begin{cases}
		\int_{x^L_{1,j}(b_1,...,b_j)}^{x^R_{1,j}(b_1,...,b_j)}dx_{1}\int_{x^{(0)}_2}^{x^{(n_{\rm gr}+1)}_2}dx_2\cdots\int_{x^{(0)}_d}^{x^{(n_{\rm gr}+1)}_d}dx_d\varphi(t,\vec{x}) & ; \ i=1 \\
		\int_{x^{(k_1)}_1}^{{x^{(k_1+1)}_1}}dx_1\cdots\int_{x^{(k_{i-1})}_{i-1}}^{{x^{(k_{i-1}+1)}_{i-1}}}dx_{i-1} \int_{x^L_{i,j}(b_1,...,b_j)}^{x^R_{i,j}(b_1,...,b_j)}dx_{i}\int_{x^{(0)}_{i+1}}^{x^{(n_{\rm gr}+1)}_{i+1}}dx_{i+1}\cdots\int_{x^{(0)}_d}^{x^{(n_{\rm gr}+1)}_d}dx_d\varphi(t,\vec{x}) & ; \ 2\le i \le d-1 \\
		\int_{x^{(k_1)}_1}^{{x^{(k_1+1)}_1}}dx_1\cdots\int_{x^{(k_{d-1})}_{d-1}}^{{x^{(k_{d-1}+1)}_{d-1}}}dx_{d-1}\int_{x^L_{d,j}(b_1,...,b_j)}^{x^R_{d,j}(b_1,...,b_j)}dx_{d}\varphi(t,\vec{x}) & ; i=d		
	\end{cases}, \nonumber \\
 &&
\end{eqnarray}
and 
\begin{eqnarray}
	&&x^L_{i,j}(b_1,...,b_j):=x^{(k_L)}_i, k_L:=
	\begin{cases}
		0 &; \ j=0 \\
		b_1\cdots b_j \underbrace{0\cdots 0}_{m_{\rm gr}-j} &; \ j=1,...,m_{\rm gr}
	\end{cases}
	\nonumber \\
	&&x^R_{i,j}(b_1,...,b_j):=x^{(k_R)}_i,k_R:=
	\begin{cases}
		n_{\rm gr} &; \ j=0 \\
		b_1\cdots b_j \underbrace{1\cdots 1}_{m_{\rm gr}-j}+1 &; \ j=1,...,m_{\rm gr}
	\end{cases},
	\label{eq:xLxR}
\end{eqnarray}
for $i\in[d]$, $j=0,1,...,m_{\rm gr}$ and $b_1,...,b_j\in\{0,1\}$ (note that $q_{1,0}=Q$).
Such a $f_{i,j}$ can be actually computed as follows.
Neglecting the contribution from the outside of the boundary, we see that
\begin{equation}
	f_{i,j}(k_1,...,k_{i-1};k_i^{[1]},...,k_i^{[j-1]}) \approx \frac{\int_{x^L_{i,j}(b_1,...,b_j)}^{\frac{1}{2}(x^L_{i,j}(b_1,...,b_j)+x^R_{1,j}(b_1,...,b_j))}dx_{i}\varphi^{\rm mar}_i(x_i;k_1,...,k_{i-1})}{\int_{x^L_{i,j}(b_1,...,b_j)}^{x^R_{i,j}(b_1,...,b_j)}dx_{i}\varphi^{\rm mar}_i(x_i;k_1,...,k_{i-1})}, \label{eq:fijApp}
\end{equation}
where
\begin{equation}
	\varphi^{\rm mar}_i(x_i;k_1,...,k_{i-1}):=\int_{-\infty}^{+\infty}dx_{i+1}\cdots\int_{-\infty}^{+\infty}dx_d \varphi((x^{(k_1)}_1,...,x^{(k_{i-1})}_{i-1},x_i,x_{i+1},...,x_d)^T)
\end{equation}
is the marginal density given by integrating out $x_{i+1},...,x_{d}$ and fixing $x_1,...,x_{i-1}$.
We can regard this as an univariate normal distribution density function of $x_i$ (times a constant independent of $x_i$), and therefore compute (\ref{eq:fijApp}) by the method presented in \cite{Kaneko}.

\quad \\

At the end of this subsection, let us evaluate the error of (\ref{eq:pdotYtil}) as an approximation for (\ref{eq:V0ExpPriceApp2}).
As preparation, we evaluate the normalization factor $P$ as follows:
\begin{eqnarray}
	P^2 &=& \sum_{k=1}^{N_{\rm gr}} \left(\phi_{\vec{x}}(t,\vec{x}^{(k)}_{\rm gr})\right)^2\left(\prod_{i=1}^{d}h_i\right)^2 \nonumber \\
	&\approx& \frac{\prod_{i=1}^{d}h_i}{(4\pi t)^{d/2}\left(\prod_{i=1}^d\sigma_i\right)\sqrt{\det\rho}} \int_{\mathbb{R}^d}d\vec{x} \varphi(\vec{x})  \nonumber \\
	&=& \frac{\prod_{i=1}^{d}h_i}{(4\pi t)^{d/2}\left(\prod_{i=1}^d\sigma_i\right)\sqrt{\det\rho}}   \nonumber \\
	&\approx& \frac{\prod_{i=1}^d\Delta_i}{(4\pi)^{d/2}N_{\rm gr}\sqrt{\det\rho}}, \label{eq:P2}
\end{eqnarray}
where
\begin{equation}
	\Delta_i:=\frac{u_i-l_i}{\sigma_i\sqrt{t_{\rm ter}}},i\in[d]. \label{eq:Deltai}
\end{equation}
Besides, we make an additional assumption.
\begin{assum}
	For $Y(\tau,\vec{x})$, the solution of (\ref{eq:PDE2}) and (\ref{eq:BounCond2}), and $\Tilde{\phi}(t,\vec{x})$, the probability density function of $\vec{x}(t)$ under the BS model (\ref{eq:SDE}), there exists $\eta\in\mathbb{R}$ such that
	\begin{equation}
		\forall i,j\in [d], \tau\in(0,T), \vec{x}\in \Tilde{D},
		\left|\frac{\partial^2}{\partial x_i \partial x_j}(\Tilde{\phi}(T-\tau,\vec{x})Y(\tau,\vec{x}))\right| < \eta. \label{eq:phiUDerivCond}
	\end{equation}
	\label{ass:phiYSm}
\end{assum}
\noindent Then, we obtain the following lemma, which guarantees us that we can approximate the integral by the finite sum over the grid points.
\begin{lemma}
	Consider Problem 1.
	Under Assumptions \ref{ass:YSm}, \ref{ass:payoffBound} and \ref{ass:phiYSm}, for a given $\epsilon\in\mathbb{R}_+$ satisfying (\ref{eq:epscond1}) and (\ref{eq:epscond2}), if we set
	\begin{equation}
		h_i < \Tilde{h}_i := \min \left\{\frac{(4\pi)^{d/8}(\det\rho)^{1/8}}{d\sigma_i(\prod_{i=1}^d\Delta_i)^{1/4}}\sqrt{\frac{\epsilon}{2\xi T}},\frac{(4\pi)^{d/8}(\det\rho)^{1/8}}{\sigma_i(\prod_{i=1}^d\Delta_i)^{1/4}}\sqrt{\frac{\epsilon}{\zeta dT}},\frac{1}{\left(\prod_{i=1}^d(u_i-l_i)\right)^{1/2}}\sqrt{\frac{24\epsilon}{d\eta}}\right\}, i\in[d] \label{eq:hxcond3}
	\end{equation}	
	the following holds
	\begin{equation}
		\left|e^{-rT}\vec{p} \cdot \vec{\Tilde{Y}}(\tau_{\rm ter})-V_0\right|<4\epsilon, \label{eq:finSumUB}
	\end{equation}
	where $\vec{p}$ is defined as (\ref{eq:pVec}), $\vec{\Tilde{Y}}$ is the solution of (\ref{eq:ODE}).
	\label{lem:finSumApp}
\end{lemma}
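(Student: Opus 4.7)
The plan is to decompose $|e^{-rT}\vec{p}\cdot\vec{\Tilde{Y}}(\tau_{\rm ter}) - V_0|$ by the triangle inequality into three pieces: (i) the FDM discretization error $|e^{-rT}\vec{p}\cdot(\vec{\Tilde{Y}}(\tau_{\rm ter})-\vec{Y}(\tau_{\rm ter}))|$; (ii) the quadrature error $|e^{-rT}\vec{p}\cdot\vec{Y}(\tau_{\rm ter}) - e^{-rT}\int_{\Tilde{\Tilde{D}}}\Tilde{\phi}(t_{\rm ter},\vec{x})Y(\tau_{\rm ter},\vec{x})\,d\vec{x}|$; and (iii) the gap between the integral and $V_0$, which is already bounded by $2\epsilon$ via Lemma \ref{lem:tter}. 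I would then show that the hypotheses on $h_i$ force (i) and (ii) to be at most $\epsilon/3$ and $\epsilon$ respectively, so summing yields strictly less than $4\epsilon$.

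For (ii), the region $\Tilde{\Tilde{D}}$ has been chosen precisely so that its $N_{\rm gr}$ sub-cells of side lengths $(h_1,\dots,h_d)$ are centered exactly on the grid points $\vec{x}^{(k)}$; hence $\vec{p}\cdot\vec{Y}(\tau_{\rm ter})$ is the $d$-dimensional midpoint-rule approximation of the integral with integrand $g(\vec{x}):=\Tilde{\phi}(t_{\rm ter},\vec{x})Y(\tau_{\rm ter},\vec{x})$. I would bound the midpoint error by iterating the one-dimensional midpoint rule one coordinate at a time and applying Assumption \ref{ass:phiYSm} to $\partial_i^2 g$, which yields a total error of at most $\frac{\eta}{24}\prod_j(u_j-l_j)\sum_i h_i^2$. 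The third branch of the bound on $h_i$ in (\ref{eq:hxcond3}) is tailored so that this is $\le\epsilon$.

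For (i), Cauchy--Schwarz yields $|\vec{p}\cdot(\vec{\Tilde{Y}}-\vec{Y})|\le\|\vec{p}\|\cdot\|\vec{\Tilde{Y}}-\vec{Y}\|$, and $\|\vec{p}\|=P$ is given by (\ref{eq:P2}). The idea is to invoke Lemma \ref{lem:FDMErr} with a rescaled tolerance $\epsilon':=\frac{(4\pi)^{d/4}(\det\rho)^{1/4}}{3(\prod_i\Delta_i)^{1/2}}\epsilon$, chosen so that the first two branches of (\ref{eq:hxcond3}) become exactly the conditions (\ref{eq:hxcond2}) with $\epsilon$ replaced by $\epsilon'$. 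The resulting bound $\|\vec{\Tilde{Y}}-\vec{Y}\| < \sqrt{N_{\rm gr}}\epsilon'$, combined with the explicit form of $P$, produces a clean cancellation of the $\sqrt{N_{\rm gr}}$ factor and of all the dimension-dependent constants, leaving a contribution of $\epsilon/3$ after using $e^{-rT}\le 1$. Summing (i)--(iii) then gives $\epsilon/3+\epsilon+2\epsilon<4\epsilon$.

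The main technical obstacle is step (i): the rescaling $\epsilon\mapsto\epsilon'$ has to be calibrated so that the dimension-dependent factors $\prod_i\Delta_i$, $(4\pi)^d$, and $\det\rho$ arising from $P$ cancel exactly with those appearing in the FDM tolerance, while the $\sqrt{N_{\rm gr}}$ in Lemma \ref{lem:FDMErr} is absorbed by the $1/\sqrt{N_{\rm gr}}$ in $P$. A secondary point is that (\ref{eq:P2}) is itself a Riemann-sum approximation of a Gaussian integral, so one must verify that the residual error in $P$ does not spoil the $\epsilon/3$ budget; this is reasonable because $(\Tilde{\phi})^2$ is smooth and of rapid Gaussian decay, so the associated Riemann-sum error is negligible compared to $P$ itself under the hypothesized smallness of $h_i$.
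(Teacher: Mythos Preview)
Your proposal is correct and follows essentially the same route as the paper's proof: the same three-term triangle decomposition, Lemma~\ref{lem:tter} for (iii), the midpoint-rule bound under Assumption~\ref{ass:phiYSm} for (ii), and Cauchy--Schwarz combined with Lemma~\ref{lem:FDMErr} (applied with a rescaled tolerance) and the expression~(\ref{eq:P2}) for $P$ to handle (i). Your budget $\epsilon/3+\epsilon+2\epsilon$ is in fact slightly sharper than the paper's $\epsilon+\epsilon+2\epsilon$, and your explicit flag that (\ref{eq:P2}) is itself only an approximation is a point the paper leaves implicit.
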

\noindent The proof is given in Appendix \ref{sec:PrLemFinSumApp}.

\subsection{Generating the derivative price vector}

Next, let us consider how to generate $\vec{V}$, the vector which encodes the grid derivative prices at $t_{\rm ter}$.
Precisely speaking, since we solve (\ref{eq:ODE}), we actually obtain the vector $\vec{\Tilde{Y}}$, which encodes the approximations of $Y(\tau_{\rm ter}, \vec{x})$ on the grid points.
Furthermore, by the algorithm presented in Section \ref{sec:QAlgODE}, we obtain not $\vec{\Tilde{Y}}$ itself but some quantum state like (\ref{eq:outputBerryAlgo}), which contains a state corresponding to $\vec{\Tilde{Y}}$ along with a garbage state.

For the precise discussion, let us firstly make some assumptions in order to satisfy preconditions to use the quantum algorithm.
The first one is as follows:
\begin{assum}
	$\vec{C}(\tau)$ in (\ref{eq:ODE}) is independent of $\tau$. \label{ass:constC}
\end{assum}
\noindent Then, hereafter, we simply write $\vec{C}(\tau)$ as $\vec{C}$.
We make this assumption in order to fit the current setting to \cite{Berry2}, which considered solving (\ref{eq:ODEEx}) for constant $A$ and $\vec{b}$ (note that $F$ in (\ref{eq:ODE}) is constant).
Although $\vec{C}(\tau)$ is not generally time-independent, the assumption is satisfied in some cases:
\begin{itemize}
	\item In some cases, a derivative is far in-the-money for a party at some points on the boundary, and this means that the party would receive a constant payoff $K$ at the maturity with high probability.
	For example,
	\begin{itemize}
	\item The payoff is the cash-or-nothing type.
	\item The payoff is capped, that is, the payoff function takes the form of $f_{\rm pay}(\vec{S})=\min\{f(\vec{S}),K\}$ with some function $f(\vec{S})$.
	\end{itemize}
 	In these cases, we can approximate that $V(t,\vec{S})\approx e^{-r(T-t)}K$, which means that $Y(\tau,\vec{x})\approx K$, on the points.
 	
 	\item If a boundary corresponds to a knock-out barrier, $V(t,\vec{S})=0$ on it.
\end{itemize}
Of course, there are many cases where $\vec{C}(\tau)$ is time-dependent, and it is desirable to expend our method to such cases.
We leave this as a future work.

The second assumption is as follows:
\begin{assum}
	For $F$ in (\ref{eq:F}), the following oracles $O_{F,1}$ and $O_{F,2}$ are available:
	\begin{equation}
		O_{F,1}:\ket{j}\ket{l} \mapsto \ket{j}\ket{\nu_F(j,l)},
	\end{equation}
	where $j\in [N_{\rm gr}]$, $l\in [s_F]$, $s_F$ is the sparsity of $F$, and $\nu(j,l)$ is the column index of the $l$-th nonzero entry in the $j$-th row,
	\begin{equation}
		O_{F,2}:\ket{j}\ket{k}\ket{z} \mapsto \ket{j}\ket{k}\ket{z\oplus F_{jk}},
	\end{equation}
	where $j,k\in [N_{\rm gr}]$ and $z\in \mathbb{R}$.
	Besides, for $\vec{\Tilde{f}}_{\rm pay}$ in (\ref{eq:IniODE}) and $\vec{C}$ in (\ref{eq:C}), we know their norms and the following oracles $O_{\vec{\Tilde{f}}_{\rm pay}}$ and $O_{\vec{C}}$ are available:
	\begin{equation}
		O_{\vec{\Tilde{f}}_{\rm pay}}:
		\begin{cases}
			\ket{\bar{0}}\ket{0} \mapsto \frac{1}{\|\vec{\Tilde{f}}_{\rm pay}\|}\ket{\bar{0}}\ket{\vec{\Tilde{f}}_{\rm pay}} \\
			\ket{\bar{1}}\ket{\psi} \mapsto \ket{\bar{1}}\ket{\psi} \ {\rm for \ any} \ \ket{\psi}
		\end{cases},
	\end{equation}
	\begin{equation}
		O_{\vec{C}}:
		\begin{cases}
			\ket{\bar{0}}\ket{\psi} \mapsto \ket{\bar{0}}\ket{\psi} \ {\rm for \ any} \ \ket{\psi} \\
			\ket{\bar{1}}\ket{0} \mapsto \frac{1}{\|\vec{C}\|}\ket{\bar{1}}\ket{\vec{C}}
		\end{cases},
	\end{equation}
	for $\vec{C}\ne 0$ and $O_{\vec{C}}$ is an identity operator for $\vec{C}= 0$.
	\label{ass:oracle}
\end{assum}
\noindent Since $F$ is explicitly given as (\ref{eq:F}), the sum of the Kronecker products of tridiagonal matrices, construction of $O_{F,1}$ and $O_{F,2}$ is straightforward.
On the other hand, $\vec{\Tilde{f}}_{\rm pay}$ and $\vec{C}$ are highly problem-dependent, and so are $O_{\vec{\Tilde{f}}_{\rm pay}}$ and $O_{\vec{C}}$.
Therefore, we just assume their availability in this paper, referring to some specific cases.

\begin{itemize}
	\item By the analogy with preparation of $\ket{\bar{p}}$, we see that we can prepare $\frac{\ket{\vec{\Tilde{f}}_{\rm pay}}}{\|\vec{\Tilde{f}}_{\rm pay}\|}$ if we can efficiently calculate
	\begin{equation}
			\frac{\int_{x^L_{i,j}(b_1,\cdots,b_j)}^{\frac{1}{2}(x^L_{i,j}(b_1,\cdots,b_j)+x^R_{i,j}(b_1,\cdots,b_j))}dx_{i}\int_{l_i}^{u_i}dx_{i+1}\cdots\int_{l_d}^{u_d}dx_d\left(\Tilde{f}_{\rm pay}((x^{(k_1)}_1,\cdots,x^{(k_{i-1})}_{i-1},x_i,x_{i+1},\cdots,x_d)^T)\right)^2}{\int_{x^L_{i,j}(b_1,\cdots,b_j)}^{x^R_{i,j}(b_1,\cdots,b_j)}dx_{i}\int_{l_i}^{u_i}dx_{i+1}\cdots\int_{l_d}^{u_d}dx_d\left(\Tilde{f}_{\rm pay}((x^{(k_1)}_1,\cdots,x^{(k_{i-1})}_{i-1},x_i,x_{i+1},\cdots,x_d)^T)\right)^2},
	\end{equation}
	where $x^L_{i,j}$ and $x^R_{i,j}$ are defined as (\ref{eq:xLxR}), for $i\in[d]$,$j\in\{0,1,\cdots,m_{\rm gr}\}$,$k_1,\cdots,k_d\in\{0,1,\cdots,n_{\rm gr}-1\}$ and $b_1,\cdots,b_j\in\{0,1\}$.
	Although it is difficult to analytically calculate this in general, there are some cases where it is possible.
	An example is the case where $f_{\rm pay}$ depends on only one underlying asset price, say $S_1$ (and other assets are relevant to the barrier), and has a simple function form, e.g. $f_{\rm pay}(\vec{S})=\max\{S_1-K,0\}$.
	
	\item If all boundaries correspond to knock-out barriers, $\vec{C}=\vec{0}$, and therefore $O_{\vec{C}}$ is just an identity operator.
\end{itemize}

Then, we obtain the following lemma, whose proof is presented in Appendix \ref{sec:PrCompOnePsi}.

\begin{lemma}
	Consider the ODE system (\ref{eq:ODE}).
	Assume that Assumptions \ref{ass:YSm}, \ref{ass:payoffBound}, \ref{ass:phiYSm}, \ref{ass:constC} and \ref{ass:oracle} are satisfied.
	Let $\epsilon$ be any positive real number satisfying (\ref{eq:epscond1}) and (\ref{eq:epscond2}), and $\epsilon^\prime$ be any positive real number.
	Then, there exists a quantum algorithm that produces a state $\ket{\Tilde{\Psi}}$ $\epsilon^\prime$-close to
	\begin{equation}
		\ket{\Psi} := \frac{1}{\sqrt{\braket{\Psi_{\rm gar}|\Psi_{\rm gar}}+(p+1)\|\vec{\Tilde{Y}}(\tau_{\rm ter})\|^2}}\left(\ket{\Psi_{\rm gar}}+\sum_{j=p(k+1)}^{p(k+2)}\ket{j}\ket{\vec{\Tilde{Y}}(\tau_{\rm ter})}\right), \label{eq:outputBerryAlgoFDM}
	\end{equation}
	where $\vec{\Tilde{Y}}(\tau_{\rm ter})$ is a vector satisfying (\ref{eq:finSumUB}), using
	\begin{equation}
		O\left(\mathcal{C}\times {\rm poly}\left(\log \left(\frac{\mathcal{C}}{\epsilon^\prime}\right)\right)\right) 
		 \label{eq:compQODEFDM}
	\end{equation}
	queries to $O_{F,1}$, $O_{F,2}$, $O_{\vec{\Tilde{f}}_{\rm pay}}$, and $O_{\vec{C}}$.
	Here,
	\begin{equation}
		\mathcal{C} := \max \left\{\frac{\sqrt{\prod_{i=1}^d \Delta_i}d^2 \Xi \sigma^2_{\rm max}\tau_{\rm ter}}{(4\pi)^{d/4}(\det \rho)^{1/4}},d\eta\prod_{i=1}^d(u_i-l_i)\right\}\times \frac{\kappa_Vd^4\sigma^2_{\rm max}\tau_{\rm ter}}{\epsilon},
	\end{equation}
	$\kappa_V = \|V\|\cdot\|V^{-1}\|$ is the condition number of $V$ which diagonalizes $F$ (i.e. $V FV^{-1}$ is a diagonal matrix), $\sigma_{\rm max}:=\max_{i\in[d]}\sigma_i$, $\Xi:=\max\{\xi,\zeta/d\}$, $\tau_{\rm ter}:=T-t_{\rm ter}$, $t_{\rm ter}$ is defined as (\ref{eq:tter}), $p:=\lceil \tau_{\rm ter}\|F\| \rceil$, $k:=\lfloor 2\log\Omega/\log(\log\Omega)\rfloor$, $\Omega=70g\kappa_Vp^{3/2}(\|\vec{f}_{\rm pay}\|+T\|\vec{C}\|)/\epsilon\|\vec{\Tilde{Y}}(\tau_{\rm ter})\|$, and $\ket{\Psi_{\rm gar}}$ is an unnormalized state which takes the form of $\ket{\Psi_{\rm gar}}=\sum_{j=0}^{p(k+1)-1}\ket{j}\ket{\psi_j}$ with some unnormalized states $\ket{\psi_0},\ket{\psi_1},...,\ket{\psi_{p(k+1)-1}}$ and satisfies
	\begin{equation}
	\braket{\Psi_{\rm gar}|\Psi_{\rm gar}}=O(g^2(p+1)\|\vec{\Tilde{Y}}(\tau_{\rm ter})\|^2) \label{eq:PsiTrashNorm}
	\end{equation}
	with $g := \max_{\tau\in[0,\tau_{\rm ter}]}\|\vec{\Tilde{Y}}(\tau)\|/\|\vec{\Tilde{Y}}(\tau_{\rm ter})\|$.
	\label{lem:CompOnePsi}
\end{lemma}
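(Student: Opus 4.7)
The plan is to apply Theorem~\ref{th:QODE} directly to the FDM ODE system (\ref{eq:ODE}), with the correspondence $A\leftrightarrow F$, $T\leftrightarrow \tau_{\rm ter}$, $\vec{x}_{\rm ini}\leftrightarrow \vec{\Tilde{f}}_{\rm pay}$, $\vec{b}\leftrightarrow \vec{C}$; Assumption~\ref{ass:constC} makes $\vec{C}$ time-independent as the Berry algorithm requires. Provided the grid spacings $h_i$ are chosen just below the thresholds (\ref{eq:hxcond3}), Lemma~\ref{lem:finSumApp} already guarantees that the resulting $\vec{\Tilde{Y}}(\tau_{\rm ter})$ satisfies (\ref{eq:finSumUB}), so the output-state form (\ref{eq:outputBerryAlgoFDM}) and the garbage-norm bound (\ref{eq:PsiTrashNorm}) are inherited verbatim from the conclusion of Theorem~\ref{th:QODE} by relabelling $\vec{x}(T)\mapsto \vec{\Tilde{Y}}(\tau_{\rm ter})$.

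First I would verify the preconditions on $F$. For the spectrum, decompose $F$ into its symmetric and antisymmetric parts: $F^{\rm 2nd}$ is symmetric (since $D^{\rm 2nd}$ is symmetric and $D^{\rm 1st}\otimes D^{\rm 1st}$ is symmetric even though $D^{\rm 1st}$ itself is antisymmetric), while $F^{\rm 1st}$ is antisymmetric. The symmetric part is the natural FDM discretization of the strictly elliptic operator $\tfrac12\sum\sigma_i\sigma_j\rho_{ij}\partial_i\partial_j$, and a standard Kronecker-product spectral computation using the known eigenvalues of $D^{\rm 2nd}$ confirms $F^{\rm 2nd}\preceq 0$. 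Since the antisymmetric part contributes only purely imaginary values to $v^\dagger Fv$, we obtain $\mathrm{Re}(\lambda_j)\le 0$ for every eigenvalue of $F$. The sparsity is read off directly from (\ref{eq:F}): single-coordinate tridiagonal blocks contribute $O(d)$ nonzeros per row and mixed-derivative tensor blocks contribute $O(d^2)$, so $s_F=O(d^2)$. The oracles $O_{F,1},O_{F,2},O_{\vec{\Tilde{f}}_{\rm pay}},O_{\vec{C}}$ are furnished by Assumption~\ref{ass:oracle}.

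Next I would evaluate the query count $O(\kappa_V s_F\tau_{\rm ter}\|F\|\cdot\mathrm{poly}\log(\cdots))$ from Theorem~\ref{th:QODE}. Using $\|D^{\rm 1st}\|\le 2$, $\|D^{\rm 2nd}\|\le 4$, $|\rho_{ij}|\le 1$ and submultiplicativity of the Kronecker product, one gets $\|F\|=O(d^2\sigma_{\rm max}^2/h_{\rm min}^2)$ where $h_{\rm min}:=\min_i h_i$. Setting each $h_i$ at the largest value permitted by (\ref{eq:hxcond3}) yields
\[
\frac{1}{h_{\rm min}^2}=O\!\left(\max\!\left\{\frac{d^2\sigma_{\rm max}^2(\prod_i\Delta_i)^{1/2}\Xi T}{(4\pi)^{d/4}(\det\rho)^{1/4}\epsilon},\;\frac{d\eta\prod_i(u_i-l_i)}{\epsilon}\right\}\right),
\]
and multiplying by $\kappa_V s_F\tau_{\rm ter}=O(\kappa_V d^2\tau_{\rm ter})$ collapses precisely to the stated $\mathcal{C}$, with $T$ replaced by $\tau_{\rm ter}$ in the first branch (a harmless tightening since $\tau_{\rm ter}\le T$). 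The $\mathrm{poly}\log$ prefactor carries over because its argument equals $\mathcal{C}/\epsilon'$ up to log-absorbable constants.

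The main obstacle is arithmetic bookkeeping rather than a conceptual barrier: lining up the Kronecker-product upper bound on $\|F\|$ with the multi-branch bound (\ref{eq:hxcond3}) so that the somewhat awkward closed form of $\mathcal{C}$ drops out cleanly. A secondary subtlety is confirming diagonalizability of $F$ and tracking the non-normal interaction between $F^{\rm 2nd}$ and the antisymmetric $F^{\rm 1st}$; this is reflected in the appearance of the condition number $\kappa_V$ of the diagonalizer in the final complexity, which is not necessarily $O(1)$ precisely because $F$ is not normal.
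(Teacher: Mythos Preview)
Your proposal is correct and mirrors the paper's own proof: apply Theorem~\ref{th:QODE} to (\ref{eq:ODE}), choose each $h_i=\Theta(\Tilde h_i)$ at the threshold (\ref{eq:hxcond3}), bound the sparsity of $F$ by $O(d^2)$ and $\|F\|$ via $\|D^{\rm 1st}\|=2,\ \|D^{\rm 2nd}\|=4$ and submultiplicativity, then multiply out to recover~$\mathcal C$. The only cosmetic difference is that you explicitly verify ${\rm Re}(\lambda_j)\le 0$ via the symmetric/antisymmetric split of $F$, whereas the paper handles this separately by showing $\mu(F)<0$ (the lemma preceding the proof of Lemma~\ref{lem:FDMErr}, citing \cite{Gonzalez-Pinto}) and then tacitly relies on it here; your field-of-values argument is essentially the same idea and is arguably tidier to include inline.
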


\subsection{Proposed algorithm}

Finally, based on the above discussions, we present the quantum method to calculate the present derivative price $V_0$.
Our strategy is calculating this as (\ref{eq:pdotYtil}).
More concretely, we aim to subtract the information of $\vec{p}\cdot\vec{\Tilde{Y}}(\tau_{\rm ter})$ from $\ket{\Psi}$ in (\ref{eq:outputBerryAlgoFDM}), the output state of the algorithm of \cite{Berry2}. 

In order to do this, we first modify the algorithm slightly.
That is, we aim to solve not (\ref{eq:LSBerry}) but the following one by the QLS algorithm:
\begin{equation}
	\Tilde{C}_{m,k,p}(Fh_t)\vec{X}=\vec{e}_0\otimes\vec{\Tilde{f}}_{\rm pay}+h_t\sum_{i=0}^{m-1}\vec{e}_{i(k+1)+1}\otimes\vec{C}+\sum_{i=1}^{p+1}\vec{e}_{m(k+1)+p+i}\otimes\vec{\gamma}. \label{eq:LSMod}
\end{equation}
Here, $m,p,k$ are integers defined in the statement of Lemma \ref{lem:CompOnePsi}, $q:=m(k+1)+2p+1$, $h_t=\tau_{\rm ter}/m$, $\vec{X}\in\mathbb{R}^{N_{\rm gr}(q+1)}$, $\{\vec{e}_i\}_{i=0,1,...,q}$ is an orthonormal basis of $\mathbb{R}^{q+1}$, and $\vec{\gamma}:=(\gamma,...,\gamma)^T\in \mathbb{R}^{N_{\rm gr}}$ for some $\gamma \in \mathbb{R}_+$.
Hereafter, we make the following assumption on $\gamma$:
\begin{assum}
	We are given $\gamma\in\mathbb{R}_+$ satisfying
	\begin{equation}
		\frac{1}{2} \bar{Y}(\tau_{\rm ter}) < \gamma < 2\bar{Y}(\tau_{\rm ter}), \label{eq:gammacond}
	\end{equation}
	where
	\begin{equation}
	\bar{Y}(\tau_{\rm ter}) := \sqrt{\frac{1}{N_{\rm gr}}\sum_{k=1}^{N_{\rm gr}}(Y(\tau_{\rm ter},\vec{x}^{(k)}))^2}.
	\end{equation}
	\label{ass:gamma}
\end{assum}
\noindent This means that $\gamma$ is comparable with the root mean square of $Y(\tau_{\rm ter},\vec{x})$ on the grid points.
Besides, the $N_{\rm gr}(q+1)\times N_{\rm gr}(q+1)$ matrix $\Tilde{C}_{m,k,p}(Fh_t)$ is now defined as
\begin{eqnarray}
	\Tilde{C}_{m,k,p}(Fh_t)&:=&\sum_{j=0}^{q}\vec{e}_j\vec{e}_j^T\otimes I_{N_{\rm gr}}-\sum_{i=0}^{m-1}\sum_{j=1}^k\vec{e}_{i(k+1)+j}\vec{e}_{i(k+1)+j-1}^T\otimes \frac{1}{j}Fh_t \nonumber \\
	&& \quad -\sum_{i=0}^{m-1}\sum_{j=0}^k\vec{e}_{(i+1)(k+1)}\vec{e}_{i(k+1)+j}^T\otimes I_{N_{\rm gr}} - \sum_{j=m(k+1)+1}^{m(k+1)+p}\vec{e}_j\vec{e}_{j-1}^T\otimes I_{N_{\rm gr}},
\end{eqnarray}
or, equivalently,
\begin{equation}
	\Tilde{C}_{m,k,p}(Fh_t) =
	\begin{pmatrix}
		C_{m,k,p}(Fh_t) & 0 \\
		0 & I_{N_{\rm gr}(p+1)}
	\end{pmatrix}.
\end{equation}
Visually, (\ref{eq:LSMod}) is displayed as follows
\small
\begin{equation}
	\begin{pmatrix}
		I_{N_{\rm gr}}       &        &         &           & & & &  & & & &&&& \\
		-Fh_t/1 & I_{N_{\rm gr}}      &         &           &&&&&&&&&&& \\
		& \ddots & \ddots  &           &&&&&&&&&&& \\
		&        & -Fh_t/k & I_{N_{\rm gr}}         &&&&&&&&&&& \\
		-I_{N_{\rm gr}}      & \cdots & -I_{N_{\rm gr}}      & -I_{N_{\rm gr}}     & I_{N_{\rm gr}}      &         &&&&&&&&&\\
		&        &         &        & \ddots & \ddots  & &&&&&&&&\\
		&        &         &        &        & -Fh_t/1 & I_{N_{\rm gr}} &&&&&&&&\\
		&		 &         &        &        &         & \ddots & \ddots  & &&& &&&\\
		&		 &         &        &        &         &        & -Fh_t/k & I_{N_{\rm gr}}      &&& &&&\\
		&		 &         &        &        & -I_{N_{\rm gr}}      & \cdots & -I_{N_{\rm gr}}      & -I_{N_{\rm gr}}     & I_{N_{\rm gr}}       & & &&&\\
		&		 &         &        &        &         &        &         &        & -I_{N_{\rm gr}}      & I_{N_{\rm gr}}      & &&&\\
		&&		 &         &        &        &         &        &         &        &          \ddots & \ddots & &&&\\
		&		 &         &       & &        &         &        &         &        &          & -I_{N_{\rm gr}} & I_{N_{\rm gr}} &&&\\
		&		 &         &       & &        &         &        &         &        &          &  &  & I_{N_{\rm gr}} & & \\
		&		 &         &       & &        &         &        &         &        &          &  &  & & \ddots & \\
		&		 &         &       & &        &         &        &         &        &          &  &  & & & I_{N_{\rm gr}} \\
	\end{pmatrix}
	\vec{X}
	=
	\begin{pmatrix}
		\vec{\Tilde{f}}_{\rm pay} \\
		h_t\vec{C}\\
		0\\
		\vdots\\
		0\\
		\vdots\\
		h_t\vec{C}\\
		0\\
		\vdots\\
		0\\
		0\\
		\vdots\\
		0\\
		\vec{\gamma}\\
		\vdots\\
		\vec{\gamma}
	\end{pmatrix}.
\end{equation}
\normalsize
The solution of (\ref{eq:LSMod}) is
\begin{equation}
	\vec{X}=\sum_{i=0}^{m-1}\sum_{j=1}^k\vec{e}_{i(k+1)+j}\otimes\vec{\Tilde{\Tilde{Y}}}_{i,j} + \sum_{j=0}^p\vec{e}_{m(k+1)+j}\otimes\vec{\Tilde{\Tilde{Y}}}(\tau_{\rm ter}) + \sum_{j=1}^{p+1}\vec{e}_{m(k+1)+p+j}\otimes\vec{\gamma},
\end{equation}
for some vectors $\vec{\Tilde{\Tilde{Y}}}_{i,j},\vec{\Tilde{\Tilde{Y}}}(\tau_{\rm ter})\in \mathbb{R}^{N_{\rm gr}}$, and $\vec{\Tilde{\Tilde{Y}}}(\tau_{\rm ter})$ becomes close to $\vec{\Tilde{Y}}(\tau_{\rm ter})$.
Note that, in $\vec{X}$, $\vec{\Tilde{\Tilde{Y}}}(\tau_{\rm ter})$ and $\vec{\gamma}$ are repeated $(p+1)$-times.
Then, applying the quantum algorithm, we can generate the quantum state $\ket{\Tilde{\Psi}_{\rm mod}}$ $\epsilon$-close to
\begin{eqnarray}
	\ket{\Psi_{\rm mod}} &:=& \frac{1}{Z}\left(\ket{\Psi_{\rm gar}}+\sum_{j=p(k+1)}^{p(k+2)}\ket{j}\ket{\vec{\Tilde{Y}}(\tau_{\rm ter})}+\sum_{j=p(k+2)+1}^{p(k+3)+1}\ket{j}\ket{\vec{\gamma}}\right), \nonumber \\
	Z&:=&\sqrt{\braket{\Psi_{\rm gar}|\Psi_{\rm gar}}+(p+1)\|\vec{\Tilde{Y}}(\tau_{\rm ter})\|^2+(p+1)N_{\rm gr}\gamma^2}.
	 \label{eq:outputBerryAlgoFDMMod}
\end{eqnarray}
Note that the query complexity for generating $\ket{\Tilde{\Psi}_{\rm mod}}$ is (\ref{eq:compQODEFDM}), similarly to $\ket{\Tilde{\Psi}}$.
This is because the complexity of the QLS algorithm depends only on the condition number and sparsity of the matrix and the tolerance\cite{Childs3}, and the condition number and sparsity of $\Tilde{C}_{m,k,p}(Fh_t)$ is same as $C_{m,k,p}(Fh_t)$.

\begin{algorithm}[t]
	\caption{Calculate $e^{-rT}\vec{p}\cdot\vec{\Tilde{Y}}(\tau_{\rm ter})$}
	\label{alg2}
	\begin{algorithmic}[1]
		\REQUIRE{\ \\
			$\gamma\in\mathbb{R}_+$ satisfying (\ref{eq:gammacond}).\\
			$\epsilon\in\mathbb{R}_+$ satisfying (\ref{eq:epscond1}) and (\ref{eq:epscond2}).\\
			$\epsilon_1\in\mathbb{R}_+$ satisfying (\ref{eq:eps1cond}).\\
			$\epsilon_2\in\mathbb{R}_+$ satisfying (\ref{eq:eps2cond}).\\ 
			$\epsilon_{\Tilde{\Psi}_{\rm mod}}\in\mathbb{R}_+$ satisfying (\ref{eq:epsPsiCond}). \\
			Accesses to the oracle $U_{\Tilde{\Psi}_{\rm mod}}$ such that (\ref{eq:UPsiMod}) and (\ref{eq:tilPsiCond}) and its inverse. \\
			Accesses to the oracle $U_{\Pi}$ such that (\ref{eq:Upi}) and its inverse.	}
		\STATE Estimate the amplitude of $\ket{0}\ket{0}$ in the state $U_{\Pi}^\dagger U_{\Tilde{\Psi}_{\rm mod}}\ket{0}\ket{0}$ by QAE with tolerance $\epsilon_1$. Let the output be $E_1$.
		\STATE Estimate the square root of the probability that we obtain either of $p(k+2)+1,...,p(k+3)+1$ when we measure the first register of $\ket{\Tilde{\Psi}_{\rm mod}}$ by QAE with tolerance $\epsilon_2$. Let the output be $E_2$.
		\STATE Output $e^{-rT}\frac{\gamma\sqrt{N_{\rm gr}}PE_1}{E_2}=:\omega$, where $P$ is given by (\ref{eq:P2}).
	\end{algorithmic}
\end{algorithm}

Using $\ket{\Tilde{\Psi}_{\rm mod}}$, we can estimate $\vec{p}\cdot\vec{\Tilde{Y}}(\tau_{\rm ter})$.
The outline is as follows.
First, we estimate the inner product
\begin{equation}
	\braket{\Pi | \Psi_{\rm mod}} = \frac{\sqrt{p+1}}{PZ}\vec{p}\cdot\vec{\Tilde{Y}}(\tau_{\rm ter}),
\end{equation}
where
\begin{equation}
	\ket{\Pi}:=\frac{1}{\sqrt{p+1}}\sum_{j=p(k+1)}^{p(k+2)}\ket{j}\ket{\bar{p}},
\end{equation}
by estimating the amplitude of $\ket{0}\ket{0}$ in $U_{\Pi}^\dagger U_{\Tilde{\Psi}_{\rm mod},\epsilon}\ket{0}\ket{0}$ using QAE.
Here, $U_{\Tilde{\Psi}_{\rm mod}}$ and $U_{\Pi}$ are the unitary operators such that
\begin{equation}
	U_{\Tilde{\Psi}_{\rm mod}}\ket{0}\ket{0}=\ket{\Tilde{\Psi}_{\rm mod}}, \label{eq:UPsiMod}
\end{equation}
and
\begin{equation}
U_{\Pi}\ket{0}\ket{0}=\ket{\Pi}, \label{eq:Upi}
\end{equation}
respectively.
Note that, if we can generate $\ket{\bar{p}}$, we can also generate $\ket{\Pi}$, since this is just a tensor product of $\frac{1}{\sqrt{p+1}}\sum_{j=p(k+1)}^{p(k+2)}\ket{j}$ and $\ket{\bar{p}}$.
Next, by QAE, we estimate the probability that we obtain $j\in\{p(k+2)+1,...,p(k+3)+1\}$ in the first register when we measure $\ket{\Tilde{\Psi}_{\rm mod}}$, and then obtain an estimation of $\gamma\sqrt{(p+1)N_{\rm gr}}/Z$.
Finally, using $E_1$ and $E_2$, the outputs of the first and second estimations, respectively, we calculate
\begin{equation}
	\frac{e^{-rT}\gamma\sqrt{N_{\rm gr}}PE_1}{E_2}
\end{equation}
as an estimation of $e^{-rT}\vec{p}\cdot\vec{\Tilde{Y}}(\tau_{\rm ter})$.
We present the detailed procedure is described as Algorithm \ref{alg2}.
Here, taking some $\epsilon\in \mathbb{R}_+$, we require the tolerances $\epsilon_1$ and $\epsilon_2$ in calculating $E_1$ and $E_2$ be
\begin{eqnarray}
	\epsilon_1 &=& O\left(\frac{(2\pi)^{d/2}\sqrt{\det \rho}\epsilon}{g\left(\prod_{i=1}^d\Delta_i\right)\bar{V}}\right), \label{eq:eps1cond} \\
	\epsilon_2 &=& O\left(\frac{\epsilon}{gV_0}\right) \label{eq:eps2cond}
\end{eqnarray}
respectively, where
\begin{eqnarray}
	\bar{V}(t_{\rm ter}) &:=& \sqrt{\frac{1}{N_{\rm gr}}\sum_{k=1}^{N_{\rm gr}}(V(t_{\rm ter},\vec{S}^{(k)}))^2}, \nonumber \\
	\vec{S}^{(k)}&:=&(S^{(k)}_1,...,S^{(k)}_d)^T:=(\exp({x^{(k_1)}_1}),...,\exp({x^{(k_d)}_d}))^T \ {\rm for} \ k=\sum_{i=1}^d n_{\rm gr}^{d-i}k_i+1,k_i=0,1,...,n_{\rm gr}-1  \nonumber \\
	\quad
\end{eqnarray}
is the root mean square of the derivative prices on the grid points at time $t_{\rm ter}$.
Besides, we require that
\begin{equation}
 \|\ket{\Tilde{\Psi}_{\rm mod}}-\ket{\Psi_{\rm mod}}\|<\epsilon_{\Psi}, \label{eq:tilPsiCond}
\end{equation}
where
\begin{equation}
	\epsilon_{\Psi} = O\left(\max\{\epsilon_1,\epsilon_2\}\right). \label{eq:epsPsiCond}
\end{equation}
These requirements guarantee the overall error to be smaller than $\epsilon$.
We formally state these points along with the complexity of the procedure in Theorem \ref{th:main}, whose proof is presented in Appendix \ref{sec:PrThMain}.

\begin{theorem}
	Consider Problem 1.
	Assume that Assumptions \ref{ass:YSm}, \ref{ass:payoffBound}, \ref{ass:phiYSm}, \ref{ass:constC}, \ref{ass:oracle} and \ref{ass:gamma} are satisfied.
	Then, for any $\epsilon\in\mathbb{R}_+$ satisfying (\ref{eq:epscond1}) and (\ref{eq:epscond2}), Algorithm \ref{alg2} outputs the real number $\omega$ such that
	\begin{equation}
		|\omega-V_0|=O(\epsilon) \label{eq:finalErr}
	\end{equation}
	with
	\begin{equation}
		O\left(\mathcal{D}\times {\rm poly}\left(\log \mathcal{D}\right)\right) 
		\label{eq:compFinal}
	\end{equation}
	queries to $O_{F,1}$, $O_{F,2}$, $O_{\vec{\Tilde{f}}_{\rm pay}}$, and $O_{\vec{C}}$, where
	\begin{equation}
		\mathcal{D} := \max \left\{\frac{\sqrt{\prod_{i=1}^d \Delta_i}d^2 \Xi \sigma^2_{\rm max}\tau_{\rm ter}}{(4\pi)^{d/4}(\det \rho)^{1/4}},d\eta\prod_{i=1}^d(u_i-l_i)\right\}\times\max\left\{\frac{\left(\prod_{i=1}^d\Delta_i\right)\bar{V}}{(2\pi)^{d/2}\sqrt{\det \rho}},V_0\right\}\times \frac{g\kappa_Vd^4\sigma^2_{\rm max}\tau_{\rm ter}}{\epsilon^2},
	\end{equation}
	$\sigma_{\rm max}:=\max_{i\in[d]}\sigma_i$, $\Xi:=\max\{\xi,\zeta/d\}$, $\tau_{\rm ter}:=T-t_{\rm ter}$, $t_{\rm ter}$ is defined as (\ref{eq:tter}), $\Delta_i$ is defined as (\ref{eq:Deltai}), $g := \max_{\tau\in[0,\tau_{\rm ter}]}	\|\vec{\Tilde{Y}}(\tau)\|/\|\vec{\Tilde{Y}}(\tau_{\rm ter})\|$, $\bar{V}(t_{\rm ter}) := \sqrt{\frac{1}{N_{\rm gr}}\sum_{k=1}^{N_{\rm gr}}(V(t_{\rm ter},\vec{S}^{(k)}))^2}$, and $\kappa_V = \|V\|\cdot\|V^{-1}\|$ is the condition number of $V$, which diagonalizes $F$.
	\label{th:main}
\end{theorem}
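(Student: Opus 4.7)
The plan is to decompose the total error $|\omega-V_0|$ into three $O(\epsilon)$ contributions: the approximation of $V_0$ by $e^{-rT}\vec{p}\cdot\vec{\Tilde{Y}}(\tau_{\rm ter})$, the closeness of $\ket{\Tilde{\Psi}_{\rm mod}}$ to $\ket{\Psi_{\rm mod}}$, and the QAE tolerance errors in $E_1$ and $E_2$. The first piece is handed to us by Lemma~\ref{lem:finSumApp}, so it remains to show that $\omega$ estimates $e^{-rT}\vec{p}\cdot\vec{\Tilde{Y}}(\tau_{\rm ter})$ within $O(\epsilon)$ and then to count the queries.

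First I would compute the idealized outputs $E_1^{\rm ideal}$, $E_2^{\rm ideal}$ of the two QAE calls, i.e.\ the values they would return with zero tolerance applied to the exact state $\ket{\Psi_{\rm mod}}$ of (\ref{eq:outputBerryAlgoFDMMod}). Using the orthogonality between the $\ket{\Psi_{\rm gar}}$, $\vec{\Tilde Y}(\tau_{\rm ter})$, and $\vec\gamma$ blocks, together with the product structure $\ket{\Pi}=(p+1)^{-1/2}\sum_{j=p(k+1)}^{p(k+2)}\ket{j}\ket{\bar p}$, one obtains $E_1^{\rm ideal}=\braket{\Pi|\Psi_{\rm mod}}=\sqrt{p+1}\,\vec{p}\cdot\vec{\Tilde Y}(\tau_{\rm ter})/(PZ)$ and $E_2^{\rm ideal}=\gamma\sqrt{(p+1)N_{\rm gr}}/Z$. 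Substituting these into the Algorithm~\ref{alg2} formula yields $e^{-rT}\gamma\sqrt{N_{\rm gr}}P\,E_1^{\rm ideal}/E_2^{\rm ideal}=e^{-rT}\vec{p}\cdot\vec{\Tilde Y}(\tau_{\rm ter})$, so $\gamma$ and the normalization $Z$ cancel cleanly out of the final estimate.

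Next I would propagate the perturbations through the quotient via the bound $|a/b-a'/b'|\le|a-a'|/|b'|+|a'||b-b'|/(|b'||b|)$. By Lemma~\ref{lem:CompOnePsi} and Assumption~\ref{ass:gamma} one has $Z=\Theta(g\sqrt{p+1}\,\|\vec{\Tilde Y}(\tau_{\rm ter})\|+\gamma\sqrt{(p+1)N_{\rm gr}})$, so $E_2^{\rm ideal}$ is bounded away from zero and the quotient is well conditioned; QAE accuracy together with (\ref{eq:tilPsiCond}) then give $|E_i-E_i^{\rm ideal}|=O(\epsilon_i+\epsilon_\Psi)$. Multiplying by the prefactor $e^{-rT}\gamma\sqrt{N_{\rm gr}}P$, substituting (\ref{eq:P2}) for $P$, and using $\|\vec{\Tilde Y}(\tau_{\rm ter})\|=\Theta(\sqrt{N_{\rm gr}}\,e^{rT}\bar V(t_{\rm ter}))$, the two summands scale as $g(\prod_i\Delta_i)\bar V/((2\pi)^{d/2}\sqrt{\det\rho})\cdot(\epsilon_1+\epsilon_\Psi)$ and $gV_0\cdot(\epsilon_2+\epsilon_\Psi)$ respectively, which are exactly the scales against which the tolerance prescriptions (\ref{eq:eps1cond})--(\ref{eq:epsPsiCond}) are calibrated, so both summands reduce to $O(\epsilon)$. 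The hard part here is this sensitivity analysis: one must simultaneously track how $g$, $\prod_i\Delta_i$, $\sqrt{\det\rho}$, $V_0$, and $\bar V$ enter $E_1^{\rm ideal}$, $E_2^{\rm ideal}$, $P$, and $Z$ so as to see that the prescribed tolerances exactly cancel these factors in the final estimate.

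For the complexity, QAE to tolerance $\epsilon_i$ makes $O(1/\epsilon_i)$ applications of $U_{\Tilde\Psi_{\rm mod}}$, $U_\Pi$, and their inverses, and each application of $U_{\Tilde\Psi_{\rm mod}}$ costs $O(\mathcal{C}\,{\rm poly}(\log(\mathcal{C}/\epsilon_\Psi)))$ queries to the $F$-oracles by Lemma~\ref{lem:CompOnePsi}, since $\Tilde{C}_{m,k,p}(Fh_t)$ has the same sparsity and condition number as $C_{m,k,p}(Fh_t)$. Summing the two QAE costs and substituting (\ref{eq:eps1cond})--(\ref{eq:epsPsiCond}) yields $O(\mathcal{C}\cdot\max\{1/\epsilon_1,1/\epsilon_2\}\cdot{\rm polylog})=O(\mathcal{D}\,{\rm poly}(\log\mathcal{D}))$, because $\max\{1/\epsilon_1,1/\epsilon_2\}$ contributes exactly the extra factor $(g/\epsilon)\max\{(\prod_i\Delta_i)\bar V/((2\pi)^{d/2}\sqrt{\det\rho}),V_0\}$ that distinguishes $\mathcal{D}$ from $\mathcal{C}$.
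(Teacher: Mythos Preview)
Your proposal is correct and follows essentially the same route as the paper's proof: both compute the ideal values $\braket{\Pi|\Psi_{\rm mod}}=\sqrt{p+1}\,\vec{p}\cdot\vec{\Tilde Y}(\tau_{\rm ter})/(PZ)$ and $\gamma\sqrt{(p+1)N_{\rm gr}}/Z$, propagate the $\epsilon_\Psi$ and QAE errors through the quotient, use $Z=O(g\sqrt{(p+1)N_{\rm gr}}\,\gamma)$ together with (\ref{eq:P2}) to see that (\ref{eq:eps1cond})--(\ref{eq:epsPsiCond}) force each summand to $O(\epsilon)$, invoke Lemma~\ref{lem:finSumApp} for the remaining $|e^{-rT}\vec{p}\cdot\vec{\Tilde Y}(\tau_{\rm ter})-V_0|$ piece, and finally multiply the per-call cost (\ref{eq:compQODEFDM}) by $O(\max\{1/\epsilon_1,1/\epsilon_2\})$ to land on $\mathcal{D}$. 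One small slip: $\|\vec{Y}(\tau_{\rm ter})\|=\sqrt{N_{\rm gr}}\,e^{r\tau_{\rm ter}}\bar V(t_{\rm ter})$ rather than $e^{rT}$, but this does not affect the $O(\cdot)$ conclusions.
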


Let us make some comments.
First, note that the upper bound of the complexity (\ref{eq:compFinal}) does not have any factor like $(1/\epsilon)^{{\rm poly}(d)}$, which means the tremendous speedup with respect to $\epsilon$ and $d$ compared with the classical FDM.
On the other hand, the exponential dependence on $d$ has not completely disappeared.
In fact, (\ref{eq:compFinal}) contains some constants to the power of $d$, and factors such as $\prod_{i=1}^d(u_i-l_i)$ and $\prod_{i=1}^d\Delta_i$, that is, the $d$-times product of $u_i-l_i$ or $\Delta_i$.
Recall that $u_i-l_i=\log (U_i/L_i)$ is the width between boundaries in the direction of $x_i$, the logarithm of the $i$-th underlying asset price, and $\Delta_i$ is that divided by $\sigma_i \sqrt{t_{\rm ter}}$, which roughly measures the extent of the probability distribution of $x_i$ at time $t_{\rm ter}$.
Therefore, these factors are just logarithmic factors to the power of $d$.

Second, we note that some calculation parameters are difficult to be determined in advance of pricing.
For example, although we have assumed that we know $\gamma$ such that (\ref{eq:gammacond}) holds in advance, it is difficult because we do not know $\bar{Y}(\tau_{\rm ter})$.
Besides, although we set $p=\lceil \|F\|\tau_{\rm ter}\rceil$ in using the algorithm of \cite{Berry2}, it is difficult to set $p$ to this specific value since we can upper bound $\|F\|$ but cannot calculate it precisely.
Even in \cite{Berry2}, the way to set $p=\lceil \|F\|\tau_{\rm ter}\rceil$ is not presented.
Similar discussion can be applied to other parameters: $k$, $h_i$, and so on. 
Fortunately, the algorithm works not only for such specific values of the parameters but also for comparable values.
The factor $1/2$ and $2$ in (\ref{eq:gammacond}) can be replaced with comparable values (say, $1/3$ and $3$), which results in change of the complexity only by some $O(1)$ factor.
$p$ larger than but comparable with $\lceil \|F\|\tau_{\rm ter}\rceil$ (say, $2\lceil \|F\|\tau_{\rm ter}\rceil$) results in comparable computational accuracy and complexity with those for $p=\lceil \|F\|\tau_{\rm ter}\rceil$.
In reality, we may perform computation for various parameter values and search the appropriate ranges of the parameters, for which the calculated derivative price seems to converge.
In the practical business, once we find a set of appropriate calculation parameters, we can continue to use it with periodic check of convergence, since we typically perform pricing many times in different but similar settings on model parameters (e.g. $\sigma_i$) and contract terms (e.g. barrier level).

\section{Summary\label{sec:Sum}}

In this paper, we studied how to apply the quantum algorithm of \cite{Berry2} for solving linear differential equations to pricing multi-asset derivatives by FDM.
As we explained, FDM is an appropriate method for pricing some types of derivatives such as barrier options, but suffers from the so-called curse of dimensionality, which makes FDM infeasible for large $d$, the number of underlying assets, since the dimension of the corresponding ODE system grows as $\epsilon^{{\rm poly}(d)}$ for the tolerance $\epsilon$, and so does the complexity.
We saw that the quantum algorithm for solving ODE systems, which provides the exponential speedup with respect to the dimensionality compared with classical methods, is beneficial also for derivative pricing.
In order to address the specific issue for derivative pricing, that is, extracting the present price from the output state of the quantum algorithm, we adopted the strategy that we calculate the present price as the expected value of the price at some appropriate future time $t_{\rm ter}$.
Then, we constructed the concrete calculation procedure, which is combination of the algorithm of \cite{Berry2} and QAE.
We also estimated the query complexity of our method, which does not have any dependence like $(1/\epsilon)^{{\rm poly}(d)}$ and shows tremendous speedup with respect to $\epsilon$ and $d$.

We believe that this paper is the first step for the research in this direction, but there remains many points to be improved.
First, we should consider whether the assumptions we made can be mitigated.
For example, although we assume that $\vec{C}(\tau)$ is time-independent (Assumption \ref{ass:constC}), some products do not fit to this condition: e.g., when we consider the upper boundary condition in the case of the European-call-like payoff $f_{\rm pay}(S)=\max\{S-K,0\}$ with some constant $K$, $V(t,S)\approx S-e^{-r(T-t)}K$ and therefore $Y(\tau,\vec{x})=e^{r\tau}V(t,\vec{S})$ cannot be regarded as constant for large $S$.
In order to omit this assumption, we might be able to extend the algorithm of \cite{Berry2} so that it can be applied to time-dependent $\vec{C}(\tau)$\footnote{Actually, the algorithm in \cite{Childs}, which is based on the spectral method, can deal with time-dependent $\vec{C}(\tau)$. However, in order to apply this algorithm, $V(t,\vec{S})$ must be smooth enough in the direction of $t$. On the other hand, in practice, the BS model parameters are often not smooth: for example, piece-wise constant volatilities are often used, which deteriorates smoothness of $V(t,\vec{S})$. In such a case, the algorithm of \cite{Berry2} is expected to be more suitable than that of \cite{Childs}, since the formal solution (\ref{eq:formSol}) is valid also for piece-wise constant model parameters. That is, if
\begin{equation}
	A=
	\begin{cases}
		A_1 & ; \ 0 \le t \le t_{\rm dis} \\
		A_2 & ; \ t_{\rm dis} < t \le T \\
	\end{cases},
	\vec{b}=
	\begin{cases}
		\vec{b}_1 & ; \ 0 \le t \le t_{\rm dis} \\
		\vec{b}_2 & ; \ t_{\rm dis} < t \le T \\
	\end{cases}, \nonumber
\end{equation}	
with some $t_{\rm dis}\in (0,T)$, $\vec{x}(t)$ can be written as
\begin{equation}
	\vec{x}(t)=
	\begin{cases}
		e^{A_1t}\vec{x}_{\rm ini}+(e^{A_1t}-I_N)A_1^{-1}\vec{b}_1 & ; \ 0 \le t \le t_{\rm dis} \\
		e^{A_2(t-t_{\rm dis})}\vec{x}(t_{\rm dis})+(e^{A_2(t-t_{\rm dis})}-I_N)A_2^{-1}\vec{b}_2 & ; \ t_{\rm dis} < t \le T \\
	\end{cases}. \nonumber
\end{equation}
}

Another important aspect is pricing early-exercisable derivatives.
American-type (resp. Bermudan-type) derivatives, in which either of parties can terminate the contract at any time (resp. at either of some predetermined dates) before the final maturity $T$, are widely traded and their pricing is important for banks.
FDM is suitable and often used for pricing such products, since it determines the derivative price backward from $T$ and can take into account early exercise.
However, it is not straightforward to apply the quantum method proposed in this paper to pricing early-exercisable products.
This is because, at exercisable date $t_{\rm exe}$, we need the operation $V(t_{\rm exe},\vec{S})=\max\{V(t_{{\rm exe}}+0,\vec{S}), f_{\rm pay}(\vec{S})\}$, where $V(t_{{\rm exe}}+0,\vec{S})$ is the derivative price right after $t_{\rm exe}$, but nonlinear operations on amplitudes such as the max function cannot be implemented on a quantum computer naively.

Including these points, we will investigate the possibility that the quantum FDM speedups pricing for the wider range of derivatives in the future work.

\section*{Acknowledgment}

This work was supported by MEXT Quantum Leap Flagship Program (MEXT Q-LEAP) Grant Number JPMXS0120319794.

\appendix

\section{Proofs\label{sec:Pr}}


\subsection{Proof of Lemma \ref{lem:FDMErr} \label{sec:PrLemFDMErr}}

First, we prove the following property of $F$ in (\ref{eq:F}).

\begin{lemma}
	For $F$ in (\ref{eq:F}), the logarithmic norm satisfies $\mu(F)<0$.
\end{lemma}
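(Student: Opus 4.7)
The plan is to exploit the variational characterization $\mu(F)=\lambda_{\max}((F+F^T)/2)$ for the spectral norm and show that the symmetric part of $F$ is negative definite. Since $D^{\rm 1st}$ is antisymmetric while $D^{\rm 2nd}$ is symmetric, every summand of $F^{\rm 1st}$ contains a single antisymmetric factor and is antisymmetric, whereas every Kronecker summand in $F^{\rm 2nd}$ is symmetric (two antisymmetric factors multiply to a symmetric tensor in the mixed terms, and the pure terms carry a single symmetric $D^{\rm 2nd}$). Hence $(F+F^T)/2 = F^{\rm 2nd}$, and it suffices to prove $\lambda_{\max}(F^{\rm 2nd}) < 0$.

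For this I would compute the quadratic form $v^T F^{\rm 2nd} v$ on an arbitrary nonzero $v \in \mathbb{R}^{N_{\rm gr}}$. For the pure second-derivative pieces I would use the standard identity
\begin{equation*}
v^T \bigl(I^{\otimes i-1}\otimes D^{\rm 2nd}\otimes I^{\otimes d-i}\bigr) v = -\|\delta_i^+ v\|^2,
\end{equation*}
where $\delta_i^+$ denotes the forward difference in direction $i$ under the Dirichlet-zero convention beyond the interior grid. For the mixed pieces, since $D^{\rm 1st}_{(i)} := I^{\otimes i-1}\otimes D^{\rm 1st}\otimes I^{\otimes d-i}$ and $D^{\rm 1st}_{(j)}$ commute (they act on different tensor factors) and are antisymmetric, one has $v^T D^{\rm 1st}_{(i)} D^{\rm 1st}_{(j)} v = -\langle D^{\rm 1st}_{(i)} v,\, D^{\rm 1st}_{(j)} v\rangle$, so the mixed-term contribution equals $-\sum_{i<j}\rho_{ij}\langle u_i,u_j\rangle$ with $u_i := (\sigma_i/(2h_i))\, D^{\rm 1st}_{(i)} v$.

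Next I would bring in the positive definiteness of $\rho$. The ``completing the square'' identity (using $\rho_{ii}=1$) reads
\begin{equation*}
-\sum_{i<j}\rho_{ij}\langle u_i,u_j\rangle = -\tfrac{1}{2}\sum_{i,j}\rho_{ij}\langle u_i,u_j\rangle + \tfrac{1}{2}\sum_i \|u_i\|^2,
\end{equation*}
the slicewise bound $\sum_{i,j}\rho_{ij}\langle u_i,u_j\rangle = \sum_k a_k^T\rho\, a_k \geq \lambda_{\min}(\rho)\sum_i \|u_i\|^2$ with $a_k := ((u_1)_k,\dots,(u_d)_k)^T$, and the elementary inequality $\|D^{\rm 1st}_{(i)} v\|^2 \leq 4\|\delta_i^+ v\|^2$ (from $v_{k+1}-v_{k-1}=(\delta^+ v)_k+(\delta^+ v)_{k-1}$ together with $(a+b)^2\leq 2(a^2+b^2)$) combine to yield
\begin{equation*}
v^T F^{\rm 2nd} v \;\leq\; -\lambda_{\min}(\rho)\sum_{i=1}^d \frac{\sigma_i^2}{2h_i^2}\|\delta_i^+ v\|^2.
\end{equation*}
Since $\lambda_{\min}(\rho)>0$ and the right-hand side is strictly negative whenever $v\neq 0$ (if every $\delta_i^+ v$ vanished, $v$ would be constant along each coordinate direction and hence zero by the boundary convention), we conclude $\lambda_{\max}(F^{\rm 2nd})<0$ and therefore $\mu(F)<0$.

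The main obstacle is the careful boundary bookkeeping behind $\|D^{\rm 1st}_{(i)} v\|^2 \leq 4\|\delta_i^+ v\|^2$: the first and last rows of $D^{\rm 1st}$ act asymmetrically, as $(D^{\rm 1st}v)_0 = v_1$ and $(D^{\rm 1st}v)_{n_{\rm gr}-1} = -v_{n_{\rm gr}-2}$ rather than as interior central differences, so one must be sure to include the endpoint forward differences $v_0 - 0$ and $0 - v_{n_{\rm gr}-1}$ in $\|\delta_i^+ v\|^2$ in order to absorb these terms. Once this indexing is handled correctly, the rest of the argument is routine algebra.
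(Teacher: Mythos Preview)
Your argument is correct and structurally parallel to the paper's, but you supply a self-contained proof of the key step where the paper simply cites an external result. Both routes first observe that $F^{\rm 1st}$ is antisymmetric and $F^{\rm 2nd}$ is symmetric; the paper then invokes subadditivity $\mu(F)\le\mu(F^{\rm 1st})+\mu(F^{\rm 2nd})$ together with $\mu(F^{\rm 1st})=0$, while you go through the variational formula $\mu(F)=\lambda_{\max}((F+F^T)/2)$ and note directly that $(F+F^T)/2=F^{\rm 2nd}$, which in fact gives the slightly sharper identity $\mu(F)=\mu(F^{\rm 2nd})$ rather than an inequality. The real difference is in establishing $\mu(F^{\rm 2nd})<0$: the paper appeals to Theorem~5.1 of \cite{Gonzalez-Pinto} for the negative definiteness of the discretized mixed-derivative operator, whereas you prove it by hand via the quadratic-form decomposition, the completing-the-square identity using $\rho_{ii}=1$, the slicewise bound $\sum_{i,j}\rho_{ij}\langle u_i,u_j\rangle\ge\lambda_{\min}(\rho)\sum_i\|u_i\|^2$, and the discrete inequality $\|D^{\rm 1st}_{(i)}v\|^2\le 4\|\delta_i^+v\|^2$. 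Your approach is more explicit and independent of outside literature; the paper's is shorter but leans on a reference. The boundary bookkeeping you flag as the main obstacle is indeed the only delicate point, and your treatment of it (extending $v$ by zero so that the endpoint rows of $D^{\rm 1st}$ become honest central differences) handles it correctly.
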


\begin{proof}
	Since the matrix $(\rho_{ij})_{1\le i,j\le d}$ is positive-definite, so is the matrix $\left(\sigma_i\sigma_j\rho_{ij}\right)_{1\le i,j\le d}$.
	Then, as mentioned in the proof of Theorem 5.1 in \cite{Gonzalez-Pinto}, $\mu(F^{\rm 2nd})<0$.
	Besides, since $F^{\rm 1st}$ is anti-symmetric, $\mu(F^{\rm 1st})=0$.
	Combining this,
	\begin{equation}
		\mu(F)\le \mu(F^{\rm 1st}) + \mu(F^{\rm 2nd})<0.
	\end{equation}
\end{proof}

Using this, we can prove Lemma \ref{lem:FDMErr}.

\begin{proof}[Proof of Lemma \ref{lem:FDMErr}]
	Because of (\ref{eq:UDerivCond}), for $i\in[d]$,
	\begin{equation}
		\left|\frac{\partial}{\partial x_i}Y(\tau,\vec{x})-\frac{Y(\tau,\vec{x}+h_i\vec{e}_i)-Y(\tau,\vec{x}-h_i\vec{e}_i)}{2h_i}\right|<\frac{\zeta}{6}h_i^2,
	\end{equation}
	and
	\begin{equation}
		\left|\frac{\partial^2}{\partial x_i^2}Y(\tau,\vec{x})-\frac{Y(\tau,\vec{x}+h_i\vec{e}_i)-2Y(\tau,\vec{x})+Y(\tau,\vec{x}-h_i\vec{e}_i)}{h_i^2}\right|<\frac{\xi}{12}h_i^2.
	\end{equation}
	hold, and, for $i,j\in [d]$ such that $i\ne j$,
	\begin{equation}
		\left|\frac{\partial^2}{\partial x_i\partial x_j}Y(\tau,\vec{x})-\frac{Y(\tau,\vec{x}+h_i\vec{e}_i+h_j\vec{e}_j)-Y(\tau,\vec{x}+h_i\vec{e}_i-h_j\vec{e}_j)-Y(\tau,\vec{x}-h_i\vec{e}_i+h_j\vec{e}_j)+Y(\tau,\vec{x}-h_i\vec{e}_i-h_j\vec{e}_j)}{4h_ih_j}\right|<\frac{\xi}{6}h_ih_j
	\end{equation}
	holds, where $\vec{e}_i,i\in[d]$ is the $d$-dimensional vector whose $i$-th element is 1 and the others are 0. 
	Therefore, we see that
	\begin{eqnarray}
		|\mathcal{L}Y(\tau,\vec{x}^{(k)})-(F\vec{Y}(\tau)+\vec{C}(\tau))_k|&<&\sum_{i=1}^d \frac{\xi}{24}\sigma_i^2h_i^2 + \sum_{i=1}^d\sum_{j=i+1}^d \frac{\xi}{6}\sigma_i\sigma_j|\rho_{ij}|h_ih_j + \sum_{i=1}^d \frac{\zeta}{6}\left|r-\frac{1}{2}\sigma_i^2\right|h_i^2 \nonumber \\
		&<&\frac{\epsilon}{T}, \label{eq:diffusionDiff}
	\end{eqnarray}	
	where $(F\vec{Y}(\tau)+\vec{C}(\tau))_k$ is the $k$-th element of $F\vec{Y}(\tau)+\vec{C}(\tau)$ and we used (\ref{eq:hxcond2}).
	Since
	\begin{equation}
		\frac{d}{d\tau}(\vec{\Tilde{Y}}(\tau)-\vec{Y}(\tau))=F\vec{\Tilde{Y}}(\tau)+\vec{C}(\tau)-\mathcal{L}\vec{Y}(\tau)=F(\vec{\Tilde{Y}}(\tau)-\vec{Y}(\tau))+F\vec{Y}(\tau)+\vec{C}(\tau)-\mathcal{L}\vec{Y}(\tau),
	\end{equation}
	where $\mathcal{L}\vec{Y}(\tau):=(\mathcal{L}Y(\tau,\vec{x}^{(1)}),...,\mathcal{L}Y(\tau,\vec{x}^{(N_{\rm gr})}))^T$, we finally obtain
	\begin{equation}
		\|\vec{\Tilde{Y}}(\tau)-\vec{Y}(\tau)\|\le  e^{\tau\mu(F)}\|\vec{\Tilde{Y}}(0)-\vec{Y}(0)\|+\int_0^{\tau} e^{(\tau-\tau^\prime)\mu(F)}\|F\vec{Y}(\tau^\prime)+\vec{C}(\tau^\prime)-\mathcal{L}\vec{Y}(\tau^\prime)\|d\tau^\prime\le \sqrt{N_{\rm gr}}\epsilon
	\end{equation}
	by (2.1) in \cite{Soderlind}.
	Here, we used $\vec{\Tilde{Y}}(0)=\vec{Y}(0),\mu(F)<0$ and (\ref{eq:diffusionDiff}).

\end{proof}

\subsection{Proof of Lemma \ref{lem:tter} \label{sec:PrLemtter}}

\subsubsection{Upper bound the probability that the underlying asset prices reach the boundaries}

In order to prove Lemma \ref{lem:tter}, we weed some subsidiary lemmas.
First, we prove the following one on the probability that the underlying asset prices reach the boundaries.

\begin{lemma}
	Let $\epsilon$ be a positive real number.
	For $S_i,i\in[d]$ in (\ref{eq:SDE}) and any $t\in(0,\Tilde{t}_{\rm u}]$, where
	\begin{equation}
		\Tilde{t}_{\rm u} := \frac{\left(\log \left(\frac{H_i}{S_{i,0}}\right)\right)^2}{\sigma^2\log\epsilon^{-1}}, \label{eq:ttil}
	\end{equation}
	the following holds
	\begin{equation}
			P \left(\max_{0\le s \le t}S_i(s) \ge H_i \ \middle| \ S_i(t)=s\right) \le \epsilon  \ {\rm if} \ s < \sqrt{S_{i,0}H_i}. \label{eq:PH}
	\end{equation}
	Similarly, for any $t\in(0,\Tilde{t}_{\rm l})$, where
	\begin{equation}
		\Tilde{t}_{\rm l} := \frac{\left(\log \left(\frac{S_{i,0}}{L_i}\right)\right)^2}{\sigma^2\log\epsilon^{-1}},
	\end{equation}
	the following holds
	\begin{equation}
			P \left(\min_{0\le s \le t}S_i(s) \le L_i \ \middle| \ S_i(t)=s\right) \le \epsilon  \ {\rm if} \ s > \sqrt{S_{i,0}L_i}. \label{eq:PL}
	\end{equation}
	\label{lemma:BBMax}
\end{lemma}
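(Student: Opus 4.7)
The plan is to reduce the claim to a classical computation about the maximum of a Brownian bridge. Applying It\^o's formula to $\log S_i$, the SDE (\ref{eq:SDE}) yields $\log S_i(s) = \log S_{i,0} + (r - \sigma_i^2/2)\,s + \sigma_i W_i(s)$; setting $X(s) := \log(S_i(s)/S_{i,0})$, $a := \log(H_i/S_{i,0}) > 0$, and $b := \log(s/S_{i,0})$, the event $\{\max_{0 \le s^\prime \le t} S_i(s^\prime) \ge H_i\}$ is exactly $\{\max_{0 \le s^\prime \le t} X(s^\prime) \ge a\}$, and the hypothesis $s < \sqrt{S_{i,0} H_i}$ translates cleanly into $b < a/2$.

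Next I would invoke the standard fact that, conditioned on its endpoint $X(t) = b$, the process $(X(s))_{0 \le s \le t}$ is a Brownian bridge from $0$ to $b$ over $[0,t]$ with diffusion coefficient $\sigma_i$; the drift $r - \sigma_i^2/2$ drops out of the conditional law, as is immediate from the Gaussian conditioning formula applied to the transition density of Brownian motion. The reflection-principle identity for the running maximum of a bridge then supplies the closed form
\begin{equation}
P\left(\max_{0 \le s^\prime \le t} X(s^\prime) \ge a \ \middle| \ X(t) = b\right) = \exp\!\left(-\frac{2a(a-b)}{\sigma_i^2 t}\right),
\end{equation}
valid whenever $a \ge \max(0, b)$, which is guaranteed by our hypothesis since $a > 0$ and $b < a/2 < a$.

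To finish, $b < a/2$ implies $a - b > a/2$, so the right-hand side is bounded above by $\exp(-a^2/(\sigma_i^2 t))$. Substituting $t \le \Tilde{t}_{\rm u} = (\log(H_i/S_{i,0}))^2/(\sigma_i^2 \log \epsilon^{-1})$ and using $a = \log(H_i/S_{i,0})$ yields $\exp(-a^2/(\sigma_i^2 t)) \le \exp(-\log \epsilon^{-1}) = \epsilon$, which is (\ref{eq:PH}). The bound (\ref{eq:PL}) follows by the symmetric argument applied to $-X$: the reflected Brownian bridge has the obvious analog of the max formula for its minimum, and the hypothesis $s > \sqrt{S_{i,0} L_i}$ plays the role of $b < a/2$ with $a$ replaced by $\log(S_{i,0}/L_i)$.

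The main obstacle I anticipate is conceptual rather than computational: one must apply the reflection formula to the conditional law given the endpoint (not to the unconditional drifted process), and must verify that the regime condition $a \ge \max(0, b)$ needed for the bridge formula is implied by the strict hypothesis on $s$, so that no separate case analysis is required. Once these points are in place, the bound is a direct substitution and the lower-boundary case requires no new ideas.
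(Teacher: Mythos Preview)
Your argument is correct and is essentially the paper's own proof: both pass to the log-price process, invoke the explicit Brownian-bridge maximum formula $\exp(-2a(a-b)/(\sigma_i^2 t))$ (the paper cites this as Theorem~3.1 of \cite{Beghin}), and then substitute the time bound $\Tilde{t}_{\rm u}$. Your write-up is in fact slightly more complete than the paper's, which only evaluates the formula at the extremal endpoint $s=\sqrt{S_{i,0}H_i}$ and leaves the monotonicity in $b$ implicit, whereas you make the inequality $a-b>a/2$ explicit.
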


\begin{proof}
	It is well-known (see e.g. \cite{Shreve}) that $S_i(t)$ can be written as
	\begin{equation}
		S_i(t) = S_{i,0} \exp\left(\sigma_i W_i(t) -\left(\frac{1}{2}\sigma_i^2-r\right)t\right). \label{eq:Si}
	\end{equation}
	Therefore, we see that
	\begin{equation}
		S_i(t)=H_i \Leftrightarrow B_i(t):=W_i(t) - \left(\frac{\sigma_i}{2}-\frac{r}{\sigma_i}\right)t = \frac{\log \left(\frac{H_i}{S_{i,0}}\right)}{\sigma_i}
	\end{equation}
	and
	\begin{equation}
		S_i(t)=\sqrt{S_{i,0}H_i} \Leftrightarrow B_i(t) = \frac{\log \left(\frac{H_i}{S_{i,0}}\right)}{2\sigma_i}.
	\end{equation}
	Using a formula on the distribution of the maximum of a Brownian bridge with drift (THEOREM 3.1 in \cite{Beghin}), we obtain
	\begin{equation}
		P \left(\max_{0\le s \le t}B_i(s) \ge \frac{\log \left(\frac{H_i}{S_{i,0}}\right)}{\sigma_i} \middle| B_i(t)=\frac{\log \left(\frac{H_i}{S_{i,0}}\right)}{2\sigma_i}\right)= \exp\left(-\frac{2}{t}\frac{\log \left(\frac{H_i}{S_{i,0}}\right)}{\sigma_i}\frac{\log \left(\frac{H_i}{S_{i,0}}\right)}{2\sigma_i}\right).
	\end{equation}
	Then, for $t\in(0,\Tilde{t}_{\rm u}]$, we obtain (\ref{eq:PH}).
	The later part of the statement is proven similarly.
\end{proof}

Using Lemma \ref{lemma:BBMax}, we can prove the following.

\begin{lemma}
	Consider $S_1,...,S_d$ in (\ref{eq:SDE}).
	Let $\epsilon$ be a positive real number.
	Then, for any $\vec{S}:=(s_1,...,s_d)^T\in D_{\rm half}$, where
	\begin{equation}
		D_{\rm half}:=(\sqrt{L_1S_{1,0}},\sqrt{U_1S_{1,0}})\times\cdots\times(\sqrt{L_dS_{d,0}},\sqrt{U_dS_{d,0}}) \label{eq:Dhalf},
	\end{equation}
	 and any $t\in (0,t_b]$, where
	\begin{equation}
		t_b := \min \left\{\frac{\left(\log \left(\frac{U_1}{S_{1,0}}\right)\right)^2}{\sigma_1^2\log\left(\frac{2d}{\epsilon}\right)},...,\frac{\left(\log \left(\frac{U_d}{S_{d,0}}\right)\right)^2}{\sigma_d^2\log\left(\frac{2d}{\epsilon}\right)},\frac{\left(\log \left(\frac{S_{1,0}}{L_1}\right)\right)^2}{\sigma_1^2\log\left(\frac{2d}{\epsilon}\right)},...,\frac{\left(\log \left(\frac{S_{d,0}}{L_d}\right)\right)^2}{\sigma_d^2\log\left(\frac{2d}{\epsilon}\right)}\right\},
	\end{equation}
	the following holds
	\begin{equation}
		p_{\rm NB}(t,\vec{S}) \ge 1-\epsilon,
	\end{equation}
	where $p_{\rm NB}(t,\vec{S})$ is defined below (\ref{eq:V0ExpPrice}).
	\label{lemma:phibound}
\end{lemma}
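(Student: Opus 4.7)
The plan is to obtain $1-p_{\rm NB}(t,\vec{S})\le\epsilon$ by a union bound over the $2d$ single-asset barrier-crossing events, each estimated via Lemma \ref{lemma:BBMax} applied with tolerance $\epsilon/(2d)$. Since every payoff-extinction event is contained in the union of barrier hits,
\begin{equation}
1-p_{\rm NB}(t,\vec{S})\le\sum_{i=1}^{d}\left[P\left(\max_{0\le s\le t}S_i(s)\ge U_i\,\middle|\,\vec{S}(t)=\vec{S}\right)+P\left(\min_{0\le s\le t}S_i(s)\le L_i\,\middle|\,\vec{S}(t)=\vec{S}\right)\right],
\end{equation}
so it suffices to control each of the $2d$ conditional probabilities on the right by $\epsilon/(2d)$.

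Next, I would reduce the vector-valued conditioning $\vec{S}(t)=\vec{S}$ in each term to the scalar conditioning $S_i(t)=s_i$ that Lemma \ref{lemma:BBMax} actually treats. The bridge residual $B_i(s):=W_i(s)-(s/t)W_i(t)$, which determines the path of $W_i$ on $[0,t]$ given its endpoint, satisfies $\mathrm{Cov}(B_i(s),W_j(t))=\rho_{ij}s-(s/t)\rho_{ij}t=0$ for every $j\in[d]$, so by joint Gaussianity $B_i$ is independent of $(W_j(t))_{j\ne i}$. Through the representation (\ref{eq:Si}) this transfers to $S_i$, and the conditional law of $S_i(\cdot)|_{[0,t]}$ given $\vec{S}(t)=\vec{S}$ coincides with the law given only $S_i(t)=s_i$.

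Finally, I would apply Lemma \ref{lemma:BBMax} with $\epsilon$ replaced by $\epsilon/(2d)$ to each of the $2d$ summands. The hypothesis $\vec{S}\in D_{\rm half}$ in (\ref{eq:Dhalf}) is precisely the set of positional constraints $s_i<\sqrt{S_{i,0}U_i}$ and $s_i>\sqrt{S_{i,0}L_i}$ required by the lemma, and substituting $\epsilon/(2d)$ into (\ref{eq:ttil}) (and its lower-barrier counterpart) produces the $2d$ time thresholds whose minimum is exactly $t_b$. A union bound then gives $1-p_{\rm NB}(t,\vec{S})\le 2d\cdot\epsilon/(2d)=\epsilon$, as claimed. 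The main obstacle is the second step: verifying that the vector conditioning reduces to scalar conditioning so that Lemma \ref{lemma:BBMax} applies coordinate-wise. Once the covariance computation above is in place, the remaining arithmetic in (\ref{eq:ttil}) and the union bound are routine.
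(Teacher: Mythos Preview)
Your proposal is correct and follows essentially the same approach as the paper: a union bound over the $2d$ single-coordinate barrier-crossing events, each controlled by Lemma~\ref{lemma:BBMax} with tolerance $\epsilon/(2d)$. You are in fact more careful than the paper, which silently replaces the vector conditioning $\vec{S}(t)=\vec{S}$ by the scalar conditioning $S_i(t)=s_i$; your covariance computation for the bridge residual $B_i(s)=W_i(s)-(s/t)W_i(t)$ supplies exactly the missing justification for that step.
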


\begin{proof}
	\begin{eqnarray}
		&& p_{\rm NB}(t,\vec{S}) \nonumber \\
		&\ge& P\left(\vec{S}(u) {\rm \ does \ not \ reach \ any \ boundaries \ by \ } t \ \middle| \ \vec{S}_{t}=\vec{S}\right) \nonumber \\
		&=& 1-P\left(\vec{S}(u) {\rm \ reaches \ either \ of \ boundaries \ by \ } t \ \middle| \ \vec{S}_{t}=\vec{S}\right) \nonumber \\
		&\ge& 1-\sum_{i=1}^dP\left(\max_{0\le u \le t}S_{i}(u) \ge H_i  \ \middle| \ S_i(t)=s_i\right)  - \sum_{i=1}^dP\left(\min_{0\le u \le t}S_{i}(u) \le L_i  \ \middle| \ S_i(t)=s_i\right) \nonumber \\
		&\ge& 1-d\times\frac{\epsilon}{2d}-d\times\frac{\epsilon}{2d}=1-\epsilon,
	\end{eqnarray}
	where we used Lemma \ref{lemma:BBMax} at the last inequality.
\end{proof}

\subsubsection{Upper bound the integral on the outside of the boundaries}

Besides, we need the following lemmas, in order to upper bound the contribution from the outside of the boundaries to the integral (\ref{eq:V0ExpPrice}).

\begin{lemma}
	Consider $S_i,i\in[d]$ in (\ref{eq:SDE}).
	Let $H$ be a real number such that $H>S_{i,0}$ and $\epsilon$ be a positive real number satisfying
	\begin{equation}
		\log\left(\frac{1}{2\epsilon}\right)>\frac{4}{5}\left(1+\frac{2r}{\sigma_i^2}\right)\log\left(\frac{H}{S_{i,0}}\right). \label{eq:cutcond}
	\end{equation}
	Then, for any $t\in(0,t_{\rm cu})$, 
	\begin{equation}
		\int^\infty_{H} s \phi_i(t,s)ds <\epsilon S_{i,0}e^{rt},  \int^\infty_{H} \phi_i(t,s)ds <\epsilon
	\end{equation}
	holds, where $\phi_i(t,s)$ is the probability density of $S_i(t)$ and
	\begin{equation}
		t_{\rm cu} := \frac{8\left(\log \left(\frac{H}{S_{i,0}}\right)\right)^2}{25\sigma_i^2 \log \left(\frac{1}{2\epsilon}\right)}. \label{eq:tcu}
	\end{equation}
	\label{lemma:cutu}
\end{lemma}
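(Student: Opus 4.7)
My plan is to reduce both tail integrals to complementary standard normal tail probabilities, apply a Gaussian tail bound, and then verify that the specific choice of $t_{\rm cu}$ together with the hypothesis (\ref{eq:cutcond}) makes the resulting exponent large enough.

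First, I would use the closed form (\ref{eq:Si}) to write $\log S_i(t) \sim \mathcal{N}(\log S_{i,0} + (r-\sigma_i^2/2)t, \sigma_i^2 t)$. A direct change of variable gives
\begin{equation}
\int_H^\infty \phi_i(t,s)\,ds = \Phi_c(\alpha_1(t)),\qquad \alpha_1(t):=\frac{\log(H/S_{i,0})-(r-\sigma_i^2/2)t}{\sigma_i\sqrt{t}},
\end{equation}
where $\Phi_c$ is the standard normal complementary CDF. For the weighted integral I complete the square inside the expectation $E[S_i(t)\mathbf 1_{S_i(t)\ge H}]$, absorbing the $e^{\sigma_i\sqrt{t}X}$ factor from (\ref{eq:Si}) into a shift of the integration variable. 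This yields
\begin{equation}
\int_H^\infty s\,\phi_i(t,s)\,ds = S_{i,0}e^{rt}\,\Phi_c(\alpha_2(t)),\qquad \alpha_2(t):=\alpha_1(t)-\sigma_i\sqrt{t}=\frac{\log(H/S_{i,0})-(r+\sigma_i^2/2)t}{\sigma_i\sqrt{t}}.
\end{equation}
Since $\alpha_2\le\alpha_1$, it suffices to bound $\Phi_c(\alpha_2)$ by $\epsilon$, after which the bound on $\Phi_c(\alpha_1)$ is free.

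Next I would apply the textbook tail bound $\Phi_c(u)\le \tfrac12 e^{-u^2/2}$ valid for $u\ge 0$, so the claim reduces to showing $\alpha_2(t)\ge \sqrt{2L}$ with $L:=\log(1/(2\epsilon))$ for every $t\in(0,t_{\rm cu})$. Setting $u:=\sigma_i\sqrt{t}$, $c:=\log(H/S_{i,0})$, and $r':=r+\sigma_i^2/2$, one computes $\alpha_2=c/u-r'u/\sigma_i^2$ and observes $\partial_u\alpha_2<0$, so $\alpha_2$ is strictly decreasing in $t$. Therefore the worst case is $t\uparrow t_{\rm cu}$, which via (\ref{eq:tcu}) corresponds to $u=u_{\max}=\tfrac{2\sqrt 2\, c}{5\sqrt L}$.

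The main substantive step, and the only point where the particular constants matter, is the final inequality at $u=u_{\max}$. Substituting gives
\begin{equation}
\alpha_2(u_{\max})-\sqrt{2L} = \frac{\sqrt L}{2\sqrt 2}-\frac{\sqrt 2\,c\,(1+2r/\sigma_i^2)}{5\sqrt L},
\end{equation}
which is nonnegative exactly when $L\ge \tfrac{4}{5}(1+2r/\sigma_i^2)c$, that is, precisely under hypothesis (\ref{eq:cutcond}). The same inequality also certifies $\alpha_2(u_{\max})>0$, justifying the use of the Gaussian bound. I expect the only delicate point to be carrying the constants $4/5$ and $8/25$ through the algebra without slack: both pieces $c/u$ and $r'u/\sigma_i^2$ must be handled simultaneously, since the drift term $-(r+\sigma_i^2/2)t$ in $\alpha_2$ is what prevents a purely monotone argument from giving the bound for free.
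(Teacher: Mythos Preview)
Your proposal is correct and follows essentially the same route as the paper: both reduce the two tail integrals to complementary normal probabilities (completing the square for the weighted integral to pull out the factor $S_{i,0}e^{rt}$), apply the Gaussian tail bound $\Phi_c(u)\le \tfrac12 e^{-u^2/2}$, and then check that (\ref{eq:cutcond}) together with (\ref{eq:tcu}) forces the exponent to exceed $\log(1/(2\epsilon))$. The only cosmetic difference is that the paper first bounds the drift contribution by $\tfrac15\log(H/S_{i,0})$ and then handles the remaining $\tfrac45\log(H/S_{i,0})$, whereas you evaluate $\alpha_2$ directly at $t=t_{\rm cu}$ and use monotonicity in $t$; the algebra collapses to the same inequality $L\ge \tfrac45(1+2r/\sigma_i^2)\log(H/S_{i,0})$ either way.
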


\begin{proof}
	Because of (\ref{eq:Si}) and the basic property of the Brownian motion, the probability density of $x_i(t)=\log S_i(t)$ is
	\begin{equation}
		\frac{1}{\sqrt{2\pi t}\sigma_i}\exp\left(-\frac{1}{2\sigma_i^2t}\left(x-\left(r-\frac{1}{2}\sigma_i^2\right)t\right)^2\right).
	\end{equation}
	Therefore, we see that
	\begin{eqnarray}
		&&\int^\infty_{H} s \phi_i(t,s)ds \nonumber \\
		&=& \int^\infty_{\log(H/S_{i,0})} e^x \frac{1}{\sqrt{2\pi t}\sigma_i}\exp\left(-\frac{1}{2\sigma_i^2t}\left(x-\left(r-\frac{1}{2}\sigma_i^2\right)t\right)^2\right)dx \nonumber \\
		&=& \frac{S_{i,0}}{\sqrt{2\pi t}\sigma_i}e^{rt}\int^\infty_{\log(H/S_{i,0})} \exp\left(-\frac{1}{2\sigma_i^2t}\left(x-\left(r+\frac{1}{2}\sigma_i^2\right)t\right)^2\right)dx \nonumber \\
		&<& \frac{S_{i,0}e^{rt}}{2}\exp\left(-\frac{1}{2\sigma_i^2t}\left(\log\left(\frac{H}{S_{i,0}}\right)-\left(r+\frac{\sigma_i^2}{2}\right)t\right)^2\right). \label{eq:tocyuu}
	\end{eqnarray}
	Here, we used
	\begin{equation}
	\frac{2}{\sqrt{\pi}} \int^{\infty}_{c} e^{-y^2} dy <e^{-c^2}, \label{eq:GauIntUB}
	\end{equation}
	which hold for any $c\in\mathbb{R}_+$.
	Besides, because of (\ref{eq:cutcond}) and (\ref{eq:tcu}),
	\begin{equation}
		\left(r+\frac{\sigma^2}{2}\right)t < \left(r+\frac{\sigma^2}{2}\right)\frac{8\left(\log \left(\frac{H}{S_{i,0}}\right)\right)^2}{25\sigma^2 \log \left(\frac{1}{2\epsilon}\right)}<\frac{1}{5}\log \left(\frac{H}{S_{i,0}}\right)
		\label{eq:tocyuu2}
	\end{equation}
	holds for $t\in(0,t_{\rm cu})$.
	Combining (\ref{eq:tcu}), (\ref{eq:tocyuu}) and (\ref{eq:tocyuu2}), we obtain
	\begin{equation}
		\int^\infty_{H} s \phi_i(t_{\rm cu},s)ds < \frac{S_{i,0}e^{rt}}{2}\exp\left(-\frac{1}{2\sigma_i^2t}\frac{16}{25}\left(\log \left(\frac{H}{S_{i,0}}\right)\right)^2\right)<\epsilon S_{i,0}e^{rt}
	\end{equation}
	for $t\in(0,t_{\rm cu})$.
	
	On the other hand,
	\begin{eqnarray}
		&&\int^\infty_{H} \phi_i(t_{\rm cu},s)ds \nonumber \\
		&=& \int^\infty_{\log(H/S_0)} \frac{1}{\sqrt{2\pi t}\sigma}\exp\left(-\frac{1}{2\sigma^2t}\left(x-\left(r-\frac{1}{2}\sigma^2\right)t\right)^2\right)dx \nonumber \\
		&<&\frac{1}{2}\exp\left(-\frac{1}{2\sigma^2t}\left(\log\left(\frac{H}{S_0}\right)-\left(r-\frac{\sigma^2}{2}\right)t\right)^2\right),
	\end{eqnarray}
	where we used (\ref{eq:GauIntUB}) again.
	Combining this and $\left(r-\frac{\sigma_i^2}{2}\right)t <\frac{1}{5}\log \left(\frac{H}{S_{i,0}}\right)$, which holds for $t\in(0,t_{\rm cu})$ because of (\ref{eq:tocyuu2}), we obtain
	\begin{equation}
		\int^\infty_{H} \phi(t_{\rm cu},s)ds < \frac{1}{2}\exp\left(-\frac{1}{2\sigma_i^2t}\frac{16}{25}\left(\log \left(\frac{H}{S_{i,0}}\right)\right)^2\right)<\epsilon.
	\end{equation}
	
\end{proof}

\begin{lemma}
	Consider $S_i,i\in[d]$ in (\ref{eq:SDE}).
	Let $L$ be a real number such that $L<S_{i,0}$ and $\epsilon$ be a positive real number satisfying
	\begin{equation}
		\log\left(\frac{1}{2\epsilon}\right)>\frac{4}{5}\left(1-\frac{2r}{\sigma_i^2}\right)\log\left(\frac{S_{i,0}}{L}\right). \label{eq:cutcond2}
	\end{equation}
	Then, for any $t\in(0,t_{\rm cl})$, 
	\begin{equation}
	\int^{L}_0 s \phi_i(t,s)ds <\epsilon S_{i,0}e^{rt},  \int^{L}_0 \phi_i(t,s)ds <\epsilon   
	\end{equation}
	holds, where $\phi_i(t,s)$ is the probability density of $S_i(t)$ and
	\begin{equation}
		t_{\rm cl} := \frac{8\left(\log \left(\frac{S_{i,0}}{L}\right)\right)^2}{25\sigma_i^2 \log \left(\frac{1}{2\epsilon}\right)}.  \label{eq:tcl}
	\end{equation}
	\label{lemma:cutl}
\end{lemma}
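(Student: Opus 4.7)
My plan is to mirror the proof of Lemma \ref{lemma:cutu} with the adjustments needed for the lower tail. Since $S_i(t)$ is log-normal by (\ref{eq:Si}), the density of $x=\log s$ is Gaussian with mean $(r-\sigma_i^2/2)t$ and variance $\sigma_i^2 t$. Substituting $x=\log s$ converts both target integrals into integrals of a Gaussian density over the half-line $(-\infty,\log(L/S_{i,0})]=(-\infty,-\log(S_{i,0}/L)]$. I will then use the symmetry $\int_{-\infty}^{-c}\frac{1}{\sqrt{2\pi}}e^{-y^2/2}\,dy=\int_{c}^{\infty}\frac{1}{\sqrt{2\pi}}e^{-y^2/2}\,dy$ to convert each lower tail into an upper tail, so that the Gaussian bound (\ref{eq:GauIntUB}) used in the proof of Lemma \ref{lemma:cutu} can be applied verbatim.

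For the first integral $\int_0^L s\,\phi_i(t,s)\,ds$, I complete the square in the exponent to pull out a factor $S_{i,0}e^{rt}$, leaving a Gaussian in $x$ with mean $(r+\sigma_i^2/2)t$ integrated up to $-\log(S_{i,0}/L)$. The distance from this upper limit to the mean is $\log(S_{i,0}/L)+(r+\sigma_i^2/2)t$, which is already at least $\log(S_{i,0}/L)\ge\tfrac45\log(S_{i,0}/L)$, so no extra condition on $t$ is needed at this step. Applying the Gaussian tail bound gives
\begin{equation}
\int_0^L s\,\phi_i(t,s)\,ds<\frac{S_{i,0}e^{rt}}{2}\exp\!\left(-\frac{1}{2\sigma_i^2 t}\left(\log\frac{S_{i,0}}{L}+\Bigl(r+\frac{\sigma_i^2}{2}\Bigr)t\right)^{2}\right)<\frac{S_{i,0}e^{rt}}{2}\exp\!\left(-\frac{16(\log(S_{i,0}/L))^{2}}{50\sigma_i^{2}t}\right),
\end{equation}
and choosing $t\le t_{\rm cl}$ as in (\ref{eq:tcl}) makes the right-hand side at most $\epsilon S_{i,0}e^{rt}$.

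For the second integral $\int_0^L \phi_i(t,s)\,ds$, the Gaussian has mean $(r-\sigma_i^2/2)t<0$ (the assumption $r<\sigma_i^2/2$ from Problem 1 is what makes this negative, shrinking the distance to the upper limit). The distance from $-\log(S_{i,0}/L)$ to this mean is $\log(S_{i,0}/L)-(\sigma_i^2/2-r)t$, so to salvage the factor $\tfrac45$ I need $(\sigma_i^2/2-r)t\le\tfrac15\log(S_{i,0}/L)$. Substituting $t\le t_{\rm cl}$ turns this into $\tfrac{4}{5}(1-2r/\sigma_i^2)\log(S_{i,0}/L)\le\log(1/(2\epsilon))$, which is precisely the hypothesis (\ref{eq:cutcond2}). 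With this in hand the same tail bound delivers $\int_0^L\phi_i(t,s)\,ds<\epsilon$.

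The only real subtlety I expect is bookkeeping with signs: $\log(L/S_{i,0})$ is negative and the drift term changes sign across the two integrals, so it would be easy to write an inequality the wrong way around. Pinning down where (\ref{eq:cutcond2}) enters (only in the probability bound, not in the first-moment bound) and tracking the $r<\sigma_i^2/2$ assumption through $(\sigma_i^2/2-r)t$ is the main place to be careful; beyond that the argument is a routine Gaussian-tail computation fully parallel to Lemma \ref{lemma:cutu}.
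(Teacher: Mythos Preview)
Your proposal is correct and follows essentially the same route as the paper's proof: substitute to a Gaussian in $x$, apply the tail bound (\ref{eq:GauIntUB}) after reflecting the lower tail to an upper tail, and use (\ref{eq:cutcond2}) only in the probability bound to control the drift term $(\sigma_i^2/2-r)t\le\tfrac15\log(S_{i,0}/L)$. The only cosmetic difference is that for the first-moment integral the paper drops the positive drift entirely (bounding $\log(S_{i,0}/L)+(r+\sigma_i^2/2)t\ge\log(S_{i,0}/L)$) rather than by $\tfrac45\log(S_{i,0}/L)$ as you do; both choices close under $t<t_{\rm cl}$.
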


\begin{proof}
	Similarly to (\ref{eq:tocyuu}), for $t\in(0,t_{\rm cl})$,
	\begin{eqnarray}
		&&\int^{L}_0 s \phi_i(t,s)ds \nonumber \\
		&=& \int^{\log(L/S_{i,0})}_{-\infty} e^x \frac{1}{\sqrt{2\pi t}\sigma_i}\exp\left(-\frac{1}{2\sigma_i^2t}\left(x-\left(r-\frac{1}{2}\sigma_i^2\right)t\right)^2\right)dx \nonumber \\
		&=& \frac{S_{i,0}}{\sqrt{2\pi t}\sigma_i}e^{rt}\int^{\log(L/S_{i,0})}_{-\infty} \exp\left(-\frac{1}{2\sigma_i^2t}\left(x-\left(r+\frac{1}{2}\sigma_i^2\right)t\right)^2\right)dx \nonumber \\
		&<& \frac{S_{i,0}e^{rt}}{2}\exp\left(-\frac{1}{2\sigma_{i}^2t}\left(\log\left(\frac{S_{i,0}}{L}\right)+\left(r+\frac{\sigma_i^2}{2}\right)t\right)^2\right) \nonumber \\
		&<&  \frac{S_{i,0}e^{rt}}{2}\exp\left(-\frac{1}{2\sigma_i^2t}\left(\log\left(\frac{S_{i,0}}{L}\right)\right)^2\right) \nonumber \\ 
		&<& \epsilon S_{i,0} e^{rt},
	\end{eqnarray}
	where we used (\ref{eq:GauIntUB}) at the first inequality and (\ref{eq:tcl}) at the last inequality.
	
	On the other hand,
	\begin{eqnarray}
		&&\int^L_0 \phi(t,s)ds \nonumber \\
		&=& \int^{\log(L/S_{i,0})}_{-\infty} \frac{1}{\sqrt{2\pi t}\sigma_i}\exp\left(-\frac{1}{2\sigma_i^2t}\left(x-\left(r-\frac{1}{2}\sigma_i^2\right)t\right)^2\right)dx \nonumber \\
		&<&\frac{1}{2}\exp\left(-\frac{1}{2\sigma_i^2t}\left(\log\left(\frac{S_{i,0}}{L}\right)+\left(r-\frac{\sigma_i^2}{2}\right)t\right)^2\right), \label{eq:tocyuu3}
	\end{eqnarray}
	where we used (\ref{eq:GauIntUB}) again.
	Then, for $t\in(0,t_{\rm cl})$, (\ref{eq:tocyuu3}) and 
	\begin{equation}
		\left(\frac{\sigma_i^2}{2}-r\right)t < \left(\frac{\sigma_i^2}{2}-r\right)\frac{8\left(\log \left(\frac{S_{i,0}}{L}\right)\right)^2}{25\sigma_i^2 \log \left(\frac{1}{2\epsilon}\right)}<\frac{1}{5}\log \left(\frac{S_{i,0}}{L}\right),
	\end{equation}
	which follows (\ref{eq:cutcond2}), lead to
	\begin{equation}
		\int^{L}_0 \phi_i(t,s)ds< \frac{1}{2}\exp\left(-\frac{1}{2\sigma_i^2t}\frac{16}{25}\left(\log \left(\frac{S_{i,0}}{L}\right)\right)^2\right)<\epsilon.
	\end{equation}
	
\end{proof}

Combining these lemma, we obtain the following.

\begin{lemma}
	Consider $S_1,...,S_d$ in (\ref{eq:SDE}) under Assumption \ref{ass:payoffBound}.
	For any $\epsilon\in\mathbb{R}_+$ satisfying
	\begin{equation}
		\log\left(\frac{\Tilde{A}d(d+1)}{\epsilon}\right)>\max\left\{\frac{2}{5}\left(1-\frac{2r}{\sigma_i^2}\right)\log\left(\frac{U_i}{S_{i,0}}\right),\frac{2}{5}\left(1-\frac{2r}{\sigma_i^2}\right)\log\left(\frac{S_{i,0}}{L_i}\right)\right\},i=1,...,d,
	\end{equation}
	where $\Tilde{A}=\max\{A_1\sqrt{U_1S_{1,0}},...,A_d\sqrt{U_dS_{d,0}},A_0\}$, the following holds
	\begin{equation}
		e^{-rt_{\rm ter}}\int_{\mathbb{R}_+^d \setminus D_{\rm half}} d\vec{S} \phi(t_{\rm ter},\vec{S})V(t_{\rm ter},\vec{S})\le\epsilon, \label{eq:VIntegTail}
	\end{equation}
	where $t_{\rm ter}$ is defined as (\ref{eq:tter}) and $D_{\rm half}$ is defined as (\ref{eq:Dhalf}).
	\label{lemma:VIntegBound}
\end{lemma}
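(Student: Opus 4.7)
The strategy is to bound $V(t_{\rm ter},\vec S)$ pointwise by a linear function of $\vec S$, then reduce the integral over $\mathbb{R}_+^d\setminus D_{\rm half}$ to a collection of one-coordinate tail integrals that are already controlled by Lemmas~\ref{lemma:cutu} and~\ref{lemma:cutl}. For the pointwise bound, I would use the risk-neutral representation
\[
V(t_{\rm ter},\vec S)=E\bigl[e^{-r(T-t_{\rm ter})}f_{\rm pay}(\vec S(T))\mathbf{1}_{\rm NB}\,\bigl|\,\vec S(t_{\rm ter})=\vec S\bigr],
\]
combined with Assumption~\ref{ass:payoffBound}, the trivial inequality $\mathbf{1}_{\rm NB}\le1$, and the risk-neutral martingale identity $E[S_i(T)\mid S_i(t_{\rm ter})=S_i]=S_i e^{r(T-t_{\rm ter})}$, to obtain $V(t_{\rm ter},\vec S)\le A_0+\sum_{i=1}^d A_iS_i$.

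Next, since membership in the complement of $D_{\rm half}$ forces some coordinate $S_i$ out of $(\sqrt{L_iS_{i,0}},\sqrt{U_iS_{i,0}})$, a union bound yields
\[
\int_{\mathbb{R}_+^d\setminus D_{\rm half}}\phi(t_{\rm ter},\vec S)V(t_{\rm ter},\vec S)d\vec S\le\sum_{i=1}^d\bigl(I_i^{+}+I_i^{-}\bigr),
\]
where $I_i^{\pm}$ denotes the integral of $\phi\,(A_0+\sum_j A_jS_j)$ over $\{S_i\gtreqless\sqrt{U_iS_{i,0}}\text{ resp.\ }\sqrt{L_iS_{i,0}}\}$. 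Setting $\epsilon':=\epsilon/(\Tilde Ad(d+1))$ (up to an $O(1)$ constant), hypothesis~(\ref{eq:epscond1}) becomes exactly the condition~(\ref{eq:cutcond})/(\ref{eq:cutcond2}) of Lemmas~\ref{lemma:cutu}/\ref{lemma:cutl} specialised to $H=\sqrt{U_iS_{i,0}}$ or $L=\sqrt{L_iS_{i,0}}$, because $\log(H/S_{i,0})=\tfrac12\log(U_i/S_{i,0})$ and analogously for $L$; this is also why the cutoff times $t_{\rm cu},t_{\rm cl}$ of those lemmas coincide with the entries of the minimum in~(\ref{eq:tter}). Inside each $I_i^{\pm}$ I split the $A_0$ and diagonal ($j=i$) contributions, which are respectively $A_0P(S_i\in\text{tail})\le A_0\epsilon'$ and $A_i\int_{\text{tail}}s_i\phi_i(t_{\rm ter},s_i)ds_i\le A_iS_{i,0}e^{rt_{\rm ter}}\epsilon'$, directly from Lemmas~\ref{lemma:cutu}/\ref{lemma:cutl}.

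The remaining pieces are the off-diagonal contributions $A_j E[S_j\mathbf{1}_{S_i\in\text{tail}}]$ for $j\neq i$. I would treat these by change of numeraire: under the measure $P^{(j)}$ with $dP^{(j)}/dP=e^{-rt_{\rm ter}}S_j(t_{\rm ter})/S_{j,0}$,
\[
E\bigl[S_j\mathbf{1}_{S_i\in\text{tail}}\bigr]=S_{j,0}e^{rt_{\rm ter}}P^{(j)}(S_i\in\text{tail}),
\]
and under $P^{(j)}$ the marginal law of $S_i(t_{\rm ter})$ remains log-normal with unchanged log-variance $\sigma_i^2t_{\rm ter}$ but log-drift shifted from $r$ to $r+\sigma_i\sigma_j\rho_{ij}$. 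Rerunning the proof of Lemmas~\ref{lemma:cutu}/\ref{lemma:cutl} under this shifted drift bounds $P^{(j)}(S_i\in\text{tail})\le\epsilon'$ at $t_{\rm ter}$, after absorbing the $\rho_{ij}$ correction (with $|\rho_{ij}|\le1$) into the $O(1)$ constants. Using $A_jS_{j,0}\le A_j\sqrt{U_jS_{j,0}}\le\Tilde A$ and $e^{-rt_{\rm ter}}\le1$, summing all $O(d^2)$ contributions produces $O(\Tilde Ad(d+1)\epsilon')=O(\epsilon)$, which after tightening the constant in the definition of $\epsilon'$ gives precisely~(\ref{eq:VIntegTail}).

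The step I expect to be the main obstacle is the off-diagonal estimate. The naive Cauchy--Schwarz bound $E[S_j\mathbf{1}_{S_i\in\text{tail}}]\le\sqrt{E[S_j^2]\,P(S_i\in\text{tail})}$ only delivers $\sqrt{\epsilon'}$ and would force $t_{\rm ter}$ much smaller than the value stated in~(\ref{eq:tter}). The change-of-numeraire argument preserves linearity in $\epsilon'$, but verifying that the shifted-drift variants of Lemmas~\ref{lemma:cutu}/\ref{lemma:cutl} go through with the precise constants appearing in~(\ref{eq:tter}) and~(\ref{eq:epscond1})—in particular, checking that the drift correction $\sigma_i\sigma_j\rho_{ij}$ can be absorbed without altering the $1\pm 2r/\sigma_i^2$ factors—will require careful bookkeeping of the exponent estimates in those proofs.
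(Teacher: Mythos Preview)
Your pointwise bound on $V(t_{\rm ter},\vec S)$ and the union-bound decomposition match the paper. The divergence is in the off-diagonal terms $A_j\,E[S_j\mathbf{1}_{S_i\in\text{tail}}]$, $j\neq i$, where you propose a change of numeraire, while the paper uses a much more elementary trick that sidesteps correlation entirely.

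The paper organises the covering of $\mathbb{R}_+^d\setminus D_{\rm half}$ \emph{per integrand}: when bounding the term with factor $A_1S_1$, it first takes the two slabs $\{S_1\ge\sqrt{U_1S_{1,0}}\}$ and $\{S_1\le\sqrt{L_1S_{1,0}}\}$, and then for the remainder it intersects with $\{\sqrt{L_1S_{1,0}}\le S_1\le\sqrt{U_1S_{1,0}}\}$ before appealing to the tails of $S_i$, $i\ge2$. On these residual pieces $S_1$ is bounded by the constant $\sqrt{U_1S_{1,0}}$, so $A_1S_1\le A_1\sqrt{U_1S_{1,0}}\le\Tilde A$ can be pulled out, and what remains is just the marginal tail probability $P(S_i\in\text{tail})$ controlled directly by Lemmas~\ref{lemma:cutu}/\ref{lemma:cutl} with the original drift $r$. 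No joint law, no shifted drift, no re-derivation of those lemmas.

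Your change-of-numeraire route is conceptually sound, but it is not clear that it delivers the lemma \emph{as stated}. Under $P^{(j)}$ the log-drift of $S_i$ becomes $r+\sigma_i\sigma_j\rho_{ij}$, and the hypotheses of Lemmas~\ref{lemma:cutu}/\ref{lemma:cutl} would then involve $1\pm 2(r+\sigma_i\sigma_j\rho_{ij})/\sigma_i^2$ rather than $1\pm 2r/\sigma_i^2$. Nothing in Problem~1 (only $r<\tfrac12\sigma_i^2$ is assumed) bounds $\sigma_j\rho_{ij}/\sigma_i$, so this correction cannot simply be ``absorbed into the $O(1)$ constants'' without either strengthening the hypothesis~(\ref{eq:epscond1}) or shrinking $t_{\rm ter}$ beyond~(\ref{eq:tter}). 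The paper's per-integrand covering avoids this difficulty entirely and is what makes the stated constants go through.
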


\begin{proof}
	First, note that, under Assumption \ref{ass:payoffBound}, for $\vec{S}=(S_1,...,S_d)^T\in \mathbb{R}_+^d$,
	\begin{eqnarray}
		V(t_{\rm ter},\vec{S})&=&E[e^{-r(T-t_{\rm ter})}f_{\rm pay}(\vec{S}(T))1_{\rm NB}|\vec{S}(t_{\rm ter})=\vec{S}] \nonumber \\
		&\le& E\left[e^{-r(T-t_{\rm ter})}\left(\sum_{i=1}^d A_iS_i(T)+A_0\right)\middle|\vec{S}(t_{\rm ter})=\vec{S}\right] \nonumber \\
		&=& \sum_{i=1}^d A_iS_i+A_0e^{-r(T-t_{\rm ter})}. \label{eq:VBound}
	\end{eqnarray}
	Therefore, we obtain
	\begin{equation}
		e^{-rt_{\rm ter}}\int_{\mathbb{R}_+^d \setminus D_{\rm half}} d\vec{S} \phi(t_{\rm ter},\vec{S})V(t_{\rm ter},\vec{S}) \le \sum_{i=1}^d A_ie^{-rt_{\rm ter}}\int_{\mathbb{R}_+^d \setminus D_{\rm half}} d\vec{S} S_i\phi(t_{\rm ter},\vec{S}) + A_0e^{-rT}\int_{\mathbb{R}_+^d \setminus D_{\rm half}} d\vec{S} \phi(t_{\rm ter},\vec{S}).
	\end{equation}
	We can evaluate $A_1e^{-rt_{\rm ter}}\int_{\mathbb{R}_+^d \setminus D_{\rm half}} d\vec{S} S_1\phi(t_{\rm ter},\vec{S})$ as follows
	\begin{eqnarray}
		A_1e^{-rt_{\rm ter}}\int_{\mathbb{R}_+^d \setminus D_{\rm half}} d\vec{S} S_1\phi(t_{\rm ter},\vec{S}) &\le& A_1e^{-rt_{\rm ter}}\int_{S_1\ge \sqrt{U_1S_{1,0}}} d\vec{S} S_1\phi(t_{\rm ter},\vec{S}) \nonumber \\
		&+& A_1e^{-rt_{\rm ter}}\int_{S_1\le \sqrt{L_1S_{1,0}}} d\vec{S} S_1\phi(t_{\rm ter},\vec{S}) \nonumber \\
		&+&\sum_{i=2}^d A_1e^{-rt_{\rm ter}}\int_{\substack{\sqrt{L_1S_{1,0}}\le S_1\le \sqrt{U_1S_{1,0}} \\ S_i \ge \sqrt{U_iS_{i,0}} }} d\vec{S} S_1\phi(t_{\rm ter},\vec{S})\nonumber \\
		&+&\sum_{i=2}^d A_1e^{-rt_{\rm ter}}\int_{\substack{\sqrt{L_1S_{1,0}}\le S_1\le \sqrt{U_1S_{1,0}} \\ S_i \le \sqrt{L_iS_{i,0}}}} d\vec{S} S_1\phi(t_{\rm ter},\vec{S}).
	\end{eqnarray}
	In the right hand side, the first term is $A_1e^{-rt_{\rm ter}}\int^\infty_{\sqrt{U_1S_{1,0}}} dS_1 S_1\phi_1(t_{\rm ter},S_1)$, where $\phi_i(t,S_i)$ is the marginal density of $S_i(t)$, and therefore
	\begin{equation}
		A_1e^{-rt_{\rm ter}}\int_{S_1\ge \sqrt{U_1S_{1,0}}} d\vec{S} S_1\phi(t_{\rm ter},\vec{S}) \le \frac{\epsilon S_{1,0}A_1}{2d(d+1) \Tilde{A}}\le \frac{\epsilon}{2d(d+1)}
	\end{equation}
	holds from Lemma \ref{lemma:cutu}\footnote{Note that \begin{equation}\frac{8\left(\log \left(\frac{\sqrt{U_iS_{i,0}}}{S_{i,0}}\right)\right)^2}{25\sigma_i^2 \log \left(\frac{2\Tilde{A}d(d+1)}{\epsilon}\right)}=\frac{2\left(\log \left(\frac{U_i}{S_{i,0}}\right)\right)^2}{25\sigma_i^2 \log \left(\frac{2\Tilde{A}d(d+1)}{\epsilon}\right)},\frac{8\left(\log \left(\frac{S_{i,0}}{\sqrt{L_iS_{i,0}}}\right)\right)^2}{25\sigma_i^2 \log \left(\frac{2\Tilde{A}d(d+1)}{\epsilon}\right)}=\frac{2\left(\log \left(\frac{S_{i,0}}{L_i}\right)\right)^2}{25\sigma_i^2 \log \left(\frac{2\Tilde{A}d(d+1)}{\epsilon}\right)}.\nonumber\end{equation}}.
	Similarly, from Lemma \ref{lemma:cutl}, the second term is bounded as
	\begin{equation}
		A_1e^{-rt_{\rm ter}}\int_{S_1\le \sqrt{L_1S_{1,0}}} d\vec{S} S_1\phi(t_{\rm ter},\vec{S}) \le \frac{\epsilon }{2d(d+1)}.
	\end{equation}
	On the other hand, from Lemma \ref{lemma:cutu}, we see that the third term is bounded as
	\begin{eqnarray}
		&& \sum_{i=2}^d A_1e^{-rt_{\rm ter}}\int_{\substack{\sqrt{L_1S_{1,0}}\le S_1\le \sqrt{U_1S_{1,0}} \\ S_i \ge \sqrt{U_iS_{i,0}}}} d\vec{S} S_1\phi(t_{\rm ter},\vec{S}) \nonumber \\
		&\le& \sum_{i=2}^d A_1\sqrt{U_1S_{1,0}}\int_{\substack{\sqrt{L_1S_{1,0}}\le S_1\le \sqrt{U_1S_{1,0}} \\ S_i \ge \sqrt{U_iS_{i,0}}}} d\vec{S} \phi(t_{\rm ter},\vec{S})  \nonumber \\
		&\le& \sum_{i=2}^d A_1\sqrt{U_1S_{1,0}}\int^\infty_{\sqrt{U_iS_{i,0}}} ds \phi_i(t_{\rm ter},s)  \nonumber \\
		&\le& \sum_{i=2}^d \frac{\epsilon A_1\sqrt{U_1S_{1,0}}}{2d(d+1) \Tilde{A}}  \nonumber \\
		&\le& \sum_{i=2}^d \frac{\epsilon}{2d(d+1)}  \nonumber \\
		&=& \frac{\epsilon(d-1)}{2d(d+1)}
	\end{eqnarray}
	and, similarly, the fourth term is bounded as
	\begin{equation}
		\sum_{i=2}^d A_1e^{-rt_{\rm ter}}\int_{\substack{\sqrt{L_1S_{1,0}}\le S_1\le \sqrt{U_1S_{1,0}} \\ S_i \le \sqrt{L_iS_{i,0}}}} d\vec{S} S_1\phi_{\vec{S}}(t_{\rm ter},\vec{S}) \le \frac{\epsilon(d-1)}{2d(d+1)}
	\end{equation}
	by Lemma \ref{lemma:cutl}.
	In summary,
	\begin{equation}
		A_1e^{-rt_{\rm ter}}\int_{\mathbb{R}_+^d \setminus D_{\rm half}} d\vec{S} S_1\phi(t_{\rm ter},\vec{S}) \le \frac{\epsilon}{d+1}
	\end{equation}
	holds.
	$A_2e^{-rt_{\rm ter}}\int_{\mathbb{R}_+^d \setminus D_{\rm half}} d\vec{S} S_2\phi(t_{\rm ter},\vec{S}),...,A_de^{-rt_{\rm ter}}\int_{\mathbb{R}_+^d \setminus D_{\rm half}} d\vec{S} S_d\phi(t_{\rm ter},\vec{S})$ are bounded similarly.
	
	On the other hand, by Lemmas \ref{lemma:cutu} and \ref{lemma:cutl},
	\begin{eqnarray}
		&&A_0e^{-rt_{\rm ter}}\int_{\mathbb{R}_+^d \setminus D_{\rm half}} d\vec{S} \phi(t_{\rm ter},\vec{S}) \nonumber \\
		&\le& \sum_{i=1}^d \left(A_0e^{-rt_{\rm ter}}\int_{S_i \ge \sqrt{U_iS_{i,0}}} d\vec{S} \phi(t_{\rm ter},\vec{S})+A_0e^{-rt_{\rm ter}}\int_{S_i \le \sqrt{L_iS_{i,0}}} d\vec{S} \phi(t_{\rm ter},\vec{S})\right) \nonumber \\
		&\le& \sum_{i=1}^d \frac{\epsilon A_0e^{-rt_{\rm ter}}}{d(d+1) \Tilde{A}} \nonumber \\
		&\le& \frac{\epsilon}{d+1}
	\end{eqnarray}
	holds.
	
	Summing up all terms, we obtain (\ref{eq:VIntegTail}).
\end{proof}

\subsubsection{Proof of Lemma \ref{lem:tter}}

Then, we finally prove Lemma \ref{lem:tter}.

\begin{proof}[The proof of Lemma \ref{lem:tter}]
	Note that $t_{\rm ter}$ satisfies
	\begin{eqnarray}
		t_{\rm ter} &<& \min \left\{\frac{\left(\log \left(\frac{U_1}{S_{1,0}}\right)\right)^2}{\sigma_1^2\log\left(\frac{2d(d+1)\Tilde{\Tilde{A}}}{\epsilon}\right)},...,\frac{\left(\log \left(\frac{U_d}{S_{d,0}}\right)\right)^2}{\sigma_d^2\log\left(\frac{2d(d+1)\Tilde{\Tilde{A}}}{\epsilon}\right)},\frac{\left(\log \left(\frac{S_{1,0}}{L_1}\right)\right)^2}{\sigma_1^2\log\left(\frac{2d(d+1)\Tilde{\Tilde{A}}}{\epsilon}\right)},...,\frac{\left(\log \left(\frac{S_{d,0}}{L_d}\right)\right)^2}{\sigma_d^2\log\left(\frac{2d(d+1)\Tilde{\Tilde{A}}}{\epsilon}\right)}\right\}, \label{eq:tterLtThres}
	\end{eqnarray}
	where $\Tilde{\Tilde{A}} := \max\{A_0,A_1S_{1,0},...,A_dS_{d,0}\}$.
	Besides, we can see that
	\begin{eqnarray}
		&& \left|V(0,\vec{S}_0) - e^{-rT}\int_{\Tilde{\Tilde{D}}} d\vec{x} \Tilde{\phi}(t_{\rm ter},\vec{x})Y(\tau_{\rm ter},\vec{x})\right| \nonumber \\
		&=& \left|V(0,\vec{S}_0) - e^{-rt_{\rm ter}}\int_{\hat{D}} d\vec{S} \phi(t_{\rm ter},\vec{S})V(t_{\rm ter},\vec{S})\right| \nonumber \\
		&=& \left|e^{-rt_{\rm ter}}\int d\vec{S}_{\mathbb{R}_+^d} \phi(t_{\rm ter},\vec{S})p_{\rm NB}(t_{\rm ter},\vec{S})V(t_{\rm ter},\vec{S})- e^{-rt_{\rm ter}}\int_{\hat{D}} d\vec{S} \phi(t_{\rm ter},\vec{S})V(t_{\rm ter},\vec{S})\right| \nonumber \\
		&=& \left|e^{-rt_{\rm ter}}\int_{D_{\rm half}} d\vec{S} \phi(t_{\rm ter},\vec{S})p_{\rm NB}(t_{\rm ter},\vec{S})V(t_{\rm ter},\vec{S}) + e^{-rt_{\rm ter}}\int_{\mathbb{R}_+^d\setminus D_{\rm half}} d\vec{S} \phi(t_{\rm ter},\vec{S})p_{\rm NB}(t_{\rm ter},\vec{S})V(t_{\rm ter},\vec{S}) \right. \nonumber \\
		&& \qquad - \left.e^{-rt_{\rm ter}}\int_{D_{\rm half}} d\vec{S} \phi(t_{\rm ter},\vec{S})V(t_{\rm ter},\vec{S})- e^{-rt_{\rm ter}}\int_{\hat{D}\setminus D_{\rm half}} d\vec{S} \phi(t_{\rm ter},\vec{S})V(t_{\rm ter},\vec{S}) \right| \nonumber \\
		&\le& \left|e^{-rt_{\rm ter}}\int_{D_{\rm half}} d\vec{S} \phi(t_{\rm ter},\vec{S})p_{\rm NB}(t_{\rm ter},\vec{S})V(t_{\rm ter},\vec{S})- e^{-rt_{\rm ter}}\int_{D_{\rm half}} d\vec{S} \phi(t_{\rm ter},\vec{S})V(t_{\rm ter},\vec{S})\right| \nonumber \\
		&& \qquad + \left|e^{-rt_{\rm ter}}\int_{\mathbb{R}_+^d\setminus D_{\rm half}} d\vec{S} \phi(t_{\rm ter},\vec{S})p_{\rm NB}(t_{\rm ter},\vec{S})V(t_{\rm ter},\vec{S})-e^{-rt_{\rm ter}}\int_{\hat{D}\setminus D_{\rm half}} d\vec{S} \phi(t_{\rm ter},\vec{S})V(t_{\rm ter},\vec{S})\right|,\nonumber \\
		&& \label{eq:tocyuu4}
	\end{eqnarray}
	where
	\begin{equation}
		\hat{D} := \left[\exp\left(\frac{1}{2}\left(l_1+x^{(0)}_1\right)\right),\exp\left(\frac{1}{2}\left(x^{(n_{\rm gr}-1)}_1+u_1\right)\right)\right]\times \cdots \times \left[\exp\left(\frac{1}{2}\left(l_d+x^{(0)}_d\right)\right),\exp\left(\frac{1}{2}\left(x^{(n_{\rm gr}-1)}_d+u_d\right)\right)\right].
	\end{equation}
	The first term in the last line in (\ref{eq:tocyuu4}) is bounded as
	\begin{eqnarray}
		&&\left|e^{-rt_{\rm ter}}\int_{D_{\rm half}} d\vec{S} \phi(t_{\rm ter},\vec{S})p_{\rm NB}(t_{\rm ter},\vec{S})V(t_{\rm ter},\vec{S})- e^{-rt_{\rm ter}}\int_{D_{\rm half}} d\vec{S} \phi(t_{\rm ter},\vec{S})V(t_{\rm ter},\vec{S})\right| \nonumber \\
		&=& e^{-rt_{\rm ter}}\int_{D_{\rm half}} d\vec{S} \phi(t_{\rm ter},\vec{S})(1-p_{\rm NB}(t_{\rm ter},\vec{S}))V(t_{\rm ter},\vec{S}) \nonumber \\
		&\le& \frac{e^{-rt_{\rm ter}}\epsilon}{\Tilde{\Tilde{A}}(d+1)}\int_{D_{\rm half}} d\vec{S}\phi_{\vec{S}}(t_{\rm ter},\vec{S})V(t_{\rm ter},\vec{S}) \nonumber \\
		&\le& \frac{e^{-rt_{\rm ter}}\epsilon}{\Tilde{\Tilde{A}}(d+1)} \int_{\mathbb{R}_+^d} d\vec{S}\phi_{\vec{S}}(t_{\rm ter},\vec{S})V(t_{\rm ter},\vec{S}) \nonumber \\
		&\le& \frac{e^{-rt_{\rm ter}}\epsilon}{\Tilde{\Tilde{A}}(d+1)}\int_{\mathbb{R}_+^d} d\vec{S}\phi_{\vec{S}}(t_{\rm ter},\vec{S})\left(\sum_{i=1}^dA_iS_i + A_0e^{-r(T-t_{\rm ter})}\right) \nonumber \\
		&=&\frac{\epsilon}{\Tilde{\Tilde{A}}(d+1)}\left(\sum_{i=1}^dA_iS_{i,0} + A_0e^{-r(T-t_{\rm ter})}\right) \nonumber \\
		&\le& \epsilon,
	\end{eqnarray}
	where we used Lemma \ref{lemma:phibound} and (\ref{eq:tterLtThres}) at the first inequality, and (\ref{eq:VBound}) at the third inequality.
	On the other hand, the second term of (\ref{eq:tocyuu4}) is bounded as
	\begin{eqnarray}
		&&\left|e^{-rt_{\rm ter}}\int_{\mathbb{R}_+^d\setminus D_{\rm half}} d\vec{S} \phi(t_{\rm ter},\vec{S})p_{\rm NB}(t_{\rm ter},\vec{S})V(t_{\rm ter},\vec{S})-e^{-rt_{\rm ter}}\int_{\hat{D}\setminus D_{\rm half}} d\vec{S} \phi(t_{\rm ter},\vec{S})V(t_{\rm ter},\vec{S})\right| \nonumber \\
		&=& \left|e^{-rt_{\rm ter}}\int_{\mathbb{R}_+^d\setminus \hat{D}} d\vec{S} \phi(t_{\rm ter},\vec{S})p_{\rm NB}(t_{\rm ter},\vec{S})V(t_{\rm ter},\vec{S})-e^{-rt_{\rm ter}}\int_{\hat{D}\setminus D_{\rm half}} d\vec{S} (1-p_{\rm NB}(t_{\rm ter},\vec{S}))\phi(t_{\rm ter},\vec{S})V(t_{\rm ter},\vec{S})\right| \nonumber \\
		&\le& e^{-rt_{\rm ter}}\int_{\mathbb{R}_+^d\setminus \hat{D}} d\vec{S} \phi(t_{\rm ter},\vec{S})p_{\rm NB}(t_{\rm ter},\vec{S})V(t_{\rm ter},\vec{S})+e^{-rt_{\rm ter}}\int_{\hat{D}\setminus D_{\rm half}} d\vec{S} (1-p_{\rm NB}(t_{\rm ter},\vec{S}))\phi(t_{\rm ter},\vec{S})V(t_{\rm ter},\vec{S}) \nonumber \\
		&\le& e^{-rt_{\rm ter}}\int_{\mathbb{R}_+^d\setminus D_{\rm half}} d\vec{S} \phi(t_{\rm ter},\vec{S})V(t_{\rm ter},\vec{S}) \nonumber \\
		&\le& \epsilon,
	\end{eqnarray}
	where we used Lemma \ref{lemma:VIntegBound} and $t_{\rm ter}<t_c$ at the last inequality.
	Combining these, we obtain (\ref{eq:VIntDiff}).
\end{proof}

\subsection{Proof of Lemma \ref{lem:finSumApp} \label{sec:PrLemFinSumApp}}

\begin{proof}
	The following holds
	\begin{eqnarray}
		\left|e^{-rT}\vec{p} \cdot \vec{\Tilde{Y}}(\tau_{\rm ter})-V_0\right|
		&\le& 
		e^{-rT}\left|\vec{p} \cdot \left(\vec{\Tilde{Y}}(\tau_{\rm ter})-\vec{Y}(\tau_{\rm ter})\right)\right|\nonumber \\
		&+&e^{-rT}\left|\vec{p} \cdot \vec{Y}(\tau_{\rm ter})-\int_{\Tilde{\Tilde{D}}} d\vec{x} \Tilde{\phi}(t_{\rm ter},\vec{x})Y(\tau_{\rm ter},\vec{x})\right|\nonumber \\
		&+&\left|e^{-rT}\int_{\Tilde{\Tilde{D}}} d\vec{x} \Tilde{\phi}(t_{\rm ter},\vec{x})Y(\tau_{\rm ter},\vec{x})-V_0\right|.
	\end{eqnarray}
	The first term can be evaluated as
	\begin{eqnarray}
		&& e^{-rT}\left|\vec{p} \cdot \left(\vec{\Tilde{Y}}(\tau_{\rm ter})-\vec{Y}(\tau_{\rm ter})\right)\right| \nonumber \\
		&\le& e^{-rT}\|\vec{p}\|\times \left\|\vec{\Tilde{Y}}(\tau_{\rm ter})-\vec{Y}(\tau_{\rm ter})\right\|\nonumber \\
		&<&e^{-rT}\frac{\sqrt{\prod_{i=1}^d\Delta_i}}{(4\pi)^{d/4}\sqrt{N_{\rm gr}}(\det\rho)^{1/4}}\sqrt{N_{\rm gr}}\frac{(4\pi)^{d/4}(\det\rho)^{1/4}}{\sqrt{\prod_{i=1}^d\Delta_i}}\epsilon \nonumber \\
		&=&e^{-rT}\epsilon\nonumber \\
		&<& \epsilon, \label{eq:tocyuu5}
	\end{eqnarray}
	where we used (\ref{eq:P2}), (\ref{eq:hxcond3}), and Lemma \ref{lem:FDMErr}.
	In order to bound the second term, we note that $\vec{p} \cdot \vec{Y}(\tau_{\rm ter})$ is an approximation of $\int_{\Tilde{\Tilde{D}}} d\vec{x} \Tilde{\phi}(t_{\rm ter},\vec{x})Y(\tau_{\rm ter},\vec{x})$ by the midpoint rule.
	Then, according to \cite{Linden},
	\begin{equation}
		\left|\vec{p} \cdot \vec{Y}(\tau_{\rm ter})-\int_{\Tilde{\Tilde{D}}} d\vec{x} \Tilde{\phi}(t_{\rm ter},\vec{x})Y(\tau_{\rm ter},\vec{x})\right| < \frac{1}{24}\left(\sum_{i=1}^d h_i^2\right) \times \left(\prod_{i=1}^d(u_i-l_i)\right) \times\eta
	\end{equation}
	holds under Assumption \ref{ass:phiYSm}, and therefore
	\begin{equation}
		e^{-rT}\left|\vec{p} \cdot \vec{Y}(\tau_{\rm ter})-\int_{\Tilde{\Tilde{D}}} d\vec{x} \Tilde{\phi}(t_{\rm ter},\vec{x})Y(\tau_{\rm ter},\vec{x})\right| < e^{-rT}\epsilon<\epsilon \label{eq:tocyuu6}
	\end{equation}
	under (\ref{eq:hxcond3}).
	The third term can be bounded as (\ref{eq:VIntDiff}) by Lemma \ref{lem:tter}.
	Combining (\ref{eq:tocyuu5}), (\ref{eq:tocyuu6}) and (\ref{eq:VIntDiff}), we obtain the claim.
\end{proof}

\subsection{Proof of Lemma \ref{lem:CompOnePsi} \label{sec:PrCompOnePsi}}

\begin{proof}
	Applying the algorithm in \cite{Berry2} to the ODE system (\ref{eq:ODE}) with $h_i$ satisfying (\ref{eq:hxcond3}), we obtain (\ref{eq:outputBerryAlgoFDM}).
	Since smaller $h_i$'s lead to larger $\|F\|$, and then larger complexity, we take as large $h_i$'s as possible, that is,
	\begin{equation}
		h_i = \frac{u_i-l_i}{\left\lceil\frac{u_i-l_i}{\Tilde{h}_i}\right\rceil} = \Theta(\Tilde{h}_i)
	\end{equation}
	Then, we can evaluate the complexity by substituting $s$ and $\|A\|$ in (\ref{eq:compQODE}) with the sparsity and norm of $F$, respectively.
	The sparsity of $F$ is $O(d^2)$, since the matrices constituting $F$ as (\ref{eq:F}) have sparsity at most 4 and the total number of them is $O(d^2)$. 
	Besides,
	\begin{equation}
		\|F\| = O\left(\max \left\{\frac{\sqrt{\prod_{i=1}^d \Delta_i}d^2 \Xi \sigma^2_{\rm max}}{(4\pi)^{d/4}(\det \rho)^{1/4}},d\eta\prod_{i=1}^d(u_i-l_i)\right\}\times \frac{d^2\sigma^2_{\rm max}}{\epsilon}\right) \label{eq:tocyuu7}
	\end{equation}
	for $h_i = \Theta(\Tilde{h}_i)$, as we will show soon.
	Using these, we obtain (\ref{eq:compQODEFDM}) by simple algebra.
	
	The remaining task is to show (\ref{eq:tocyuu7}).
	Note that
	\begin{eqnarray}
		\|F\|&\le&\sum_{i=1}^d\frac{\sigma_i^2}{2h_i^2} \|D^{\rm 2nd}\| + \sum_{i=1}^{d-1} \sum_{j=i+1}^d  \frac{\sigma_i\sigma_j}{4h_ih_j} \|D^{\rm 1st}\|^2 + \sum_{i=1}^d \frac{1}{2h_i}\left|r-\frac{1}{2}\sigma_i^2\right| \|D^{\rm 1st}\|
		\nonumber \\
		&=& \sum_{i=1}^d \frac{2\sigma_i^2}{h_i^2} + \sum_{i=1}^{d-1} \sum_{j=i+1}^d  \frac{\sigma_i\sigma_j}{h_ih_j} + \sum_{i=1}^d \frac{1}{h_i}\left|r-\frac{1}{2}\sigma_i^2\right|. \label{eq:tocyuu8}
	\end{eqnarray}
	Here, we used $\|D^{\rm 1st}\|=2, \|D^{\rm 2nd}\|=4$, which follows the fact that the eigenvalues of the $n\times n$ tridiagonal Toeplitz matrix
	\begin{equation}
	\begin{pmatrix}
		b & c &   &   &\\
		a & b & c &   &\\
		  &    \ddots &\ddots &\ddots & \\
		  &     &a &b & c\\
		  &           &  & a & b
	\end{pmatrix},
	\nonumber
	\end{equation}
	where $a,b,c\in\mathbb{C}$,	are $b+2\sqrt{ac}\cos \left(\frac{j\pi}{n+1}\right),j=1,...,n$\cite{Horn}.
	Then, substituting $h_i$ in (\ref{eq:tocyuu8}) by $\Tilde{h}_i$, we obtain (\ref{eq:tocyuu7}) by simple algebra.
\end{proof}

\subsection{Proof of Theorem \ref{th:main} \label{sec:PrThMain}}

\begin{proof}
	
	First, we show that, for $\epsilon_1$, $\epsilon_2$ and $\epsilon_{\Psi}$ satisfying (\ref{eq:eps1cond}), (\ref{eq:eps2cond}) and (\ref{eq:epsPsiCond}), respectively, Algorithm \ref{alg2} outputs $\omega$ such that (\ref{eq:finalErr}).
	For this, we begin with writing $\ket{\Tilde{\Psi}_{\rm mod}}$, the state which we obtain by applying the QLS algorithm of \cite{Childs3} to (\ref{eq:LSMod}), in the form of
	\begin{equation}
		\ket{\Tilde{\Psi}_{\rm mod}} := \frac{1}{\Tilde{Z}}\left(\ket{\Tilde{\Psi}_{\rm gar}}+\ket{\Tilde{\Psi}_1}+\ket{\Tilde{\Psi}_2}\right),
	\end{equation}
	where $\ket{\Tilde{\Psi}_{\rm gar}}$, $\ket{\Tilde{\Psi}_1}$ and $\ket{\Tilde{\Psi}_2}$ are the unnormalized states in the forms of
	\begin{equation}
		\ket{\Tilde{\Psi}_{\rm gar}} := \sum_{j=0}^{p(k+1)-1}\ket{j}\ket{\psi_j}, 
		\ket{\Tilde{\Psi}_1} := \sum_{j=p(k+1)}^{p(k+2)}\ket{j}\ket{\psi_j}, 
		\ket{\Tilde{\Psi}_2} := \sum_{j=p(k+2)}^{p(k+3)+1}\ket{j}\ket{\psi_j}, 
	\end{equation}
	with some unnormalised states, respectively, and $\Tilde{Z}:=\sqrt{\braket{\Tilde{\Psi}_{\rm gar}|\Tilde{\Psi}_{\rm gar}}+\braket{\Psi_1|\Psi_1}+\braket{\Psi_2|\Psi_2}}$.
	Because of (\ref{eq:tilPsiCond}), we see that
	\begin{equation}
		|\braket{\Pi|\Tilde{\Psi}_{\rm mod}}-\braket{\Pi|\Psi_{\rm mod}}|<\epsilon_{\Psi}.
	\end{equation}
	Then, since $E_1$, the output of the step 1 in Algorithm \ref{alg2}, satisfies
	\begin{equation}
		|E_1-\braket{\Pi | \Tilde{\Psi}_{\rm mod}}| < \epsilon_1,
	\end{equation}
	we obtain
	\begin{equation}
		|E_1-\braket{\Pi|\Psi_{\rm mod}}|<\epsilon_1+\epsilon_{\Psi}. \label{eq:E1Err}
	\end{equation}
	Similarly, since
	\begin{equation} \left\|\frac{1}{\Tilde{Z}}\ket{\Tilde{\Psi}_2}-\frac{1}{Z}\sum_{j=p(k+2)+1}^{p(k+3)+1}\ket{j}\ket{\vec{\gamma}}\right\|<\epsilon_{\Psi}
	\end{equation}
	because of (\ref{eq:tilPsiCond}) and
	\begin{equation}
		\left|E_2-\frac{\|\ket{\Tilde{\Psi}_2}\|}{\Tilde{Z}}\right| < \epsilon_2,
	\end{equation}
	we obtain
	\begin{equation}
		\left|E_2-\left\|\frac{1}{Z}\sum_{j=p(k+2)+1}^{p(k+3)+1}\ket{j}\ket{\vec{\gamma}}\right\| \ \right|=\left|E_2-\frac{\gamma\sqrt{(p+1)N_{\rm gr}}}{Z}\right|<\epsilon_2+\epsilon_{\Psi}. \label{eq:E2Err}
	\end{equation}
	Using (\ref{eq:E1Err}) and (\ref{eq:E2Err}), we see that $\omega:=e^{-rT}\gamma\sqrt{N_{\rm gr}}PE_1/E_2$ satisfies
	\begin{equation}
		\left|\omega-e^{-rT}\vec{p} \cdot \vec{\Tilde{Y}}(\tau_{\rm ter})\right|<\frac{e^{-rT}PZ}{\sqrt{p+1}}(\epsilon_{\Psi}+\epsilon_1)+\frac{e^{-rT}(\vec{p} \cdot \vec{\Tilde{Y}}(\tau_{\rm ter}))Z}{\gamma\sqrt{(p+1)N_{\rm gr}}}(\epsilon_{\Psi}+\epsilon_2), \label{eq:tocyuu9}
	\end{equation}
	by simple algebra.
	Here, note that, because of (\ref{eq:PsiTrashNorm}) and Assumption \ref{ass:gamma},
	\begin{equation}
		Z=O\left(g\sqrt{(p+1)N_{\rm gr}}\bar{Y}(\tau_{\rm ter})\right) = O\left(g\sqrt{(p+1)N_{\rm gr}}\gamma\right) \label{eq:tocyuu10}
	\end{equation}
	holds.
	Thus, if $\epsilon_1$, $\epsilon_2$ and $\epsilon_{\Psi}$ satisfy (\ref{eq:eps1cond}), (\ref{eq:eps2cond}) and (\ref{eq:epsPsiCond}), respectively, combining (\ref{eq:tocyuu9}), (\ref{eq:tocyuu10}) and (\ref{eq:P2}) leads to
	\begin{equation}
		\left|\frac{e^{-rT}\gamma\sqrt{N_{\rm gr}}PE_1}{E_2}-e^{-rT}\vec{p} \cdot \vec{\Tilde{Y}}(\tau_{\rm ter})\right|=O(\epsilon).
	\end{equation}
 	Finally, this and (\ref{eq:finSumUB}) yield (\ref{eq:finalErr}).
 	
 	Next, let us show that Algorithm \ref{alg2} with such $\epsilon_1$, $\epsilon_2$ and $\epsilon_{\Psi}$ has the complexity (\ref{eq:compFinal}).
 	We just multiply the complexity of generating $\ket{\Tilde{\Psi}_{\rm mod}}$ once, which is given by (\ref{eq:compQODEFDM}) with $\epsilon^\prime=\epsilon_\Psi$, by the number of the generation, which is $O(\max\{1/\epsilon_1,1/\epsilon_2\})$ since the QAE with $O(1/\delta)$ queries outputs the estimation with the error of $O(\delta)$.
 	By simple algebra, we obtain (\ref{eq:compFinal}).
	
\end{proof}


\begin{thebibliography}{99}

\bibitem{NC} M. A. Nielsen and I. L. Chuang, ``Quantum Computation and Quantum Information", Cambridge University Press (2010)
\bibitem{Hull} J. C. Hull, ``Options, Futures, and Other Derivatives", Prentice
Hall (2012)
\bibitem{Shreve} S. Shreve, ``Stochastic Calculus for Finance I \& II", Springer (2004)
\bibitem{Rebentrost} P. Rebentrost et al., ``Quantum computational finance: Monte Carlo pricing of financial derivatives", Phys. Rev. A 98, 022321 (2018)
\bibitem{Martin} A. Martin et al., ``Towards Pricing Financial Derivatives with an IBM Quantum Computer", Phys. Rev. Research 3, 013167 (2021)
\bibitem{Stamatopoulos} N. Stamatopoulos et al., ``Option Pricing using Quantum Computers", Quantum 4, 291 (2020)
\bibitem{Ramos-Calderer} S. Ramos-Calderer et al., ``Quantum unary approach to option pricing", Phys. Rev. A 103, 032414 (2021)
\bibitem{Fontanela} F. Fontanela et al., ``A Quantum algorithm for linear PDEs arising in Finance", arXiv:1912.02753
\bibitem{Vazquez} A. C. Vazquez and S. Woerner, ``Efficient state preparation for
quantum amplitude estimation", Phys. Rev. Applied 15, 034027 (2021)
\bibitem{Kaneko} K. Kaneko et al., ``Quantum pricing with a smile: Implementation of local volatility model on quantum computer", arXiv:2007.01467
\bibitem{Tang} H. Tang et al., ``Quantum Computation for Pricing the Collateralized Debt Obligations", arXiv:2008.04110 
\bibitem{Chakrabarti} S. Chakrabarti et al., ``A Threshold for Quantum Advantage in Derivative Pricing", Quantum 5, 463 (2021)
\bibitem{An} D. An et al., ``Quantum-accelerated multilevel Monte Carlo methods for stochastic differential equations in mathematical finance", arXiv:2012.06283
\bibitem{Gonzalez-Conde} J. Gonzalez-Conde et al., ``Pricing Financial Derivatives with Exponential Quantum Speedup", arXiv:2101.04023
\bibitem{Radha} S. K. Radha, ``Quantum option pricing using Wick rotated imaginary time evolution", arXiv:2101.04280
\bibitem{Alghassi} H. Alghassi et al., ``A variational quantum algorithm for the Feynman-Kac formula", arXiv:2108.10846
\bibitem{Worner} S. Woerner and D. J Egger, ``Quantum risk analysis", npj Quantum
Information, 5(1), 1 (2019)
\bibitem{Egger} D. J Egger et al., ``Credit risk analysis using quantum computers", IEEE Transactions on Computers (2020)
\bibitem{Miyamoto} K. Miyamoto and K. Shiohara, ``Reduction of qubits in a quantum algorithm for Monte Carlo simulation by a pseudo-random-number generator", Phys. Rev. A 102, 022424 (2020)
\bibitem{Rebentrost2} P. Rebentrost and S. Lloyd, ``Quantum computational finance: quantum algorithm for portfolio optimization", arXiv:1811.03975
\bibitem{Kerenidis} I. Kerenidis et al., ``Quantum algorithms for portfolio optimization", Proceedings of the 1st ACM Conference on Advances in Financial Technologies, 147 (2019)
\bibitem{Hodson} M Hodson et al., ``Portfolio rebalancing experiments using the quantum alternating operator ansatz", arXiv:1911.05296
\bibitem{Orus} R. Orus et al. ``Quantum computing for finance: overview and
prospects", Reviews in Physics 4, 100028 (2019)
\bibitem{Egger2} D. J. Egger et al., ``Quantum Computing for Finance: State of the Art and Future Prospects", IEEE Transactions on Quantum Engineering, 1, 1 (2020)
\bibitem{Bouland} A. Bouland, ``Prospects and challenges of quantum finance", arXiv:2011.06492
\bibitem{Tavella} D. Tavella and C. Randall, ``Pricing Financial Instruments: The Finite Difference Method", Wiley (2000)
\bibitem{Duffy} D. J. Duffy, ``Finite Difference Methods in Financial Engineering: A Partial Differential Equation Approach", Wiley (2006)
\bibitem{Berry} D. W. Berry, ``High-order quantum algorithm for solving linear differential equations", Journal of Physics A, 47, 10, 105301 (2014)
\bibitem{Berry2} D. W. Berry et al., ``Quantum algorithm for linear differential equations with exponentially improved dependence on precision", Communications in Mathematical Physics 356, 1057 (2017)
\bibitem{Xin} T. Xin et al., ``A Quantum Algorithm for Solving Linear Differential Equations: Theory and Experiment", Phys. Rev. A 101, 032307 (2020)
\bibitem{Childs} A. M. Childs and J.-P. Liu, ``Quantum Spectral Methods for Differential Equations", Communications in Mathematical Physics, 375, 1427 (2020)
\bibitem{Cao} Y. Cao et al., ``Quantum algorithm and circuit design solving the Poisson equation", New Journal of Physics, 15, 1, 013021 (2013)
\bibitem{Clader} B. D. Clader et al., ``Preconditioned quantum linear
system algorithm", Phys. Rev. Lett. 110, 25, 250504 (2013)
\bibitem{Montanaro} A. Montanaro and S. Pallister, ``Quantum algorithms and the finite element method", Phys. Rev. A, 93, 032324 (2016)
\bibitem{Fillion-Gourdeau} F. Fillion-Gourdeau and E. Lorin, ``Simple digital quantum algorithm for symmetric first-order linear hyperbolic systems", Numerical Algorithms 82, 1009 (2019)
\bibitem{Costa} P. C. S. Costa et al., ``Quantum algorithm for simulating
the wave equation", Phys. Rev. A 99, 012323 (2019)
\bibitem{Wang} S. Wang, ``Quantum Fast Poisson Solver: the algorithm and modular circuit design", Quantum Inf Process 19, 170 (2020)
\bibitem{Childs2} A. M. Childs et al., ``High-precision quantum algorithms for partial differential equations", arXiv:2002.07868
\bibitem{Linden} N. Linden et al., ``Quantum vs. classical algorithms for solving the heat equation", arXiv:2004.06516
\bibitem{HHL} A. W. Harrow et al., ``Quantum algorithm for linear systems of equations",
Phys. Rev. Lett. 103, 150502 (2009)
\bibitem{Ambainis} A. Ambainis, ``Variable time amplitude amplification and quantum algorithms for linear algebra problems", STACS 14, 636 (2012)
\bibitem{Childs3} A. M. Childs et al., ``Quantum algorithm for systems of linear equations with exponentially improved dependence on precision", SIAM Journal on Computing 46, 1920 (2017)
\bibitem{Brassard} G. Brassard et. al., ``Quantum amplitude amplification and estimation", Contemporary Mathematics, 305, 53 (2002)
\bibitem{Suzuki} Y. Suzuki et al., ``Amplitude Estimation without Phase Estimation", Quantum Information Processing, 19, 75 (2020)
\bibitem{Aaronson} S. Aaronson and P. Rall, ``Quantum approximate counting, simplified", Symposium on Simplicity in Algorithms, 24-32, SIAM (2020)
\bibitem{Grinko} D. Grinko et al., ``Iterative quantum amplitude estimation", npj Quantum Inf 7, 52 (2021)
\bibitem{Nakaji} K. Nakaji, ``Faster Amplitude Estimation", Quantum Information \& Computation, 20, 1109 (2020)
\bibitem{Grover} L. Grover and T. Rudolph, ``Creating superpositions that correspond to efficiently integrable probability distributions", arXiv:quant-ph/0208112
\bibitem{Folks} J. L. Folks and R. S. Chhikara. ``The inverse Gaussian distribution and its statistical application - a review", Journal of the Royal Statistical Society: Series B (Methodological) 40, 3, 263 (1978)
\bibitem{Beghin} L. Beghin and E. Orsingher, ``On the maximum of the generalized Brownian bridge", Lithuanian Mathematical Journal 39, 2, 157 (1999)
\bibitem{Gonzalez-Pinto} S. Gonzalez-Pinto et al., ``PDE-W-methods for parabolic problems with mixed derivatives", Numerical Algorithms 78(4), 1 (2018) 
\bibitem{Soderlind} G. S\"{o}derlind, ``The logarithmic norm. History and modern theory", BIT Numerical Mathematics 46(3), 631 (2006)
\bibitem{Horn} R. A. Horn and C. R. Johnson, ``Matrix Analysis", Cambridge University Press (1990)

\end{thebibliography}
\end{document}